\crefname{figure}{Figure}{Figures} 
  \renewcommand*{\AC@hyperlink}[2]{%
    \begingroup
      \hypersetup{hidelinks}%
      \hyperlink{#1}{#2}%
    \endgroup
  }%
\newcommand{\e}{\ensuremath\mathrm{e}} 
\renewcommand{\i}{\ensuremath\mathrm{i}} 
\DeclareMathOperator{\Tr}{Tr} 
\DeclareMathOperator{\rank}{rank} 
\DeclareMathOperator{\CPT}{CPT} 
\DeclareMathOperator{\U}{U}
\newcommand{\id}{\ensuremath\mathrm{id}}
\newcommand{\fro}{\mathrm{F}}
\DeclareMathOperator*{\EE}{\mathbb{E}}
\newcommand{\RR}{\mathbb{R}}
\newcommand{\ZZ}{\mathbb{Z}}
\newcommand{\1}{\mathds{1}} 
\newcommand{\PP}{\mathbb{P}} 
\newcommand{\FF}{\mathbb{F}}
\newcommand{\mc}[1]{\mathcal{#1}}
\renewcommand{\H}{\mc{H}} 
\newcommand{\argdot}{{\,\cdot\,}} 
\renewcommand{\vec}[1]{#1} 
\DeclareMathOperator{\LandauO}{\mc{O}} 
\DeclarePairedDelimiterX{\abs}[1]{\lvert}{\rvert}{%
  \ifblank{#1}{\,\cdot\,}{#1}
}   
\DeclarePairedDelimiterX\norm[1]\lVert\rVert{%
  \ifblank{#1}{\,\cdot\,}{#1}
}   
\newcommand{\lpnorm}[2][p]{\norm{#2}_{\ell_{#1}}}   
\newcommand{\pnorm}[2][p]{\norm{#2}_{#1}} 
\DeclarePairedDelimiterX{\iiiNorm}[1]{\lvert}{\rvert}{%
  \delimsize\lvert\delimsize\lvert#1\delimsize\rvert\delimsize\rvert%
}
\DeclarePairedDelimiterXPP\snorm[1]{}\lVert\rVert{_\infty}{\ifblank{#1}{\,\cdot\,}{#1}}   
\DeclarePairedDelimiterXPP\twonorm[1]{}\lVert\rVert{_2}{\ifblank{#1}{\,\cdot\,}{#1}}   
\DeclarePairedDelimiterXPP\trnorm[1]{}\lVert\rVert{_1}{\ifblank{#1}{\,\cdot\,}{#1}}   
\DeclarePairedDelimiterXPP\fnorm[1]{}\lVert\rVert{_{\fro}}{\ifblank{#1}{\,\cdot\,}{#1}}   
\DeclarePairedDelimiterXPP\dnorm[1]{}\lVert\rVert{_\diamond}{\ifblank{#1}{\,\cdot\,}{#1}}   
\DeclarePairedDelimiterXPP\cbnorm[1]{}\lVert\rVert{_\mathrm{cb}}{\ifblank{#1}{\,\cdot\,}{#1}}   
\DeclarePairedDelimiterXPP\onenorm[1]{}\lVert\rVert{_{1}}{\ifblank{#1}{\,\cdot\,}{#1}}   
\DeclarePairedDelimiterXPP\ddnorm[1]{}\lVert\rVert{_{\diamond\rightarrow \diamond}}{\ifblank{#1}{\,\cdot\,}{#1}}   
\DeclarePairedDelimiterXPP\ssnorm[1]{}\lVert\rVert{_{\infty\rightarrow\infty}}{\ifblank{#1}{\,\cdot\,}{#1}}   
\providecommand\given{}
\newcommand\SetSymbol[1][]{%
  \nonscript\:#1\vert
  \allowbreak
  \nonscript\:
  \mathopen{}}
\DeclarePairedDelimiterX\Set[1]\{\}{%
  \renewcommand\given{\SetSymbol[\delimsize]}
  #1
}
\DeclarePairedDelimiterX\innerp[2]{\langle}{\rangle}{%
  \ifblank{#1}{\,\cdot\,}{#1} , \ifblank{#2}{\,\cdot\,}{#2}%
}
\DeclarePairedDelimiterX\av[1]{\langle}{\rangle}{%
  \ifblank{#1}{\,\cdot\,}{#1}%
}
\DeclarePairedDelimiter{\bra}{\langle}{\vert}
\DeclarePairedDelimiter{\ket}{\vert}{\rangle}
\DeclarePairedDelimiterX\braket[2]{\langle}{\rangle}%
  {#1\kern0.15ex\delimsize\vert\kern0.15ex\mathopen{}#2}
\DeclarePairedDelimiterX\ketbra[2]{\vert}{\vert}%
  {#1\kern0.15ex\delimsize\rangle\delimsize\langle\kern0.15ex\mathopen{}#2}
\DeclarePairedDelimiterX\sandwich[3]{\langle}{\rangle}%
  {#1\,\delimsize\vert\kern0.15ex\mathopen{}#2\kern0.15ex\delimsize\vert\kern0.15ex\mathopen{}#3}
\DeclarePairedDelimiter{\brar}{(}{\vert}
\DeclarePairedDelimiter{\ketr}{\vert}{)}
\DeclarePairedDelimiterX\braketr[2]{(}{)}%
  {#1\kern0.15ex\delimsize\vert\kern0.15ex\mathopen{}#2}
\DeclarePairedDelimiterX\ketbrar[2]{\vert}{\vert}%
  {#1\kern0.15ex\delimsize)\delimsize(\kern0.15ex\mathopen{}#2}
\DeclarePairedDelimiterX\sandwichr[3]{(}{)}%
  {#1\,\delimsize\vert\kern0.15ex\mathopen{}#2\kern0.15ex\delimsize\vert\kern0.15ex\mathopen{}#3}
\NewDocumentCommand\Cl{}{
    \ensuremath{\mathrm{Cl}}%
} 
\NewDocumentCommand\HW{}{
    \ensuremath{\mathcal{P}}%
} 
\newcommand{\normalize}[1]{\check{#1}}
\newcommand{\n}[1]{\normalize{#1}}
\newcommand{\estimator}[1]{\hat{#1}}
\newcommand{\CNOT}{\mathrm{CNOT}}
\newcommand{\CNOTcoset}{\mathcal{N}}
\newcommand{\SWAP}{\mathrm{SWAP}}
\newcommand{\CZ}{\mathrm{CZ}}
\newcommand{\Var}{\mathbb{V}}
\newcommand{\Pg}{\mathsf{P}_n}
\newcommand{\kkbrar}[3]{\ketr{#1 \otimes #2}\brar{#3}}
\newcommand{\ii}{\1}
\newcommand{\one}{\1}
\newcommand{\onen}{\normalize\1}
\newcommand{\Zsf}{\mathsf{Z}}
\DeclareMathOperator{\supp}{supp}
\DeclarePairedDelimiterX\stnorm[1]{\lVert}{\rVert_{\mathrm{st}}}{%
  \ifblank{#1}{\,\cdot\,}{#1}
}   
\newcommand{\m}[1]{\mathcal #1}
\newcommand{\bias}[2]{\abs{\EE[\hat #1] - \av{#2}}}
\newcommand{\sn}{\normalize{\sigma}} 
\newcommand{\s}{\sigma} 
\DeclarePairedDelimiterX\symplecticp[2]{[}{]}{%
  \ifblank{#1}{\,\cdot\,}{#1} , \ifblank{#2}{\,\cdot\,}{#2}%
}
\newcommand{\eff}{\mathrm{eff}}
\newcommand\RS{\mathrm{RS}}
\newtheorem{thm}{Theorem}
\newtheorem{lem}[thm]{Lemma}
\newtheorem{prop}[thm]{Proposition}
\newtheorem{obs}[thm]{Observation}
\definecolor{martin}{rgb}{0,.4,1}
\definecolor{rb}{rgb}{.1,.7,0}
\newcommand{\rb}[1]{{\color{rb}#1}}
\definecolor{markus}{HTML}{006600}
\newcommand{\hhu}{%
  Heinrich Heine University D{\"u}sseldorf, 
  Faculty of Mathematics and Natural Sciences,
  D{\"u}sseldorf, 
  Germany
}
\newcommand{\tuhh}{%
    Hamburg University of Technology, 
    Institute for Quantum Inspired and Quantum Optimization, 
    Hamburg,   
    Germany
}
\newcommand{\tii}{%
Quantum Research Center, Technology Innovation Institute, Abu Dhabi, UAE
}
\begin{document}
\title{Stability of classical shadows under gate-dependent noise}

\author{Raphael Brieger}
\affiliation{\hhu}
\affiliation{\tuhh}
\author{Markus Heinrich}\thanks{Present address: Institute for Theoretical Physics, University of Cologne, Cologne, Germany}
\affiliation{\hhu} 
\author{Ingo Roth}
\affiliation{\tii} 
\author{Martin Kliesch}
\affiliation{\tuhh}

\begin{abstract}
Expectation values of observables are routinely estimated using so-called classical shadows---the outcomes of 
randomized bases measurements on a repeatedly prepared quantum state. 
In order to trust the accuracy of shadow estimation in practice, it is crucial to understand the behavior of the estimators under realistic noise.
In this work, we prove that any shadow estimation protocol involving Clifford unitaries is stable under \emph{gate-dependent} noise for observables with bounded \emph{stabilizer norm}---originally introduced in the context of simulating Clifford circuits. 
In contrast, we demonstrate with concrete examples that estimation of ``magic'' observables can lead to highly misleading results in the presence of miscalibration errors and a worst case bias scaling exponentially in the system size.
We further find that so-called \emph{robust shadows}, aiming at mitigating noise, can introduce a large bias in the presence of gate-dependent noise compared to unmitigated classical shadows.
Nevertheless, we guarantee the functioning of robust shadows for a more general noise setting than in previous works. 
On a technical level, we identify average noise channels that affect shadow estimators and allow for a more fine-grained control of noise-induced biases.
\end{abstract}

\maketitle
 \hypersetup{
       pdftitle = {Stability of classical shadows under gate-dependent noise},
       pdfauthor = {Raphael Brieger, Markus Heinrich, Ingo Roth, Martin Kliesch}
       pdfsubject = {Quantum computing},
       pdfkeywords = {readout, frame, quantum, computer, robust, shadow, estimation, stable, gate, dependent, noise, 
       state, preparation, measurements, errors, SPAM, mitigation, stabilizer, norm, bias, Clifford, variance, sample, complexity, stable, instability, circuits, magic, implementation
       }
      }

Efficient estimation of observables is crucial for quantum experiments and devices. 
Classical shadows \cite{Huang2020Predicting} utilize measurements in randomized bases to perform many relevant estimation tasks on states that are repeatedly prepared in an experiment. 
The approach is highly flexible and has many applications \cite{elbenMixedstateEntanglementLocal2020,zhang2021ExperimentalQuantumState,struchalinExperimentalEstimationQuantum2021,Huang21ProvablyEfficient,Elben22RandomizedMeasurementToolbox,huggins2022unbiasing}.

The central step in the estimation is inverting ``the overall measurement process'' classically. 
This inversion can be calculated analytically for measurement bases that are uniformly random, either local or global, Clifford rotations of the computational bases. 
For these cases,  
tight sampling complexity bounds in terms of certain norms of the observables have been derived \cite{Huang2020Predicting}. 
Most extensions of this paradigm still involve Clifford gates, such as random Clifford circuits \cite{akhtar_scalable_2023,bertoniShallowShadowsExpectation2022,Arienzo22Closed-formAnalytic} or Clifford matchgates \cite{wan2022matchgate}.
In these cases, the inversion is performed using a combination of analytical and numerical techniques.

These calculations crucially rely on the assumption that the unitary gates that perform the basis rotations are perfectly implemented on the quantum device---an assumption that inevitably needs to be relaxed when using classical shadows for precise estimation in practice.
This has motivated several works \cite{Chen21RobustShadowEstimation,KohGrewal:2022:ClassicalShadows,Berg:2020ibi,2023arXiv230504956J,Vitale:2023few,2023arXiv231003071Z,2023arXiv231012726W}
studying noise robust estimators for classical shadows.
Using a restricted noise model, it has been shown that 
the effect of noise on the estimator can be either estimated independently, e.g.\ by a separate calibration experiment \cite{Chen21RobustShadowEstimation} or inferred using symmetries of the prepared state \cite{2023arXiv231003071Z}. 
Once the effect is known, it can then be mitigated in postprocessing.
The derivations of these robust classical shadows assume that the noise in the system is described by the \emph{same} channel acting directly before the measurement in each round. 
Although this restricted noise model is well suited to capture readout noise affecting the measurement \rb{\cite{Vitale:2023few}},
it is difficult to justify such a model for gate noise in realistic experimental setups.
A better-motivated gate-dependent noise model can however significantly complicate the mitigation \cite{Brieger21CompressiveGateSet}. 
It is an open question of how stable classical shadows and their robust extensions are under gate-dependent noise. 
As a matter of concern, the inversion of the effective measurement process typically involves factors that scale exponentially with the system size or locality of the observable.
Thus, even small errors in the gates could induce 
large biases in the estimators. 

In this work, we give explicit analytical and numerical examples where this is indeed the case. 
For instance, measuring a single-qubit state fidelity with over- and underrotated gates can lead to state fidelity errors that are an order of magnitude larger than the average error rates of the gates.
This raises serious concerns about the accuracy of shadow estimation in theory and practice.
On the bright side, we prove that shadow estimation with Clifford circuits is intrinsically stable under gate-dependent noise for observables with bounded \emph{stabilizer norm}.
This includes  stabilizer states, Pauli observables, and large classes of linear combinations thereof.
The \emph{stabilizer norm} \cite{campbell_catalysis_2011}, also known as \emph{$\frac12$-stabilizer R\'enyi entropy} \cite{leone_stabilizer_2022}, is a well-known resource measure in the resource theory of magic states and can be used to bound the runtime of classical stabilizer-based simulation methods \cite{seddon_quantifying_2021}.
For uniform sampling from the local and global Clifford group, we further show that the sampling complexity of shadow estimation is also stable under gate-dependent noise for the same class of observables.

For 
robust classical shadows 
\cite{Chen21RobustShadowEstimation}, we find that the average noise channels can conspire to cause 
an exponentially large bias in the robust estimator, while the bias in the standard estimator is zero.
Taking a closer look at the necessary structure for this situation to appear, we show the stability of the robust shadow estimator 
for a strictly more general gate-dependent noise model than in Ref.~\cite{Chen21RobustShadowEstimation}, which we call \emph{isotropic Pauli noise}. 

\emph{Stability of shadow estimation.}---The goal of a general shadow estimation protocol is to estimate expectation values of observables in the same unknown $n$-qubit quantum state $\rho$.
To this end, the protocol applies a randomly drawn unitary $g$ from a set $G$ with probability $p(g)$ to an unknown quantum state $\rho$ and measures in the computational basis measurement, resulting in some output $x\in\FF_2^n$. 
For an observable $O$, one can then evaluate a function $\estimator{o}(g,x)$ defining an unbiased estimator for the expectation value $\EE[\estimator{o}] = \Tr(O\rho)$.
For this to work for any observable $O$, the operators $\{ g \ketbra{x}{x} g^\dagger \}_{g,x}$ have to form an informationally complete \ac{POVM}.

To give more details, we first introduce some notation.
We define $E_x \coloneqq \ketbra{x}{x}$ and use $\omega(g)$ to denote the unitary channel $\omega(g): A \mapsto g A g^{\dagger}$.
Round brackets are used to denote inner and outer products of linear operators in analogy to the usual Dirac notation.
In particular, $\braketr{A}{B} = \Tr(A^\dagger B)$ denotes the Hilbert Schmidt inner product and $\ketbrar{A}{B}$ is the superoperator $C \mapsto \braketr{B}{C} A$.
With this notation, we define the measure-and-prepare channel $M \coloneqq \sum_{x \in \FF_2^n}\ketbrar{E_x}{E_x}$ in the computational basis.
Being an informationally complete \ac{POVM}, the operators $\{ g E_x g^\dagger \}_{g,x}$ are a frame, i.e.~a spanning set for the vector space of linear operators. Hence the associated \emph{frame operator}
\begin{equation}
S \coloneqq \EE_{g \sim p}\left[\omega(g)^{\dagger} M \omega(g)\right]
\end{equation}
is invertible.
With this notation, we define the estimator $\estimator{o}(g,x) \coloneqq \sandwichr{O}{S^{-1}\omega(g)^\dagger}{E_x}$ of a given observable $O$, and 
a straightforward calculation shows that, indeed, 
$\EE[\estimator{o}] = \sandwichr{O}{S^{-1}S}{\rho} = \braketr{O}{\rho} = \Tr(O\rho)$. 

The frame operator $S$ can often be analytically calculated and inverted for uniform sampling from certain subgroups $G\subset\U(d)$, where $d=2^n$ \cite{Huang2020Predicting}.
A prominent example for such a subgroup is the \emph{Clifford group} $\Cl_n$, defined as the subgroup of $\U(d)$ that is generated by the Hadamard gate, the phase gate, and the controlled-NOT gate.

Gates in an actual experiment, however, suffer from noise and imperfections.
A fairly general and common noise model replaces $\omega(g)$ by its noisy implementation $\phi(g) = \omega(g) \Lambda(g)$, where $\Lambda(g)$ is an arbitrary noise channel that depends on $g$.
Note that introducing an additional noise channel on the left of $\omega$ is equivalent to our model.
The existence of such an implementation map $\phi$ requires the noise to be \emph{Markovian} and \emph{time-stationary}, but it can be otherwise arbitrary.
The \emph{noisy frame operator} of a shadow estimation protocol is then given by 
\begin{equation}\label{eq: Definition noisy frame operator}
\tilde{S} \coloneqq \EE_{g \sim p}\m [\omega(g)^{\dagger} M \m \omega(g) \Lambda(g)]\,.
\end{equation}

In the presence of noise, the standard shadow estimator is biased.
This can be readily seen for a traceless observable $O_0$ and uniform sampling from the Clifford group $\Cl_n$, for which $S^{-1}(O_0) = (d+1)O_0$.
The expected value then reads:
$
\EE[ \estimator o_0 ]
 = \brar{O_0}S^{-1}\tilde S\ketr{\rho}
 = \braketr{O_0}{\rho} + (d+1) \sandwichr{O_0}{S-\tilde S}{\rho}.
$
Because of the exponentially large factor $d+1$ applied in postprocessing, one runs the risk of dramatically amplifying errors. 
In particular, a first straightforward attempt at controlling the noise-induced bias yields a bound of the following form (c.f.~\Cref{app: Noise averaging and bounds}): 

\begin{equation}\label{eq:naive-bound}
\bias{o}{O} 
\leq (d+1) \max_{g \in G} \dnorm{\id - \Lambda(g)}\, .
\end{equation}
Here, we quantify the error of the implementation by the maximum diamond distance of the noise channel to the identity channel over all gates. 
\Cref{eq:naive-bound} is the first example of a bound of the form
\begin{equation*}
  \text{bias} \leq \kappa(d) \times \text{implementation error}.
\end{equation*}
We say that the estimation is \emph{stable} if scaling function $\kappa$ is constant, i.e.\ $\kappa \in \LandauO(1)$. 
However, \cref{eq:naive-bound} suggests that shadow estimation can in fact be \emph{unstable}. 
Indeed, we can give the following example: 

\begin{restatable}{prop}{propsaturation}\label{prop: 1-norm bound saturation}
  Let $O = (\ketbra{H}{H})^{\otimes n}$ with the magic state $\ket{H} = \frac{1}{\sqrt{2}}(\ket{0} + \e^{\i \pi /4}\ket{1})$ and consider shadow estimation with local Clifford unitaries. There exists a state $\rho$ and implementation map $\phi_{\epsilon}(g) = (1-\epsilon)\omega(g) + \epsilon \omega(g) \Lambda(g)$ such that $\bias oO = \kappa\, \epsilon$ with $\kappa \in \Omega(d^{1/4})$.
\end{restatable}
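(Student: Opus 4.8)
The plan is to exploit the fact that, for local Clifford shadows, the inverse frame operator factorizes as $S^{-1} = (S_1^{-1})^{\otimes n}$, and to match this with a product state $\rho = \sigma^{\otimes n}$ and product gate-dependent noise $\Lambda(g) = \bigotimes_i \Lambda_1(g_i)$ acting identically on each qubit. The convex form of the implementation map makes the noisy frame operator affine in $\epsilon$: $\tilde S = (1-\epsilon) S + \epsilon S'$ with $S' = \EE_g[\omega(g)^\dagger M \omega(g)\Lambda(g)]$. Consequently $\bias oO = \epsilon\,\abs{\sandwichr{O}{S^{-1}S'}{\rho} - \braketr{O}{\rho}}$ is exactly linear in $\epsilon$, which already produces the claimed form $\bias oO = \kappa\,\epsilon$. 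Using $S' = (S_1')^{\otimes n}$ and $O = (\ketbra{H}{H})^{\otimes n}$ this collapses to a single-qubit expression $\kappa = \abs{a^n - b^n}$ with $a = \sandwichr{\ketbra{H}{H}}{S_1^{-1}S_1'}{\sigma}$ and $b = \Tr(\ketbra{H}{H}\sigma)$. It then suffices to make $\abs{a} > 1$: any per-qubit amplification factor $\abs{a} > 2^{1/4}$ forces $\abs{a}^n > d^{1/4}$, and since $b \le 1 < a$ the difference $a^n - b^n$ inherits this growth.

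The single-qubit analysis is where the real work lies. I would use the known single-qubit frame operator $S_1 = \mathcal{D}_{1/3}$, the depolarizing channel fixing $\one$ and shrinking each Pauli by $1/3$, so that $S_1^{-1}$ amplifies the traceless components by a factor of $3$. Because $\ketbra{H}{H} = \tfrac12\bigl(\one + \tfrac1{\sqrt2}X + \tfrac1{\sqrt2}Y\bigr)$ carries large weight on $X$ and $Y$, expanding $a$ in the Pauli basis yields $a = \tfrac12 + \tfrac{3}{2\sqrt2}(\alpha_X + \alpha_Y)$, where $\alpha_X,\alpha_Y$ are the $X$- and $Y$-components of $S_1'(\sigma)$ and the $\alpha_0 = 1$ term reflects trace preservation. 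Writing $S_1' = \tfrac13\sum_{P}\mathcal{Z}_P\,\Lambda_1(g_P)$ as a sum over the three measurement axes (with $\mathcal{Z}_P$ the dephasing along $P$) shows that $\alpha_X$ is fed only by the $X$-axis branch and $\alpha_Y$ only by the $Y$-axis branch. The hard part, conceptually, is recognizing why gate dependence is essential: for gate-independent noise one has $S_1^{-1}S_1' = \Lambda_1$, so $a = \Tr(\ketbra{H}{H}\,\Lambda_1(\sigma)) \le 1$ and no blow-up occurs. The amplification can only appear because the two axis-branches carry different noise and may therefore rotate $\sigma$ in mutually incompatible directions that a single global channel could never realize simultaneously.

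To finish, I would choose the $X$-branch noise to be a coherent rotation sending the Bloch vector of a pure $\sigma$ onto the $+X$ axis and the $Y$-branch noise one sending it onto $+Y$; since $\abs{\Tr(P\,\Lambda_1(g_P)(\sigma))}\le 1$ this saturates $\alpha_X = \alpha_Y = 1/3$ and gives $a = \tfrac12 + \tfrac1{\sqrt2} = \tfrac{1+\sqrt2}{2} \approx 1.207 > 2^{1/4}$. Taking $\sigma = \ketbra{0}{0}$ makes $b = \Tr(\ketbra{H}{H}\,\ketbra{0}{0}) = \tfrac12 < a$, so $\kappa = a^n - b^n \ge a^n - 1 \ge 2^{n/4} - 1 = \Omega(d^{1/4})$. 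The remaining obstacle is bookkeeping: one must group the single-qubit Clifford group by measurement axis, check that $\omega(g)^\dagger M \omega(g)$ depends only on this axis (not on the sign or the stabilizer rotation within it), and verify that assigning a fixed noise channel to each axis gives a legitimate gate-dependent implementation map $\phi_\epsilon$. With these checks in place, the construction delivers a concrete $\rho$ and $\Lambda$ witnessing the instability.
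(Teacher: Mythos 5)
Your construction is correct and is essentially the paper's own proof in a different guise: where you rotate the state onto each measurement axis, the paper equivalently chooses $\Lambda(g)=\omega(g)^{\dagger}$ (composed with a bit flip when $\varphi_Z(g)=1$) so that every shot is effectively a $Z$-basis measurement on $\ket{0}^{\otimes n}$, and both choices yield the same affine-in-$\epsilon$ decomposition, the same effective map on $E_0$, the same per-qubit factor $a=(1+\sqrt{2})/2=\stnorm{\ketbra{H}{H}}$, and the same $\kappa=\stnorm{O}-1/d\geq 2^{n/4}-1/d$. The bookkeeping you flag (grouping $\Cl_1$ by measurement axis, sign cancellation in $\omega(g)^{\dagger}\ketbrar{\normalize Z}{\normalize Z}\omega(g)$, legitimacy of axis-dependent channels) all goes through exactly as you anticipate.
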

We prove the proposition in \cref{app: Noise averaging and bounds} by an explicit construction.
The construction uses noise channels $\Lambda(g)$ that are coherently undoing the basis change of $\omega(g)$.

This result brings us to the central question of our work: 
\emph{are there 
any classes of stable shadow estimation settings?}
Perhaps surprisingly, we can answer this question positively for large classes of observables in the case that all unitaries are taken from the Clifford group.

In this setting, a careful analysis of the noisy frame operator 
\eqref{eq: Definition noisy frame operator}
allows us to improve the naive estimates significantly.
To this end, it is convenient to work in the orthogonal operator basis given by the \emph{Pauli operators} $\s_a = \s_{a_1}\otimes\dots\otimes\s_{a_n}$, i.e.~$\braketr{\s_a}{\s_{a'}} = d \, \delta_{a,a'}$.
We choose to label them by binary vectors $a \in \FF_2^{2n}$ with the convention $\s_{00} = \1$, $\s_{01} = X$, $\s_{11} = Y$, $\s_{10} = Z$. 
The following technical lemma serves as a basis for our main results.

\begin{restatable}{lem}{noiseavg}\label{lem: Noise averaging}
Suppose $G \subset \Cl_n$. 
The noisy frame operator 
\eqref{eq: Definition noisy frame operator}
takes the form
\begin{align}\label{eq: Avg_noise}
  \tilde{S} = \frac{1}{d} \sum_{a \in \FF_2^{2n}} s_a \ketbrar{\s_a}{\s_a} \bar{\Lambda}_a ,
\end{align}
where $s_a\in (0,1]$ are the eigenvalues of the noise-free frame operator $S$ and $\bar \Lambda_a$ are quantum channels depending on $\Lambda$. 
Furthermore, if the $\Lambda(g)$ are Pauli noise channels, then $\bar\Lambda_a \ketr{\s_a} = \bar\lambda_a \ketr{\s_a}$,
where $\bar \lambda_a \in [-1,1]$.
\end{restatable}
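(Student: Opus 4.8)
The plan is to expand the measure-and-prepare channel $M$ in the Pauli basis, exploit the covariance of the Clifford group under conjugation to turn every term into a Pauli rank-one projector, and then collect contributions according to the eigenspace they land in. First I would rewrite $M$ over the diagonal ($Z$-type) Paulis $\{\s_b\}_{b\in D}$, where $D\subset\FF_2^{2n}$ denotes the set of labels of Paulis that are diagonal in the computational basis. Since each projector expands as $E_x=\frac1d\sum_{b\in D}(-1)^{\langle b,x\rangle}\s_b$, summing the terms $\ketbrar{E_x}{E_x}$ over $x\in\FF_2^n$ and using orthogonality $\sum_x(-1)^{\langle b+b',x\rangle}=d\,\delta_{b,b'}$ collapses the double sum to
\begin{equation*}
M=\frac1d\sum_{b\in D}\ketbrar{\s_b}{\s_b},
\end{equation*}
so that $M$ is, up to the factor $d$ stemming from $\braketr{\s_a}{\s_a}=d$, the projection onto the diagonal Pauli subspace.

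Next I would invoke that conjugation by a Clifford $g$ permutes the Paulis up to sign, $g\s_a g^\dagger=\pm\,\s_{\Phi(g)a}$ for a symplectic map $\Phi(g)$ on $\FF_2^{2n}$. The crucial observation is that in each rank-one term $\omega(g)^\dagger\ketbrar{\s_b}{\s_b}\omega(g)=\ketbrar{\omega(g)^\dagger\s_b}{\omega(g)^\dagger\s_b}$ the sign occurs in both the ket and the bra and hence cancels, leaving $\ketbrar{\s_{\Phi(g)^{-1}b}}{\s_{\Phi(g)^{-1}b}}$. Substituting $M$ into \eqref{eq: Definition noisy frame operator}, reindexing the sum by $a=\Phi(g)^{-1}b$ (so that $b\in D$ becomes the condition $\Phi(g)a\in D$), and exchanging sum and expectation yields
\begin{equation*}
\tilde S=\frac1d\sum_{a\in\FF_2^{2n}}\ketbrar{\s_a}{\s_a}\,\EE_{g\sim p}\!\left[\chi_a(g)\,\Lambda(g)\right],
\end{equation*}
where $\chi_a(g)$ is the indicator of the event $\Phi(g)a\in D$, i.e.\ that $g$ maps $\s_a$ into the diagonal subspace.

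It then remains to identify the two factors. Setting $\Lambda(g)=\id$ recovers $S$, and a direct check using $\ketbrar{\s_a}{\s_a}\ketr{\s_a}=d\,\ketr{\s_a}$ gives $S\ketr{\s_a}=s_a\ketr{\s_a}$ with $s_a=\PP_{g\sim p}[\Phi(g)a\in D]\in[0,1]$; thus the $s_a$ are exactly the eigenvalues of the noise-free frame operator, and invertibility of $S$ forces $s_a\in(0,1]$. Normalizing the operator-valued average by $s_a$ defines $\bar\Lambda_a\coloneqq\frac1{s_a}\,\EE_{g\sim p}[\chi_a(g)\,\Lambda(g)]$, which is a convex combination of the CPTP maps $\Lambda(g)$ with weights $p(g)\chi_a(g)/s_a$ summing to one, hence itself a quantum channel, and delivers the claimed form $\tilde S=\frac1d\sum_a s_a\ketbrar{\s_a}{\s_a}\bar\Lambda_a$. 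For the Pauli case I would use that a convex combination of Pauli channels is again a Pauli channel, so $\bar\Lambda_a$ is diagonal in the Pauli basis and in particular $\bar\Lambda_a\ketr{\s_a}=\bar\lambda_a\ketr{\s_a}$; its eigenvalue $\bar\lambda_a$ is the corresponding weighted average of the individual Pauli eigenvalues of the $\Lambda(g)$, each lying in $[-1,1]$, whence $\bar\lambda_a\in[-1,1]$.

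I expect the main obstacle to be the bookkeeping around the Clifford action rather than any deep estimate: specifically, establishing cleanly that the signs cancel in the rank-one terms so that only the symplectic permutation $\Phi(g)$ survives, tracking the reindexing $a=\Phi(g)^{-1}b$ correctly, and keeping the normalization factors of $d$ coming from the non-unit Pauli norm $\braketr{\s_a}{\s_a}=d$ consistent throughout. Everything else—the expansion of $M$, the extraction of $s_a$ as eigenvalues of $S$, and the convexity arguments making $\bar\Lambda_a$ a (Pauli) channel—is routine once this structure is in place.
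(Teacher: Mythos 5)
Your proposal is correct and follows essentially the same route as the paper's proof: both expand $M$ over the diagonal Paulis, use that the Clifford sign cancels in each rank-one term so only the symplectic permutation survives, and then regroup the gate-dependent noise channels according to which Pauli projector they attach to — your indicator $\chi_a(g)$ of the event $\Phi(g)a\in D$ is exactly the paper's grouping by the preimages $\Xi_z^{-1}(\s_a)$, and the identification of $s_a$ as eigenvalues of $S$, the convexity argument for $\bar\Lambda_a$, and the Pauli-channel averaging for $\bar\lambda_a\in[-1,1]$ all coincide.
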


A proof of the lemma is given in \Cref{app: Noise averaging and bounds}.
The channels $\bar \Lambda_a$ are averages of gate-dependent noise channels $\Lambda(g)$ for $g$ sampled from specific subsets of $G$ and approach the identity for weak noise. 
Their form for the global and local Clifford groups is further discussed in \Cref{app: Clifford average structure}. 
In analogy to the channel averages $\bar \Lambda_a$, the parameters $\bar \lambda_a$ are averaged Pauli eigenvalues of individual Pauli noise channels $\Lambda(g)$. 
The proofs builds on the simple observation that in Eq.~\eqref{eq: Definition noisy frame operator}, each term $\omega^{\dagger}(g)M \omega(g)$ is always a sum over Pauli projectors $d^{-1}\ketbrar{\s_a}{\s_a}$, and noise channels $\Lambda(g)$ associated to the same projector can be grouped together. 

Our main stability result uses the \emph{stabilizer norm} of the observable $O$ \cite{campbell_catalysis_2011}, defined as
\begin{align}
  \stnorm{O} \coloneqq \frac{1}{d} \sum_{a \in \FF_2^{2n}}  \abs{ \braketr{\s_a}{O}}\, ,
  \label{eq:def-stab-norm}
\end{align}
which is proportional to the $\ell_1$-norm of an operator in the Pauli basis.
We will also use the \emph{Hilbert-Schmidt norm} $\twonorm{O} \coloneqq \sqrt{\braketr{O}{O}}$ of an operator $O$.  
We now provide a proof that shadow estimation of observables with constant stabilizer norm is stable against gate-dependent noise.

\begin{restatable}{thm}{mainthm}\label{thm: Error bound}
  Suppose $G \subset \Cl_n$.  
  The estimation bias 
  is bounded by
  \begin{equation}\label{eq:result1}
  \begin{split}\bias{o}{O} 
  &\leq \stnorm{O} \max_{a \in \FF_2^{2n}} \dnorm{\id - \bar \Lambda_a} \\
  &\leq \stnorm{O} \max_{g \in \Cl_n} \dnorm{\id - \Lambda(g)}
  \end{split}
  \end{equation} for arbitrary gate-dependent noise and 
  \begin{equation}\nonumber
    \bias{o}{O} \leq \min\left\{\twonorm{O},\stnorm{O}\right\} \max_{a \in \FF_2^{2n}}\abs{1- \bar \lambda_a}
  \end{equation}
    for gate-dependent Pauli noise.
\end{restatable}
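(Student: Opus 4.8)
The plan is to express the bias entirely through the deviation of the rescaled noisy frame operator $S^{-1}\tilde S$ from the identity, and then exploit the fact that \Cref{lem: Noise averaging} renders $\tilde S$ diagonal in the Pauli eigenbasis of $S$ with the \emph{same} eigenvalues $s_a$. First I would write, using $\EE[\hat o]=\brar{O}S^{-1}\tilde S\ketr{\rho}$ together with $\braketr{O}{\rho}=\brar{O}S^{-1}S\ketr{\rho}$,
\[ \bias oO = \abs{\brar{O}S^{-1}(\tilde S-S)\ketr{\rho}}. \]
The crucial computation is that the potentially large inverse eigenvalues $s_a^{-1}$ cancel. Writing $S=\frac1d\sum_a s_a\ketbrar{\s_a}{\s_a}$ (the $\Lambda(g)=\id$ case of \eqref{eq: Avg_noise}) and using orthogonality $\braketr{\s_a}{\s_{a'}}=d\,\delta_{a,a'}$, one obtains $S^{-1}=\frac1d\sum_a s_a^{-1}\ketbrar{\s_a}{\s_a}$ and hence
\[ S^{-1}\tilde S = \frac1d\sum_{a\in\FF_2^{2n}}\ketbrar{\s_a}{\s_a}\bar\Lambda_a, \]
in which every $s_a$ has dropped out. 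Substituting then yields $\bias oO = \abs{\frac1d\sum_{a}\braketr{O}{\s_a}\,\sandwichr{\s_a}{\bar\Lambda_a-\id}{\rho}}$.

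For the general (arbitrary-noise) bound I would apply the triangle inequality term by term. Since $\snorm{\s_a}=1$, Hölder gives $\abs{\sandwichr{\s_a}{\bar\Lambda_a-\id}{\rho}}\le\trnorm{(\bar\Lambda_a-\id)(\rho)}$, and because $\rho$ is a state and the diamond norm dominates the induced trace norm, this is at most $\dnorm{\id-\bar\Lambda_a}$. Factoring out $\max_a\dnorm{\id-\bar\Lambda_a}$ and recognizing $\frac1d\sum_a\abs{\braketr{O}{\s_a}}=\stnorm{O}$ gives the first line of \eqref{eq:result1}. The second line then follows because, by the proof of \Cref{lem: Noise averaging} (see also \Cref{app: Clifford average structure}), each $\bar\Lambda_a$ is a convex mixture of the channels $\Lambda(g)$, so convexity of the diamond norm gives $\dnorm{\id-\bar\Lambda_a}\le\max_{g}\dnorm{\id-\Lambda(g)}$.

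For Pauli noise I would instead invoke the eigen-relation $\bar\Lambda_a\ketr{\s_a}=\bar\lambda_a\ketr{\s_a}$. As Pauli channels and their averages are self-adjoint superoperators, this transposes to $\brar{\s_a}\bar\Lambda_a=\bar\lambda_a\brar{\s_a}$, so that $\sandwichr{\s_a}{\bar\Lambda_a}{\rho}=\bar\lambda_a\braketr{\s_a}{\rho}$ and the bias collapses to $\abs{\frac1d\sum_a\braketr{O}{\s_a}(\bar\lambda_a-1)\braketr{\s_a}{\rho}}$. Bounding $\abs{\bar\lambda_a-1}\le\max_a\abs{1-\bar\lambda_a}$, I would finish in two ways: using $\abs{\braketr{\s_a}{\rho}}\le1$ with the triangle inequality produces the $\stnorm{O}$ prefactor, whereas Cauchy--Schwarz in $a$ together with $\frac1d\sum_a\abs{\braketr{O}{\s_a}}^2=\twonorm{O}^2$ and $\twonorm{\rho}\le1$ produces the $\twonorm{O}$ prefactor; taking the smaller of the two gives the claimed $\min\{\twonorm{O},\stnorm{O}\}$.

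The main obstacle---and the whole point of the argument---is the cancellation of $s_a^{-1}$ in $S^{-1}\tilde S$, which removes the exponential $(d+1)$ of the naive bound \eqref{eq:naive-bound} and rests entirely on $\tilde S$ inheriting the eigenprojectors and eigenvalues of $S$ as guaranteed by \Cref{lem: Noise averaging}. The only other point requiring care is justifying the self-adjointness of the averaged channels $\bar\Lambda_a$ that is used to move $\bar\lambda_a$ from the right slot onto the bra $\brar{\s_a}$.
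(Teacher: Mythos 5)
Your proposal is correct and follows essentially the same route as the paper: cancel the $s_a$ in $S^{-1}\tilde S$ via \Cref{lem: Noise averaging}, expand in the Pauli basis, apply Hölder plus the diamond-norm domination of the induced trace norm for the general bound, and use the averaged Pauli eigenvalues with either the triangle inequality or Cauchy--Schwarz for the Pauli-noise bound. The only cosmetic differences are that the paper works with normalized Paulis and obtains the left eigenrelation $\brar{\sn_a}\bar\Lambda_a=\bar\lambda_a\brar{\sn_a}$ directly in the proof of the lemma rather than via self-adjointness, and it leaves the convexity step for the second line of \eqref{eq:result1} implicit where you spell it out.
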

\begin{proof}
We obtain 
$S^{-1}\tilde S = d^{-1} \sum_{a}\ketbrar{\s_a}{\s_a} \bar \Lambda_a$ from \cref{lem: Noise averaging}.
Expanding
$O$ in the Pauli basis,
the bias $|\langle O \rangle - \brar{O}S^{-1}\tilde S \ketr{\rho}| = |\brar{O} (\id - S^{-1}\tilde S) \ketr{\rho}|$ becomes:
\begin{equation*}
  \begin{split}
  |\brar{O} (\id - S^{-1}\tilde S) \ketr{\rho}| 
  &= \frac{1}{d} \Big|\sum_a \braketr{O}{\s_a} \brar{\s_a}\id - \bar \Lambda_a\ketr{\rho}\Big| \\
  &\leq \max_a \dnorm{\id - \bar \Lambda_a} \times \frac{1}{d} \sum_a |\braketr{O}{\s_a}|\, , 
  \end{split}
\end{equation*}
  where we used 
  the matrix Hölder inequality, 
  $|\brar{\s_a}\id - \bar \Lambda_a\ketr{\rho}| \leq \dnorm{\id - \bar \Lambda_a}$.
  The result \eqref{eq:result1} then follows from the definition \eqref{eq:def-stab-norm} of the stabilizer norm.
  For Pauli noise, \Cref{lem: Noise averaging} implies that $S^{-1}\tilde S = d^{-1}\sum_{a} \ketbrar{\s_a}{\s_a} \bar \lambda_a$. 
  The bias then becomes 
  $
  \frac{1}{d} |\brar{O} \sum_{a} \ketbrar{\s_a}{\s_a}(1 - \bar \lambda_a) \ketr{\rho}| 
  \leq \max_a |1-\bar \lambda_a| \frac{1}{d} \sum_{a \neq 0} |\braketr{O}{\s_a} \braketr{\s_a}{\rho}|.
  $
  We can bound the last term by either 
  $\max_a|1-\bar \lambda_a| \twonorm{O}$
 via the Cauchy-Schwarz inequality or using $|\braketr{\s_a}{\rho}| \leq 1$,
 by $d^{-1} \sum_{a \neq 0} \braketr{O}{\s_a} = \stnorm{O}$.
\end{proof}

\Cref{thm: Error bound} holds for any informationally complete shadow estimation protocol based on Clifford gates, including uniform sampling from the global or local Clifford group, as well as for alternative proposals such as brickwork circuits \cite{Arienzo22Closed-formAnalytic,bertoniShallowShadowsExpectation2022}.
In comparison to \cref{eq:naive-bound}, the scaling factor $d+1$ is replaced by the stabilizer norm $\stnorm{O}$.
The latter is a magic measure \cite{campbell_catalysis_2011}, equivalent to the $\frac12$-stabilizer R\'enyi entropy \cite{leone_stabilizer_2022}, and can be understood as a quantification of the ``nonstabilizernes'' of the observable $O$.
In particular, $\stnorm{O} = \LandauO(1)$ for many interesting examples such as Pauli observables or stabilizer states.
Furthermore, $\stnorm{O}$ is also well-behaved for nonstabilizer observables, as long as their Pauli support (and coefficients) is not too large.
For instance, consider a $k$-local Hamiltonian $\sum_{e\in E}h_{e}$ on a hypergraph $(V,E)$ with maximum degree $D$, i.e.~each $h_{e} = \sum_{a} h_{e}^a \s_a$ is a linear combination of at most $3^k$ Pauli matrices supported on the hyperedge $e$. 
Thus, the number of local terms is $|E| \leq |V|D = nD$, and assuming $|h_{e}^a| \leq 1$, the stabilizer norm satisfies $\stnorm{H} \leq nD 3^k$.
Finally, the stabilizer norm is 
multiplicative under tensor products
and, hence, simple to compute for product observables, but may be harder to evaluate for more general $O$.
To this end, it might be worth mentioning that $\stnorm{O}$ can be upper bounded by higher stabilizer R\'enyi entropies \cite{leone_stabilizer_2022} and the so-called robustness of magic \cite{2017PhRvL.118i0501H}.

Interestingly, the classical postprocessing of shadow estimation involves the evaluation of $\estimator{o}(g,x) = \sandwichr{O}{S^{-1}\omega(g)^\dagger}{E_x}$.
This evaluation can be a computationally hard problem,
even for uniform sampling from the (local or global) Clifford group where the difficult bit reduces to evaluating expectation values on stabilizer states.
This computational task is well studied and can be solved in classical runtime $\LandauO(\stnorm{O}^2)$ \cite{Rall2019SimulationPaulPropagation,seddon_quantifying_2021}.
Therefore, a compelling observation is that \emph{observables with bounded magic allow for both efficient classical post-processing and stable estimation}.

Compared to the arbitrary noise case, the Pauli noise bound in \cref{thm: Error bound} is both stronger in the error measure and in the dependence on the observable.
Concerning the latter, the bound by $\twonorm{O}$ is favorable when $O$ is a quantum state or an entanglement witness, since then $\twonorm{O} \leq 1$.
The error scaling for Pauli noise is also advantageous, since we derive $\max_a |1-\bar\lambda_a| \leq \max_a \dnorm{\id-\bar\Lambda_a}$ (see \cref{lem:pauli-channels} in \cref{app: pauli-channels}).

Finally, another question that arises in regard to \Cref{thm: Error bound} is whether the given bounds are tight, especially whether the error can really scale with $\stnorm{O}$. 
If so, we would essentially recover our naive bound in \cref{eq:naive-bound} for highly magic observables with $\stnorm{O} = 2^{\LandauO(n)}$.
In fact, the already discussed example in \cref{prop: 1-norm bound saturation} also saturates the bound in \Cref{thm: Error bound} since $\stnorm{\ketbra{H}{H}^{\otimes n}} = \stnorm{\ketbra{H}{H}}^{n} = \left( \frac{1+\sqrt{2}}{2} \right)^n \geq 2^{n/4}$. 

A potential instability could also arise if the protocol's \emph{sample complexity} is no longer controlled due to the effects of gate-dependent noise.
The sample complexity is governed by the estimator variance, which is given in the noise-free scenario and for uniform sampling in Ref.~\cite{Huang2020Predicting}.
Among others, these results involve the  observable's spectral norm $\snorm{O}$.
In the presence of noise, we can only recover a similar result for observables with $\stnorm{O} = \LandauO(\snorm{O})$:

\begin{restatable}{thm}{thmVariance} \label{thm: Variances}
The variance of shadow estimation for uniform sampling from the global Clifford group under gate-dependent noise is bounded by
\begin{equation*}
  \Var_{\mathrm{global}}[\estimator{o}] \leq \frac{2(d+1)}{(d+2)} \stnorm{O_0}^2  + \frac{d+1}{d}\twonorm{O_0}^2 \, .
\end{equation*}
For uniform sampling from the local Clifford group, the variance is bounded by $\Var_{\mathrm{loc}}[\estimator{o}] \leq 4^k \norm{O_\mathrm{loc}}_{\infty}^2$ for $k$-local observables $O = O_{\mathrm{loc}}\otimes \ii^{n-k}$ and by $\Var_{\mathrm{loc}}[\estimator{o}] \leq 3^{\mathrm {supp}(\sigma_a)}$ for Pauli observables $O = \sigma_a$. 
\end{restatable}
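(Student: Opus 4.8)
The plan is to bound the variance by the second moment, $\Var[\estimator{o}]\le\EE[\estimator{o}^2]$, and to handle the global and local cases by genuinely different routes. In both cases I would first drop the identity component of $O$, since shifting the estimator by the constant $\Tr(O)/d$ does not change the variance; write $O=\frac{\Tr(O)}{d}\1+O_0$ with $O_0$ traceless. For the global Clifford group $S^{-1}\ketr{O_0}=(d+1)\ketr{O_0}$, so the second moment reads
\begin{equation*}
  \EE[\estimator{o}^2]=(d+1)^2\,\EE_{g}\sum_{x}\brar{E_x}\omega(g)\Lambda(g)\ketr{\rho}\,\paren{\brar{E_x}\omega(g)\ketr{O_0}}^2 .
\end{equation*}
The core difficulty is that the three copies of $\omega(g)$ appearing here cannot be averaged with the Clifford $3$-design, because the noise channel $\Lambda(g)$ is correlated with $g$.

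To get around this I would use only that $\Lambda(g)(\rho)$ is again a state and split it as $\Lambda(g)(\rho)=\1/d+\tau_g$ with $\tau_g$ traceless. The maximally mixed part is noise-independent: there $\brar{E_x}\omega(g)\ketr{\1/d}=1/d$, and the remaining two-copy object is evaluated exactly with the Clifford $2$-design, producing the term $\frac{d+1}{d}\twonorm{O_0}^2$. For the $\tau_g$ part I would expand both $O_0$ and $\tau_g$ in the Pauli basis, use $\sum_x E_x^{\otimes 3}$ written over diagonal (i.e.\ $Z$-type) Paulis satisfying $c+c'+c''=0$, and apply Clifford covariance $\omega(g)\ketr{\s_a}=\pm\ketr{\s_{\pi_g(a)}}$. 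This collapses the triple Pauli sum to the single constraint $b=a+a'$ together with the requirement that $\pi_g(a)$ and $\pi_g(a')$ both be diagonal. At that point I would bound the genuinely $g$-dependent noisy coefficients crudely by $\abs{\braketr{\s_b}{\tau_g}}\le 1$ and average only the remaining noise-free diagonalization indicator. The main obstacle is precisely this average: one needs the exact two-point probability $\PP_{g}[\pi_g(a),\pi_g(a')\text{ both diagonal}]$, which vanishes when $a,a'$ anticommute and otherwise equals $\frac{2}{(d+1)(d+2)}$ by a symplectic counting argument. Combined with $\frac{1}{d^2}\sum_{a,a'}\abs{\braketr{\s_a}{O_0}}\abs{\braketr{\s_{a'}}{O_0}}=\stnorm{O_0}^2$ and the prefactor $(d+1)^2$, this yields exactly $\frac{2(d+1)}{d+2}\stnorm{O_0}^2$.

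For the local Clifford group I would instead use a state-uniform bound that removes the noise entirely: for each fixed $g$ the numbers $\brar{E_x}\omega(g)\Lambda(g)\ketr{\rho}$ form a probability distribution over $x$ (this uses only that $\Lambda(g)$ is trace preserving), hence $\EE[\estimator{o}^2]\le\EE_{g}\max_{x}\estimator{o}(g,x)^2$. For a Pauli observable $O=\s_a$ one has $\estimator{o}(g,x)=3^{\abs{\supp(\s_a)}}\brar{E_x}\omega(g)\ketr{\s_a}$, which factorizes over qubits; the maximum over $x$ is nonzero only when every single-qubit rotated Pauli on the support is diagonal, an event of probability $1/3$ per qubit, so the average gives $9^{\abs{\supp(\s_a)}}\cdot 3^{-\abs{\supp(\s_a)}}=3^{\abs{\supp(\s_a)}}$. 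For a general $k$-local $O=O_{\mathrm{loc}}\otimes\1^{n-k}$ the same state-uniform bound applies, but now the maximum over $x$ couples the $k$ support qubits and no longer factorizes; controlling this coupling while averaging over $g$ is the main obstacle in the local case, and I would resolve it by the per-qubit operator-norm analysis underlying the local-Clifford shadow-norm estimate of Ref.~\cite{Huang2020Predicting}, the gate average improving the deterministic per-gate bound of $9^k$ down to $4^k\snorm{O_{\mathrm{loc}}}^2$.
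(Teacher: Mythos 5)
Your proposal is correct and reaches exactly the stated constants, but it is organized quite differently from the paper's proof, which routes both cases through a single noisy second-moment formula $\EE(\estimator{o}^2)=\frac{1}{\sqrt{d}}\sum_{a,a'}\frac{s_{a,a'}}{s_a s_{a'}}\braketr{O}{\sn_a}\braketr{O}{\sn_{a'}}\brar{\sn_{a+a'}}\bar\Lambda_{a,a'}\ketr{\rho}$ with averaged noise channels $\bar\Lambda_{a,a'}$, computes the coefficients $s_{a,a'}$ via Schur's lemma and the decomposition $\omega=\tau_0\oplus\tau_1$ for the global group (and by per-qubit factorization for the local group), and then bounds $\abs{\brar{\sn_{a+a'}}\bar\Lambda_{a,a'}\ketr{\rho}}\le 1/\sqrt{d}$. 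In the global case your split $\Lambda(g)(\rho)=\1/d+\tau_g$ with $\abs{\Tr(\s_b\tau_g)}\le 1$ is equivalent to that uniform bound (the $a=a'$ terms drop out of the $\tau_g$ part by tracelessness and are captured exactly by the $\1/d$ part, giving $\frac{d+1}{d}\twonorm{O_0}^2$), and your derivation of the key coefficient $\frac{2}{(d+1)(d+2)}$ by transitivity of the Clifford action on ordered pairs of distinct commuting nontrivial Paulis --- $(d-1)(d-2)$ diagonal target pairs against $(d^2-1)(d^2/2-2)$ pairs in the orbit --- is a more elementary route than the paper's representation-theoretic one, at the price of being tied to uniform sampling, whereas the paper's \cref{prop: Variance bias} extends to other distributions. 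In the local case your bound $\EE[\estimator{o}^2]\le\EE_g\max_x\estimator{o}(g,x)^2$ removes the noise using only trace preservation, which is cleaner than the paper's averaged-channel bookkeeping; unwinding it via $\max_x\abs{\estimator{o}(g,x)}\le\frac{1}{\sqrt{d}}\sum_{a\le b(g)}3^{\abs{\supp(a)}}\abs{\braketr{O}{\sn_a}}$ and averaging over which Pauli each $g_i$ maps to $Z$ reproduces term by term the sum $\frac1d\sum_{a,a'}3^{\abs{\supp(a)\cap\supp(a')}}\abs{\braketr{O}{\sn_a}}\abs{\braketr{O}{\sn_{a'}}}$ that the paper reaches, after which the Cauchy--Schwarz argument of Ref.~\cite{Huang2020Predicting} that you invoke is exactly the computation the paper repeats. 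The one step you should make explicit is this last reduction: passing from $\EE_g\max_x$ to the factorized double sum requires giving up the sign cancellations over $x$ via the triangle inequality on the Pauli expansion of $O_{\mathrm{loc}}$; once that is done, your route and the paper's coincide.
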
{}
Our findings indicate that the sample complexity of the protocol for highly magical observables with $\stnorm{O} = 2^{\LandauO(n)}$ may no longer be controlled. 
Intriguingly, we exactly recover the noise-free variance bound of Ref.~\cite{Huang2020Predicting} for the local Clifford group, meaning that the sample complexity for many important use cases is not adversely affected by gate-dependent noise.  
The proof of \cref{thm: Variances} is given in \cref{app: variance bound}, including partial results on more general sampling schemes.


\emph{Robustness of bias mitigation.}---Modifications to the original shadow estimation protocol, coined \emph{\ac{RSE}} \cite{Chen21RobustShadowEstimation}, have already been proposed with the goal of mitigating noise-induced biases.
Their performance guarantees rely on the assumption of gate-independent (left) noise and their success under more general noise models is unclear.
Intuitively, one may hope that \ac{RSE} mitigates the ``dominant contributions'' of an otherwise complicated noise model and therefore generally improves the shadow estimate.

In numerical simulations \cite{Chen21RobustShadowEstimation,2023arXiv231003071Z}, simple gate-dependent noise models are found to be sufficiently well behaved in order for \ac{RSE} to reduce the estimation bias.
In fact, the \ac{RSE} technique and variations thereof have been successfully applied in several experiments \cite{Vitale:2023few,huggins2022unbiasing,Scheurer:2023pgd}. 
For shadows with single-qubit rotations on a superconducting qubit platform, Ref.~\cite{Vitale:2023few} finds that the measurement noise is still dominated by local noise in the readout and, thus, is well captured by left gate-independent noise. 
Similarly using superconducting qubits, Ref.~\cite{huggins2022unbiasing} finds ratios fidelity estimators to be intrinsically robust in practice, which is theoretically expected for left gate-independent noise or readout noise.

We demonstrate, however, 
that \ac{RSE} can \emph{increase the bias rather than reduce it} when 
the noise model assumption is violated. 
The bias of \ac{RSE} can even scale as 
\begin{equation} \label{eq: RSE bias}
  \bias{o_\mathrm{RS}}{O} \geq \abs{\av{O_0}(\frac{1}{2}(1 + \epsilon)^n-1 )}
\end{equation}
for local gate-dependent noise $\phi_{\epsilon}(g) = (1-\epsilon) \omega(g) + \epsilon\, \omega(g)\Lambda(g)$.
As we show, this situation generally requires the effect of the noise to be aligned with the support of the observable or the state under scrutiny.
To explain these findings, we investigate the influence of gate-dependent Pauli noise on \ac{RSE} in more 
detail, using the general form of the frame operator given in \cref{lem: Noise averaging}. 
In particular, we identify a strictly more general noise model than gate-independent left noise \cite{Pauli-noise-assumption}
for which \ac{RSE} works as intended and the bias is strongly suppressed in the number of qubits.
We call this gate-dependent noise model \emph{isotropic Pauli noise}.
Here, the average Pauli eigenvalues $\bar \lambda_a$ are allowed to fluctuate randomly around a mean value in a rotation-invariant fashion in the space of eigenvalues. 
This model is motivated by the intuition that complicated noise processes and the effective averaging introduced by shadow estimation can be well approximated by normally distributed eigenvalues.

We provide an extensive discussion on the stability of \ac{RSE} against gate-dependent noise in \cref{app: robust shadows,app: Error mitigation}.
\begin{figure}[htb] 
  \includegraphics{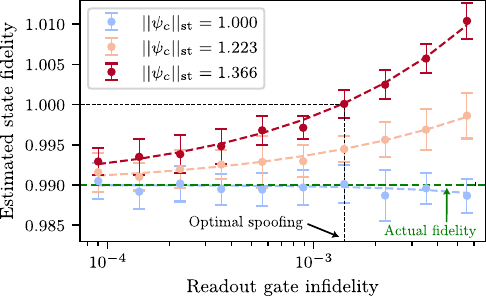}
  \caption{Shadow estimates for the fidelity of single-qubit target states $\ket{\psi_c}$, based on random Pauli basis measurements performed by noisy $\pi/2$ rotations about the $X$ or $Y$ axis. 
  We assume that the actually prepared state $\tilde \rho_c$ is slightly depolarized such that $\sandwich{\psi_c}{\tilde \rho_c}{\psi_c}=0.99$ and consider noisy readout gates $\e^{\i \frac{1}{2}(\frac{\pi}{2} + \delta) X}$ and $\e^{-\i \frac{1}{2}(\frac{\pi}{2} - \delta) Y}$, where $
  \delta$ is a systematic calibration error.
  The solid lines correspond to the exact shadow estimates, while the data points are obtained from simulating the shadow protocol using $10^5$ shots.
  We observe that the estimation bias is approximately one order of magnitude larger than the readout gate infidelity.
  Thus, one can spoof the state fidelity by deliberately making the gates worse, as indicated by the optimal spoofing point in the figure.
  \label{fig: magic fidelity plot}
  }
\end{figure}
\emph{Noise-spoofed classical shadows.}---So far, we have analyzed the 
stability
of shadow estimation through its scaling behavior. 
Here, we demonstrate the identified adverse effects on single-qubit fidelity estimation.
To study the influence of the stabilizer norm, we consider a family of states $\{\ket{\psi_c}\}$, 
tuning with $c \in [0, 1/\sqrt{3}]$ between the $|+\rangle$ state with unit stabilizer norm and the ``facet'' magic state with maximal stabilizer norm (cf.~\cref{app: Spoofing fidelity estimation}).
We assume that a depolarized state $\tilde\rho_c$ is prepared and attempt to estimate its fidelity with $\ket{\psi_c}$.
We randomly apply $\pi/2$ rotations about the $X$ or $Y$ axis to measure in a random Pauli basis.
As a simple, but practically plausible gate-dependent noise model, we consider systematic under- and overrotations (e.g.~due to miscalibration).
The results are shown in \cref{fig: magic fidelity plot}, for which we convert the readout gate error into an average gate infidelity as it would have been measured in a single-qubit Clifford RB experiment \footnote{Following the gate-dependent RB literature \cite{Helsen20RBFramework}, we performed this conversion as follows: We compiled any of the 24 single-qubit Cliffords in $X/Y$ $\pi/2$-pulses, and computed the channel twirl under the over-/underrotation noise model with parameter $\delta$.
The second-largest eigenvalue of this map yields the RB decay parameter $r$ which is then converted into the average gate fidelity $f$ using the standard formula $f=r+(1-r)/d$ where $d=2$ is the Hilbert space dimension.}.
For nonstabilizer states, we find that classical shadows systematically overestimate the fidelity, which can even result in values greater than one.
The estimation bias is approximately one order of magnitude larger than the readout gate infidelity, and increases with the stabilizer norm, in accordance with \cref{thm: Error bound}.
Attempting to use robust shadows in this scenario would lead to an even more severe overreporting of the fidelity, which we explore in \cref{app: Spoofing fidelity estimation}.
Unphysical estimates larger than one can even be amplified exponentially by extending this example straightforwardly to $n$ qubits, showing that arbitrarily large fidelity estimates can be achieved due to miscalibration.

\emph{Conclusion and outlook.}---
Overcoming the restrictions of previous work, we develop a theory for classical shadows based on Clifford unitaries under general gate-dependent, time-stationary, and Markovian noise. 
We find that shadow estimation is, perhaps surprisingly, robust 
for large classes of observables, including convex combinations of Pauli observables and stabilizer states: 
the estimation bias scales with the strength of the gate noise, as well as the observable's \emph{stabilizer norm}. 
For observables with bounded stabilizer norm also the sample complexity remains stable under noise.  
Intriguingly, this is also the class of observables for which one expects efficient classical postprocessing using stabilizer techniques. 
In contrast, highly ``magic'' observables can be sensitive to noise, which we show numerically for systematic over- and underrotations---noise that might occur in current experiments.
In the worst case, magic observables can even suffer from an exponentially large bias.
Finally, we show that the intrinsic stability of classical shadows can 
render it more noise-robust than
``robust classical shadows''. 
We extend the regime in which robust shadows work reliably to certain gate-dependent noise models, which we call isotropic Pauli noise. Beyond that, ref.~\cite{Brieger21CompressiveGateSet} develops a robust classical shadows protocol under local gate-dependent noise using gate-set tomography and, very recently, Refs.~\cite{hu2024demonstration, farias2024robust} demonstrate robust shadows with shallow circuits under more fine-grained gate-dependent noise assumptions.

Our results provide guidance in devising further approaches to mitigate noise in classical shadows, provide crucial justification and identify caveats.

\begin{acknowledgments}
This work was funded by the 
  Deutsche Forschungsgemeinschaft (DFG, German Research Foundation) via the Emmy Noether program (Grant No.\ 441423094),
  the German Federal Ministry of Education and Research (BMBF) within the funding program ``Quantum technologies--from basic research to market'' in the joint project MIQRO (Grant No.\ 13N15522), 
  and by the Fujitsu Germany GmbH as part of the endowed professorship ``Quantum Inspired and Quantum Optimization''. 
\end{acknowledgments}


\clearpage
\onecolumngrid

\appendix
\pdfbookmark{Appendix}{Appendix}

\section{The frame operator under gate-dependent right noise} 
\label{app: Noise averaging and bounds}
We draw gates from a set $G$ and aim for the target implementation $\omega(g)(\rho) = g\rho\, g^{\dagger}$ acting as unitaries $g\in \U(d)$ on density operators $\rho$.
We assume that there is a corresponding hardware implementation
$\phi:G\to \CPT(\H)$ that takes each gate $g$ to some \ac{CPT} map $\phi(g)$ on $\H$.

\begin{prop}[Direct worst-case error bound] \label{prop: direct worst case error bound}
Consider an observable $O_0$ that, w.l.o.g., we assume to be traceless. 
The bias of shadow estimation with uniform sampling from a unitary 2-design is bounded as 
\begin{align}
\abs*{\brar{O_0}S^{-1}\tilde{S}\ketr{\rho} - \braketr{O_0}{\rho}} \leq (d+1) \EE_{g \in G} \dnorm{\phi(g) - \omega(g)}\, .
\end{align}
\end{prop}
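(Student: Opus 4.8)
The plan is to exploit that $G$ is a unitary 2-design to pin down $S^{-1}$ on the traceless sector, and then to control the single-gate contributions to $\tilde S-S$ by the diamond norm. First I would recall that, since $\omega(g)^{\dagger}M\omega(g)$ is a second-moment quantity in $g$, for any unitary 2-design the frame operator equals its Haar average, $S(X)=\frac{1}{d+1}\paren{X+\Tr(X)\1}$, whence $S^{-1}(X)=(d+1)X-\Tr(X)\1$ and in particular $S^{-1}(O_0)=(d+1)O_0$ on the traceless observable $O_0$. Because $S$ (and hence $S^{-1}$) is self-adjoint with respect to the Hilbert--Schmidt inner product, this gives $\brar{O_0}S^{-1}=(d+1)\brar{O_0}$. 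Writing $\braketr{O_0}{\rho}=\brar{O_0}S^{-1}S\ketr{\rho}$ and subtracting, the bias collapses to $\abs{\brar{O_0}S^{-1}(\tilde S-S)\ketr{\rho}}=(d+1)\,\abs{\brar{O_0}(\tilde S-S)\ketr{\rho}}$.

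Next I would isolate the gatewise error. With $\tilde S=\EE_{g}[\omega(g)^{\dagger}M\phi(g)]$ and $S=\EE_{g}[\omega(g)^{\dagger}M\omega(g)]$, linearity gives $\tilde S-S=\EE_{g}\bigl[\omega(g)^{\dagger}M\,(\phi(g)-\omega(g))\bigr]$. Pulling the expectation outside the absolute value by the triangle inequality, it suffices to bound, for each fixed $g$, the scalar $\brar{O_0}\omega(g)^{\dagger}M\,(\phi(g)-\omega(g))\ketr{\rho}$. Here I would use $\brar{O_0}\omega(g)^{\dagger}=\brar{\omega(g)(O_0)}$ together with the self-adjointness of $M$, so the left factor becomes $\brar{B_g}$ with $B_g\coloneqq M\omega(g)(O_0)=\diag\paren{gO_0g^{\dagger}}$, the diagonal part of the rotated observable in the computational basis. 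As $O_0$ is Hermitian, so is $B_g$, and the contribution equals $\Tr\bigl(B_g\,(\phi(g)-\omega(g))(\rho)\bigr)$.

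Finally, the per-gate term is estimated by matrix Hölder followed by the defining property of the diamond norm: $\abs{\Tr(B_g\,(\phi(g)-\omega(g))(\rho))}\le \snorm{B_g}\,\trnorm{(\phi(g)-\omega(g))(\rho)}\le \snorm{B_g}\,\dnorm{\phi(g)-\omega(g)}$, where the second step uses $\trnorm{\rho}=1$ and that the diamond norm dominates the induced trace norm. The leftover factor is handled by observing that taking the diagonal does not increase the spectral norm, $\snorm{B_g}=\snorm{\diag(gO_0g^{\dagger})}\le\snorm{gO_0g^{\dagger}}=\snorm{O_0}\le1$ under the standing normalization of observables. Averaging over $g$ and reinstating the prefactor $(d+1)$ then yields the claim. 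I expect the only genuinely delicate point to be this per-gate estimate: one must recognize that inserting $M$ and $\omega(g)^{\dagger}$ amounts to testing the error channel $\phi(g)-\omega(g)$ against a fixed Hermitian observable of spectral norm at most $\snorm{O_0}$, which is exactly the structure controlled by $\dnorm{\argdot}$, while the 2-design hypothesis enters only through the clean value of $S^{-1}$ on traceless operators.
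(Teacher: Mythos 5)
Your proof is correct and follows essentially the same route as the paper's: use the 2-design property to get $\brar{O_0}S^{-1}=(d+1)\brar{O_0}$, pull the gate average out by the triangle inequality, and bound each per-gate term by $\dnorm{\phi(g)-\omega(g)}$. The only (harmless) difference is in the last step, where the paper invokes submultiplicativity of the diamond norm together with $\dnorm{\omega(g)^{\dagger}M}=1$, while you evaluate $\brar{O_0}\omega(g)^{\dagger}M$ explicitly as the diagonal part of $gO_0g^{\dagger}$ and apply matrix H\"older; both arguments rest on the same implicit normalization $\snorm{O_0}\le 1$, which you at least make explicit.
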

\begin{proof}
For uniform sampling from a unitary 2-design (such as the Clifford group), the inverse of the ideal frame operator acts as $\brar{O_0} S^{-1} = (d+1) \brar{O_0}$ \cite{Huang2020Predicting}.
Let $\mu$ be the uniform probability measure on $G$, for instance the normalized Haar measure on $\U(d)$ or $\Cl_n$. 
The error due to noise in the frame operator is then given by
\begin{align}
  \left|\brar{O_0}S^{-1}\tilde{S}\ketr{\rho} - \brar{O_0}S^{-1} S\ketr{\rho}\right| 
  &= (d+1) \brar{O_0} \left[\int \omega(g)^{\dagger} M \phi(g) \mathrm{d}\mu(g) - \int \omega(g)^{\dagger} M \omega(g) \mathrm{d}\mu(g)\right] \ketr{\rho}\\
  & \leq (d+1) \dnorm{\int \omega(g)^{\dagger} M \phi(g) \mathrm{d}\mu(g) - \int \omega(g)^{\dagger} M \omega(g) \mathrm{d}\mu(g)}\\
  & = (d+1) \dnorm{\int \omega(g)^{\dagger} M (\phi(g) - \omega(g)) \mathrm{d}\mu(g)} \label{eq: Dia norm bound line 3}\\
  & \leq (d+1) \int \dnorm{\omega(g)^{\dagger} M} \dnorm{\phi(g) - \omega(g)} \mathrm{d}\mu(g) \label{eq: Dia norm bound line 4}\\
  &= (d+1) \int\dnorm{\phi(g) - \omega(g)} \mathrm{d}\mu(g)\,,
\end{align}
where we used the submultiplicativity of the diamond norm and that $\dnorm{\mc C} = 1$ for any quantum channel $\mc C$, in particular for $\mc C = M$.
\end{proof}
For trace preserving $\phi(g)$, one can quickly verify that $|\brar{O}S^{-1}\tilde{S}\ketr{\rho} - \braketr{O}{\rho}| = |\brar{O_0}S^{-1}\tilde{S}\ketr{\rho} - \braketr{O_0}{\rho}|$, where $O_0$ is the traceless component of $O$.
This bound suggests an error amplification by a dimensional factor $d+1$. 
However, the bound cannot be tight since the triangle inequality from \cref{eq: Dia norm bound line 3} to \cref{eq: Dia norm bound line 4} can only be saturated in the trivial zero-error case $\phi(g) = \omega(g)$. 
Indeed, in the following, we show that much stronger bounds for sampling measurement unitaries from the Clifford group can be derived. 
The defining property of Clifford unitaries that they map the Pauli group onto itself plays a central role: 
even in the presence of noise, the frame operator can be written in a more amenable form when looking at its Pauli transfer matrix.

In the following, we denote by $\Pg$ the set of $n$-qubit Pauli operators.
The action of Clifford group elements on $M$ via the representation $\omega$ is given by 
\begin{equation}
\label{eq:clifford-action}
  \omega(g)(\s_a) = (-1)^{\varphi_a(g)} \Xi_a(g)\,,
\end{equation}
with functions $\varphi_a: \Cl_n \rightarrow \FF_2$ and $\Xi_a: \Cl_n \rightarrow \Pg$ defined for $a \in \FF_2^{2n}$.
Moreover, we write $\sn_a = \s_a/\sqrt{d}$ for the normalized Pauli operators such that $\braketr{\sn_a}{\sn_b} = \delta_{a,b}$.
In the following, we denote by $Z_z \equiv \bigotimes_{i = 1}^n Z^{z_i}$ with $\vec z \in \FF_2^n$ the diagonal Pauli operators and set $Z_1 \equiv Z_{e_1}$ with $e_1 = (1,0,\dots,0)$.
We write again $\normalize Z_z = Z_z / \sqrt{d}$ for their normalized versions.
Moreover, one can easily verify that $M$ is given in terms of the diagonal Paulis as
\begin{equation}
\label{eq:MZ}
  M = \sum_{x \in \FF_2^n}\ketbrar{E_x}{E_x} = \sum_{z \in \FF_2^n} \ketbrar{\n Z_z}{\n Z_z}\, .
\end{equation}
\noiseavg*

\begin{proof}
We show the statement by a direct calculation: 
\begin{align}
\tilde S &=  \sum_{g \in G} p(g) \omega(g)^{\dagger} \sum_{z \in \FF_2^n} \ketbrar{\n Z_z}{\n Z_z} \omega(g) \Lambda(g) \\
&= \sum_{g \in G}  \sum_{z \in \FF_2^n} p(g) (-1)^{\varphi_a(g)} \ketbrar{\normalize\Xi_z(g)}{\normalize\Xi_z(g)}  (-1)^{\varphi_a(g)}\Lambda(g)  \\
&= \sum_{g \in G}  \sum_{z \in \FF_2^n} p(g) \ketbrar{\normalize\Xi_z(g)}{\normalize\Xi_z(g)}  \Lambda(g)  \\
&= \sum_{a \in \FF_2^{2n}} \ketbrar{\sn_a}{\sn_a}   \sum_{z \in \FF_2^n} \sum_{g \in \Xi_z^{-1}(\s_a)} p(g) \Lambda(g) \\
&= \sum_{a \in \FF_2^{2n}} s_a \ketbrar{\sn_a}{\sn_a}   \sum_{z \in \FF_2^n} \sum_{g \in \Xi_z^{-1}(\s_a)} \frac{p(g)}{s_a} \Lambda(g) \\
&= \sum_{a \in \FF_2^{2n}} s_a \ketbrar{\sn_a}{\sn_a} \bar{\Lambda}_a \, ,
\end{align} 
where $s_a \coloneqq  \sum_{z \in \FF_2^n} \sum_{g \in \Xi_z^{-1}(s_a)} p(g)$ and  $\bar \Lambda_a \coloneqq \sum_{z \in \FF_2^n} \sum_{g \in \Xi_z^{-1}(\s_a)} \frac{p(g)}{s_a} \Lambda(g)$.\\
For Pauli noise parameterized by $\Lambda(g) = \ketbrar{\onen}{\onen} + \sum_{a \neq 0}\lambda_a(g)\ketbrar{\sn_a}{\sn_a}$, the term $\brar{\sn_a}\bar \Lambda_a$ simplifies to
\begin{align} \label{eq-app: Average Pauli channel eigenvalues}
\brar{\sn_a}\bar \Lambda_a = \sum_{z \in \FF_2^n}\sum_{g \in \Xi_z^{-1}(\s_a)} \frac{p(g)}{s_a} \brar{\sn_a}\Lambda(g) = \sum_{z \in \FF_2^n}\sum_{g \in \Xi_z^{-1}(\s_a)} \frac{p(g)}{s_a} \lambda_a(g) \brar{\sn_a} \eqqcolon \bar \lambda_a \brar{\sn_a}\,.
\end{align}
The frame operator for Pauli noise then becomes $\tilde{S} = \sum_{a \in \FF_2^{2n}} s_a \bar{\lambda}_a \ketbrar{\sn_a}{\sn_a}$.
One can quickly verify that the set $\Set{\frac{p(g)}{s_a}\given g \in \Xi_z^{-1}(\s_a), z\in \FF_2^n}$ corresponds to a normalized probability distribution by definition of $s_a$. 
Thus, each $\bar{\Lambda}_a$ is an average over noise channels and thereby, a quantum channel itself. The condition $\bar \lambda_a \in [-1,1]$ follows from the general property of Pauli eigenvalues $\lambda_a(g) \in [-1,1]$.
\end{proof}

Crucially, these average channels depend on $a$ and are taken over preimages of $\s_a$. 
The most immediate example where \Cref{lem: Noise averaging} can be applied is shadow estimation with the global Clifford group for uniform sampling, i.e.\ $p(g) = 1/|\Cl_n|$. 
In this case, we have $s_a = 1/(d+1)$ for $a \neq 0$ and $s_0 = 1$ \cite{Huang2020Predicting}. 
Furthermore, each $\bar \Lambda_a$ is an average over $|\Cl_n|/(d+1)$ right noise channels $\Lambda(g)$, and a characterization of the sets over which averages are taken can be found in \cref{app: Clifford average structure}. 

For trace-preserving noise, the identity component in $\tilde S$ can be treated differently since, for any channel $\Lambda(g)$, the trace-preservation condition is equivalent to $\brar{\ii}\Lambda(g) = \brar{\ii}$ and, thus, $\brar{\ii} \bar{\Lambda}_a = \brar{\ii}$. 
Moreover, since the adjoint unitary action $\omega(g)$ for each gate $g$ is trace-preserving, it holds that $\brar{\ii}\omega(g)\ketr{Z_z} = \braketr{\ii}{Z_z} = d \, \delta_{0,z}$. 
As $\Xi_z^{-1}(\ii)$ is defined to be precisely the set of all gates that map $Z_z$ to $\ii$, it turns out that $\Xi_z^{-1}(\ii) = \emptyset$ for $z\neq 0$ and $\Xi_z^{-1}(\ii) = \Cl_n$ for $z = 0$. 
Thus, for any probability distribution $p(g)$ over $\Cl_n$, it holds true that $s_{0} = \sum_{z \in \FF_2^n}\sum_{g \in \Xi_z^{-1}(\ii)} p(g) = \sum_{g \in \Xi_{0}^{-1}(\ii)} p(g) = \sum_{g \in \Cl_n} p(g) = 1$. 

For clarity of presentation, we restate the main Theorem and its proof here.

\mainthm*

\begin{proof}
  From \Cref{lem: Noise averaging} we see that the ideal frame operator with $\Lambda(g) = \id$ for any $g \in G$ is given by 
  \begin{align}
  S = \sum_{a \in \FF_2^{2n}} s_a \ketbrar{\sn_a}{\sn_a} \sum_{z \in \FF_2^n}\sum_{g \in \Xi_z^{-1}(\s_a)} \frac{p(g)}{s_a} \, \id = \sum_{a \in \FF_2^{2n}} s_a \ketbrar{\sn_a}{\sn_a} \,,
  \end{align}
   and its inverse is $S^{-1} = \sum_{a \in \FF_2^{2n}} \frac{1}{s_a} \ketbrar{\sn_a}{\sn_a}$. Therefore the effective operation performed by the shadow estimation protocol is
  $S^{-1} \tilde S = \sum_{a \in \FF_2^{2n}} \ketbrar{\sn_a}{\sn_a} \bar \Lambda_a$.
  We can now bound the absolute error as follows: 
  \begin{align}
  \abs[\big]{\braketr{O}{\rho} - \brar{O}S^{-1}\tilde S \ketr{\rho}} 
  &= 
  \abs[\Big]{\brar{O} \Big(\id - \sum_a \ketbrar{\sn_a}{\sn_a} \bar{\Lambda}_a \Big)\ketr{\rho}} \label{eq:Bound proof 1}\\
  &= \abs[\Big]{\sum_a \braketr{O}{\sn_a} \brar{\sn_a}(\id - \bar \Lambda(a))\ketr{\rho}} \label{eq:Bound proof 2}\\
  &\leq \sum_a |\braketr{O}{\sn_a}| \,\,|\brar{\sn_a}(\id - \bar \Lambda_a\ketr{\rho}| \label{eq:Bound proof 3}\\
  &\leq \sum_a |\braketr{O}{\sn_a}|\, \,  \snorm{\sn_a} \pnorm[1]{(\id -  \bar \Lambda_a)(\rho)} \label{eq:Bound proof 4} \\
  &\leq \sum_a |\braketr{O}{\sn_a}|\, \, \snorm{\sn_a} \dnorm{\id - \bar \Lambda_a} \label{eq:Bound proof 5}\\
  &\leq \max_a \dnorm{\id - \bar \Lambda_a} \sum_a \frac{|\braketr{O}{\sn_a}|}{\sqrt{d}} \label{eq:Bound proof 6}\\
  &= \max_a \dnorm{\id - \bar \Lambda_a} \stnorm{O} \,, 
  \end{align}
where 
from \eqref{eq:Bound proof 3} to \eqref{eq:Bound proof 4} we used the matrix Hölder inequality.
From \eqref{eq:Bound proof 5} to \eqref{eq:Bound proof 6} we used $\snorm{\sn_a} = \snorm{\s_a}/\sqrt{d} = 1/\sqrt{d}$ and $\pnorm[1]{(\id -  \bar \Lambda_a)(\rho)} \leq \dnorm{\id - \bar \Lambda_a}$ \cite{Wat18}.
Finally, we used $\braketr{O}{\sn_a} = \braketr{O}{\s_a}/\sqrt{d}$ and the definition of the stabilizer norm in \cref{eq:def-stab-norm}.

It remains to prove the bound for Pauli noise,
where we use that $\brar{\sn_a}\bar{\Lambda}_a = \bar \lambda_a \brar{\sn_a}$ as seen in \cref{eq-app: Average Pauli channel eigenvalues}. 
This leads to $S^{-1} \tilde S = \sum_{a \in \FF_2^{2n}} \bar \lambda_a \ketbrar{\sn_a}{\sn_a}$ and $\id - S^{-1} \tilde S = \sum_{a \in \FF_2^{2n}} (1 - \bar \lambda_a) \ketbrar{\sn_a}{\sn_a} $. 
Consequently, we can bound the bias for Pauli noise as
\begin{align}
\abs[\big]{\braketr{O}{\rho} - \brar{O}S^{-1}\tilde S \ketr{\rho}} 
&= 
\Big|\brar{O} \sum_{a \in \FF_2^{2n}} (1 - \bar \lambda_a) \ketr{\sn_a} \braketr{\sn_a}{\rho}\Big|\label{eq: Pauli noise bound 1}\\
&\leq \max_a\abs{1-\lambda_a} \sum_{a \neq 0} |\braketr{O}{\sn_a} \braketr{\sn_a}{\rho}| \label{eq: Pauli noise bound 2}\\
&\leq \max_a\abs{1-\lambda_a} \twonorm{O} \twonorm{\rho} \label{eq: Pauli noise bound 3}\\
&\leq \max_a\abs{1-\lambda_a} \twonorm{O} \label{eq: Pauli noise bound 4}\, .
\end{align}
From Eq.~\eqref{eq: Pauli noise bound 2} to Eq.~\eqref{eq: Pauli noise bound 3}, the Cauchy-Schwarz inequality has been used, and the last line follows from the fact that 
$\twonorm{\rho}\leq 1$.
We can alternatively bound Eq.~\eqref{eq: Pauli noise bound 2} by using $|\braketr{\sn_a}{\rho}|\leq 1/\sqrt{d}$ to come by $\max_a\abs{1-\lambda_a} \sum_{a} |\braketr{O}{\sn_a} \braketr{\sn_a}{\rho}| \leq \max_a\abs{1-\lambda_a} \sum_{a} |\braketr{O}{\sn_a}|/\sqrt{d} =  \max_a\abs{1-\lambda_a} \stnorm{O}$.
\end{proof}

As mentioned in the discussion of \Cref{thm: Error bound} in the main text, the error bound for Pauli noise is as strong as one could hope for in the case where $O$ has unit rank. 
The maximum error on Pauli eigenvalues is bounded by the diamond norm for Pauli channels and $\twonorm{O}$ relates to the largest expectation value over all input states via $\twonorm{O} \leq \sqrt{\rank O}\pnorm[\infty]{O}$. 
In general we have $\stnorm{O} 
\leq \sqrt{d} \twonorm{O}$, meaning that the bound in terms of $\twonorm{O}$ can be stronger by a factor of $\sqrt{d}$ compared to the bound in terms of $\stnorm{O}$. 
We will now give an explicit example showing that, at least for local Clifford shadow estimation, the bias can scale exponentially in the system size. 

\propsaturation*

\begin{proof}
In the following we use a noise model $\Lambda(g) = \bigotimes_{i = 1}^n \Lambda_i(g_i)$.
Recall from \cref{eq:clifford-action} that the local Cliffords $g_i$ act as $\omega(g_i)(Z) = (-1)^{\varphi_Z(g_i)} \Xi_Z(g_i)$.
Then, we define
\begin{align}\label{eq: worst case error definition}
\Lambda_i(g_i) =  \begin{cases}
  \mc \omega(g_i)^{\dagger}\omega(X)  & \text{if }\varphi_Z(g_i) = 1 \\
  \omega(g_i)^{\dagger} & \text{if }\varphi_Z(g_i) = 0\, ,
\end{cases}
\end{align}
where $\omega(X)(\rho) = X \rho X$. The frame operator inherits the $\epsilon$-dependence from $\phi_{\epsilon}$, and we write $\tilde S_{\epsilon} \equiv \tilde S(\phi_{\epsilon})$.

Let us first consider the case $\epsilon = 1$.
Since we can write the Z-basis measurement operator as $M = \sum_{z \in \FF_2^n}\ketbrar{\n Z_z}{\n Z_z} = (\ketbrar{\onen}{\onen} + \ketbrar{\normalize Z}{\normalize Z})^{\otimes n}$, the frame operator factorizes for the above local noise model:
\begin{align}
\tilde S_1 &= \frac{1}{|\Cl_1^{\times n}|} \sum_{g \in \Cl_1^{\times n}} \omega(g)^{\dagger} \sum_{z \in \FF_2^n} \ketbrar{\n Z_z}{\n Z_z} \omega(g)\Lambda(g) \\
& = \bigotimes_{i = 1}^n \left(\frac{1}{|\Cl_1|} \sum_{g_i \in \Cl_1} \omega(g_i)^{\dagger} \left(\ketbrar{\onen}{\onen} + \ketbrar{\normalize Z}{\normalize Z}\right) \omega(g_i)\Lambda_i(g_i) \right).
\end{align}
The action of a local Clifford operation $g_i$ can be written as $\omega(g_i)^{\dagger}\ketr{Z} = (-1)^{\varphi_Z(g_i)}\ketr{\Xi_Z(g_i)}$ with $\varphi_Z(g_i) \in \FF_2$. 
By inserting $\Lambda_i(g_i)$ we get
\begin{align}
\tilde S_1 & = \bigotimes_{i = 1}^n \left( \ketbrar{\onen}{\onen} + \frac{1}{|\Cl_1|} \sum_{g_i \in \Cl_1} (-1)^{\varphi_Z(g_i)}\ketbrar{\normalize\Xi_Z(g_i)}{\normalize Z} (-1)^{\varphi_Z(g_i)} \right) 
= \left( \ketbrar{\onen}{\onen} + \frac{1}{3} \sum_{a \in \FF_2^2\setminus\{0\}} \ketbrar{\sn_a}{\normalize Z} \right)^{\otimes n}\label{eq: Scaling proof 2}\,.
\end{align}
With this error model, each measurement is done in the Z-basis, and the sign cancels with the sign acquired in post-processing from the action of $\omega(g_i)^{\dagger}$. 
Since the ideal frame operator is given by $S = \tilde S_0 = \bigotimes_{i = 1}^n \left( \ketbrar{\onen}{\onen} + \frac{1}{3} \sum_{a \neq 0} \ketbrar{\sn_a}{\sn_a} \right)$, we find that $S^{-1}\tilde S = \left( \ketbrar{\onen}{\onen} + \sum_{a \neq 0} \ketbrar{\sn_a}{\n Z} \right)^{\otimes n}$. The expectation value of an observable $O$ on the initial state $E_0 = \left[\frac{1}{\sqrt{2}}\left(\n \ii + \n Z\right)\right]^{\otimes n}$ is consequently given by
\begin{align}
  \brar{O}S^{-1}\tilde S_1 \ketr{E_0} &= \frac{1}{\sqrt d}\brar{O} \left(\ketr{\n \ii} + \sum_{a \in \FF_2^2\setminus\{0\}}\ketr{\sn_a}\right)^{\otimes n} \\
  &= \frac{1}{\sqrt d} \sum_{a \in \FF_2^{2n}} \braketr{O}{\sn_a} \, .
  \label{eq:saturation-exp-value}
\end{align}
It remains now to compute $\tilde S_{\epsilon}$ for the implementation map $\phi_{\epsilon}(g) = (1-\epsilon)\omega(g) + \epsilon \omega(g)\Lambda(g)$. For this we note that 
\begin{align}
  \tilde S_{\epsilon} &= \EE_g [\omega^{\dagger}(g)M((1-\epsilon)\omega(g) + \epsilon \omega(g) \Lambda(g))] \\
   &= (1-\epsilon) \EE_g [\omega^{\dagger}(g)M\omega(g)] + \epsilon \EE_g [\omega^{\dagger}(g)M\omega(g)\Lambda(g))] \\
  &= (1-\epsilon) \tilde S_0 + \epsilon \tilde S_1.
\end{align}
Therefore the bias is given by $\bias{o}{O} = \abs{(1-\epsilon)\av{O} + \epsilon \sandwichr{O}{S^{-1}\tilde S_1}{E_0} - \av{O}} = \epsilon \abs{\sum_{a \in \FF_2^{2n}} \braketr{O}{\sn_a}/\sqrt{d} - \av{O}}$. \\
So far the calculation was independent of the observable, and we now consider the magic state $O = (\ketbra{H}{H})^{\otimes n}$. From the definition of $\ket{H}$ we can read off $\av{O} = |\braket{H}{E_0}|^{2n} = \frac{1}{d}$. Furthermore, one can show that $\ketbra{H}{H} = \frac{1}{\sqrt 2}\left[\n \ii + \frac{1}{\sqrt{2}}(\n X + \n Y)\right]$, and we get
\begin{align}
  \sandwichr{O}{S^{-1}\tilde S_1}{E_0} 
  = \left[\frac{1}{2} \sum_{a \in \FF_2^2} \braketr{\brar{\n \ii} + \tfrac{1}{\sqrt{2}}(\brar{\n X} + \brar{\n Y})}{\sn_a}\right]^{n}
  = \left[\frac{1+\sqrt{2}}{2}\right]^n.
\end{align}
Since the entries of $O = (\ketbra{H}{H})^{\otimes n}$ in the Pauli basis are positive, the stabilizer norm of $O$ is by \cref{eq:saturation-exp-value} exactly the just derived expression, $\stnorm{O} = \left[\frac{1+\sqrt{2}}{2}\right]^n \geq 2^{n/4}$. By putting everything together, we arrive at the desired result $\bias{o}{O} = \abs{\stnorm{O} - 1/d}\epsilon = \kappa \epsilon$.
\end{proof}

Whether the same error scaling with $\stnorm{O}$ can also occur for Clifford-based shadow estimation protocols other than uniform sampling from the local Clifford group remains open. 
To understand for instance the difference between local and global Cliffords, we can look at the above example in terms of average error channels. 
What allows the errors to accumulate in \cref{prop: 1-norm bound saturation} is the fact that for local noise, the local noise averages $\bar{\Lambda}_X, \bar{\Lambda}_Y, \bar{\Lambda}_Z$ are each taken over disjoint subsets over the local Clifford group and, thus, an error model exists such that they can be independently chosen. 
For uniform sampling from the global Clifford group this is not the case anymore. As shown in the next section, each of the $d^2 -1$ average noise channels $\bar \Lambda_a$ is an average over $|\Cl_n|/(d+1)$ 
channels $\Lambda(g)$, therefore the $\bar \Lambda_a$ cannot all be independent.

\section{Details on the spoofing of fidelity estimation via miscalibration} 
\label{app: Spoofing fidelity estimation}

We consider the family of pure states $\ketr{\rho_c}$ studied in \citet{seddon_quantifying_2021} defined by
\begin{equation}\label{eq: magic state family parametrization}
  \ketr{\rho_c} = \frac{1}{\sqrt{2}} \left(\ketr{\n \ii} + c\ketr{\n Y} + c\ketr{\n Z} + \sqrt{1 -2c^2}\ketr{\n X}\right).
\end{equation}
Here, $c \in [0,1/\sqrt{3}]$ such that $\rho_0 = \ketbra{+}{+}$ and $\rho_{1/\sqrt{3}} = \ketbra{F}{F}$, where $\ket{F}$ is called the facet state. The stabilizer norm of this state family can be easily determined using the parametrization in \cref{eq: magic state family parametrization}, resulting in 
\begin{equation}
  \stnorm{\rho_c} = (1+2c+\sqrt{1-2c^2})/2
\end{equation}
Furthermore, $\ket{F}$ maximizes the stabilizer norm among all single qubit pure states.
In the setting of \cref{fig: magic fidelity plot}, a shadow estimation protocol with Pauli basis measurements is implemented using the set of Clifford gates $\{g_0 = \ii, g_1 = \e^{\i \frac{\pi}{4} X}, g_3 = \e^{\i \frac{\pi}{4} Y}\}$, where in each round one of the gates is drawn at random and applied before measurement. 
We consider the following unitary noise model: 
\begin{equation}\label{eq:NoiseChannels}
\begin{aligned}
  \Lambda(g_0) &= \id, 
  \\
  \Lambda(g_1)[\rho] &= \e^{\i \frac{\delta}{2} X} \rho\, \e^{-\i \frac{\delta}{2} X}, 
  \\
  \Lambda(g_2)[\rho] &= \e^{-\i \frac{\delta}{2} Y} \rho\, \e^{\i \frac{\delta}{2} Y}\, ,
\end{aligned}
\end{equation}
i.e., a small overrotation along the $X$ axis and a small underrotation along the $Y$ axis on the Bloch sphere. The single qubit frame operator under this noise model is given as 
\begin{equation}\label{eq: Overrotation frame operator}
    \tilde S = \frac{1}{3} \sum_{g \in \{\ii, \e^{\i \frac{\pi}{4} X}, \e^{\i \frac{\pi}{4} Y}\}} \omega^{\dagger}(g)\left( \ketbrar{\n \ii}{\n \ii} + \ketbrar{\n Z}{\n Z}\right)\omega(g)\Lambda(g).
\end{equation}
It is straightforward to verify that 
\begin{equation}
\begin{split}
  \brar{Z} \omega(g_1) &= \brar{\e^{-\i \frac{\pi}{4} X} Z \e^{\i \frac{\pi}{4} X}} = - \brar{Y} , \\
  \brar{Z} \omega(g_2) &= \brar{\e^{-\i \frac{\pi}{4} Y} Z \e^{\i \frac{\pi}{4} Y}} = \brar{X}
  \end{split}
\end{equation}
and
\begin{equation}
\begin{split}
  \brar{Z} \omega(g) \Lambda(g) &= \brar{\e^{- \frac{\i}{2}(\frac{\pi}{2} + \delta) X} Z \e^{\frac{\i}{2} (\frac{\pi}{2} + \delta) X}} = - \cos(\delta) \brar{Y} - \sin(\delta) \brar{Z}, \\
  \brar{Z} \omega(g) \Lambda(g) &= \brar{\e^{-\frac{\i}{2} (\frac{\pi}{2} - \delta) Y} Z \e^{\frac{\i}{2} (\frac{\pi}{2} - \delta) Y}} = \cos(\delta) \brar{X} + \sin(\delta) \brar{Z}.
  \end{split}
\end{equation}
By plugging these expressions into \cref{eq: Overrotation frame operator}, we arrive at 
\begin{equation}\label{eq: Overrotation frame operator final}
    \tilde S = \frac{1}{3} \left(3 \ketbrar{\n \ii}{\n \ii} + \ketr{\n Y}(\cos(\delta)\brar{\n Y} + \sin(\delta) \brar{\n Z}) + \ketr{\n X}(\cos(\delta)\brar{\n X} + \sin(\delta) \brar{\n Z})\right) + \ketbrar{\n Z}{\n Z}.
\end{equation}
Assuming $\ketr{\rho_c}$ was prepared, we can calculate the expected fidelity of being in $\ketr{\rho_c}$ given the noisy frame operator $\tilde S$. A brief calculation reveals that
\begin{equation}
  \sandwichr{\rho_c}{S^{-1} \tilde S}{\rho_c} = \frac{1}{2} \left(1 + \sin(\delta) (c \sqrt{1 - 2c^2} + c^2) + \cos(\delta) (1 - c^2) + c^2\right).
\end{equation} 
For the highest stabilizer norm state in the family ($c = 1/\sqrt{3}$), we find $\sandwichr{\rho_c}{S^{-1} \tilde S}{\rho_c} = \frac{2 + \sqrt{2}\sin(\delta + \pi/4)}{3}$. 
Thus, the expected outcome for the estimated fidelity is larger than one for $\delta \in [0,\pi/2]$. If the implemented state is, for instance, affected by depolarizing noise $\ketr{\tilde \rho_c} = \frac{p}{\sqrt{2}} \ketr{\n \ii} + (1-p) \ketr{\rho_c}$ and hence, 
$\sandwichr{\rho_c}{S^{-1} \tilde S}{\tilde \rho_c} = \frac{p}{2} + (1-p) \sandwichr{\rho_c}{S^{-1} \tilde S}{\rho_c}$, the noise can be masked by an overestimated fidelity due to miscalibration, as shown in 
Figure~1 in the main text.

A similar behavior can be observed when randomly sampling from local Cliffords gates, and the numerical results are shown in \cref{fig: Local Clifford magic}. For this experimental setup, we compile each single qubit Clifford gate
with Bloch rotations $\{g_0 = \e^{\i \frac{\pi}{4} Z}, g_1 = \e^{\i \frac{\pi}{4} X}, g_3 = \e^{\i \frac{\pi}{4} Y}\}$, which we take to be affected by the noise channels 
\begin{equation}
\begin{aligned}
  \Lambda(g_0)[\rho] &= \rho, 
  \\
  \Lambda(g_1)[\rho] &= \e^{\i \frac{\delta}{2} X} \rho\, \e^{-\i \frac{\delta}{2} X}, 
  \\
  \Lambda(g_2)[\rho] &= \e^{-\i \frac{\delta}{2} Y} \rho\, \e^{\i \frac{\delta}{2} Y} \, .
\end{aligned}
\end{equation}
The bias of shadow estimation shown in \cref{fig: Local Clifford magic} is computed with a fixed calibration error $\delta$ for each instance of the protocol. If we compare the results to Figure 1 in the main text, we see that the overestimation problem for the fidelity is not as pronounced as for local Pauli basis measurements. This is due to the randomizing effect of using the whole Clifford group, which leads to less biased average noise channels. Since local Cliffords are compiled using several Bloch rotations per Clifford gate, the error per readout gate is in principle larger in this scenario, which is evident by the steeper decrease in fidelity for the $\stnorm{\psi}=1$ case (blue dashed line). \cref{fig: Local Clifford magic} also shows us a case where robust shadows increase the bias, as can be seen for instance by comparing the dashed red line in both plots. These results underline the main takeaways form \cref{thm: Error bound}, namely that unmitigated shadows are inherently stable under gate dependent noise, \emph{if} the stabilizer norm of the observable is controlled. 

\begin{figure}[ht]
\includegraphics{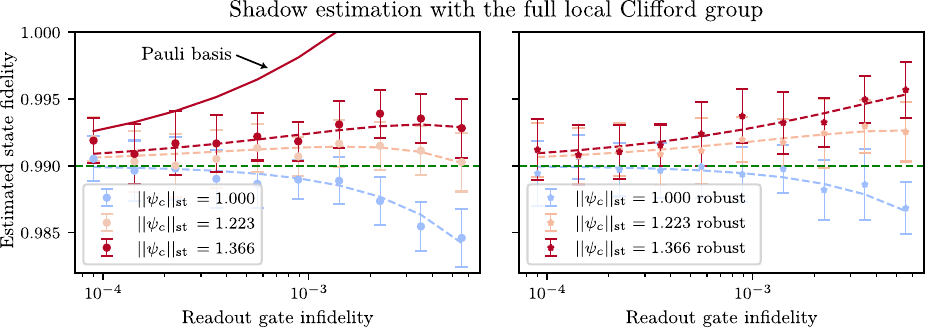}
\caption{
Estimated state fidelity determined by classical shadows using random local Clifford measurements, for 3 states with different stabilizer norms. On the \textbf{left} side, we show the result for unmitigated shadows, while on the \textbf{right} side we show the result for robust shadows. The green line indicates the true state fidelity for all tested states. Data points are the empirical mean of 20 shadow estimates,
and error bars indicate the variance. Each shadow estimate itself is obtained from $10^5$ shots. Dashed lines are the theoretical results, while the continuous red line in the left plot is the theoretical line for the Pauli basis measurements, which were discussed in the main text.
}
\label{fig: Local Clifford magic}
\end{figure}

\section{Explicit calculation of the noise average for the Clifford group} 
\label{app: Clifford average structure}
In the following,
we write $\CNOT_{i,j}$ for the controlled-NOT gate with control qubit $i$ and target qubit $j$.
More generally, given a 2-qubit unitary $U$, we write $U_{i,j}$ for its application on the ordered qubit pair $(i,j)$.
Moreover,
$\SWAP$ is the 2-qubit SWAP gate and $\HW_n = \langle \Pg \rangle$ denotes the $n$-qubit Pauli group.

\begin{lem} \label{lem: Z-Basis measurement decomposition}
  Let 
  $\CNOTcoset\coloneqq \bigcup_{i=1}^{n-1}\left\{\prod_{j = 1}^i U_{j+1,j}| U \in \{\SWAP, 
  \CNOT\}\right\} \cup \ii$.
  Then the following holds: 
  \begin{align}\label{eq: CX description of measurement operator}
  M = \sum_{z \in \FF_2^n}\ketbrar{\normalize Z_z}{\normalize Z_z} = \ketbrar{\onen}{\onen} + \sum_{g \in \CNOTcoset} \omega(g)^{\dagger} \ketbrar{\normalize Z_1}{\normalize Z_1} \omega(g)\, .
  \end{align}
\end{lem}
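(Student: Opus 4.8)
The plan is to peel off the unitary conjugations and reduce the right-hand side to a sum of rank-one Pauli projectors, then identify which diagonal Paulis actually appear. First I would use unitarity of $\omega(g)$ to rewrite each summand as $\omega(g)^{\dagger}\ketbrar{\n Z_1}{\n Z_1}\omega(g) = \ketbrar{\omega(g)^{\dagger}(\n Z_1)}{\omega(g)^{\dagger}(\n Z_1)}$, which follows from $\braketr{\n Z_1}{\omega(g)(C)} = \braketr{\omega(g)^{\dagger}(\n Z_1)}{C}$. Because the resulting object is a ketbra of an operator with itself, any global sign picked up under conjugation is irrelevant. Every element of $\CNOTcoset$ is a product of $\CNOT$ and $\SWAP$ gates, i.e.\ a linear-reversible (CNOT-type) circuit, so conjugation maps the diagonal Pauli $Z_1$ to another diagonal Pauli $\pm Z_{z(g)}$; in particular each summand equals $\ketbrar{\n Z_{z(g)}}{\n Z_{z(g)}}$ for some $z(g)\in\FF_2^n$.

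The heart of the argument is to show that $g\mapsto z(g)$ is a bijection from $\CNOTcoset$ onto $\FF_2^n\setminus\{0\}$. To this end I would track the conjugation action on the binary label: conjugation by $\SWAP_{k+1,k}$ swaps the entries $z_k\leftrightarrow z_{k+1}$, while conjugation by $\CNOT_{k+1,k}$ (control $k+1$, target $k$) leaves $z_k$ fixed and sends $z_{k+1}\mapsto z_{k+1}+z_k$. Starting from $z=e_1$ and applying the staircase $U_{2,1},U_{3,2},\dots,U_{i+1,i}$ in order, a short induction shows that after step $k$ the entry at position $k$ is frozen to $0$ (if that factor was a $\SWAP$) or to $1$ (if it was a $\CNOT$), while position $k+1$ becomes the new "front" carrying a $1$ and positions $>k+1$ stay $0$. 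Hence the element built from choices $\vec u\in\{0,1\}^i$ produces $z = (u_1,\dots,u_i,1,0,\dots,0)$, with the $1$ sitting in position $i+1$; the identity ($i=0$) gives $z=e_1$.

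From this normal form the bijection is immediate: every nonzero $z$ has a unique highest nonzero coordinate, say position $m$, with arbitrary entries below it, so setting $i=m-1$ and reading off $\vec u=(z_1,\dots,z_{m-1})$ recovers a unique $g\in\CNOTcoset$; a count confirms $|\CNOTcoset| = 1 + \sum_{i=1}^{n-1}2^i = 2^n-1 = |\FF_2^n\setminus\{0\}|$. Consequently the right-hand side equals $\ketbrar{\onen}{\onen} + \sum_{z\neq 0}\ketbrar{\n Z_z}{\n Z_z} = \sum_{z\in\FF_2^n}\ketbrar{\n Z_z}{\n Z_z} = M$, using $\onen = \n Z_0$. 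I expect the main obstacle to be purely bookkeeping: fixing the product order and the control/target orientation of each $U_{j+1,j}$ so that the staircase genuinely freezes one coordinate per step and advances the front by exactly one, rather than disturbing already-frozen coordinates. Getting these conventions consistent is precisely what makes the clean normal form $z=(u_1,\dots,u_i,1,0,\dots,0)$ hold, and hence what makes the counting bijection go through.
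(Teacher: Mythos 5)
Your proof is correct and follows essentially the same route as the paper's: identify how $\SWAP$ and $\CNOT$ conjugation act on the binary labels of diagonal Paulis, show the staircase circuits in $\CNOTcoset$ carry $Z_1$ onto every non-identity $Z_z$, and match the count $|\CNOTcoset| = 2^n-1$. The only difference is that you make explicit the normal-form induction $z = (u_1,\dots,u_i,1,0,\dots,0)$ and the resulting bijection, which the paper leaves as the informal claim that consecutive applications ``generate all $Z_z$'' backed by the cardinality check.
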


\begin{proof}
\newcommand{\bigT}{\mbox{\Large $T$}}
To see this we first note that SWAP and $\CNOT_{2,1}$ are given as
\begin{equation*}
\CNOT_{2,1} = 
\begin{pmatrix}
  1 & 0 & 0 & 0 \\
  0 & 0 & 0 & 1 \\
  0 & 0 & 1 & 0 \\
  0 & 1 & 0 & 0 
\end{pmatrix},
\quad 
\mathrm{SWAP} = 
\begin{pmatrix}
  1 & 0 & 0 & 0 \\
  0 & 0 & 1 & 0 \\
  0 & 1 & 0 & 0 \\
  0 & 0 & 0 & 1 
\end{pmatrix}.
\end{equation*}
Since $Z_{(1,0)} = \mathrm{diag}(1,1,-1,-1)$, $Z_{(0,1)} = \mathrm{diag}(1,-1,1,-1)$ and $Z_{(1,1)} = \mathrm{diag}(1,-1,-1,1)$ are all diagonal and the adjoint actions of $\mathrm{SWAP}$ and $\CNOT_{2,1}$ permute diagonal elements, one can quickly verify that $\omega(\mathrm{SWAP})\ketr{Z_{(1,0)}} =  \ketr{Z_{(0,1)}}$ 
and $\omega(\CNOT_{2,1}) \ketr{Z_{(1,0)}} = \ketr{Z_{(1,1)}}$. 
Consecutively applying either $\CNOT_{2,1}$ or SWAP along the qubit chain to $Z_1$ generates all $Z_z$ with $z \in \FF_2^n\backslash\{\vec{0}, \vec{e}_1\}$. 
One can also see that $|\CNOTcoset| = 1 + \sum_{i = 1}^{n-1}2^i = 2^n - 1$, which is the number of non-identity terms in \cref{eq: CX description of measurement operator}.
\end{proof}

\begin{lem}\label{lem: e_1 stabilizer}
The stabilizer of $\ketbrar{Z_1}{Z_1}$ under the action $g\mapsto\omega^{\dagger}(g) (\argdot) \omega(g)$ with $g \in \Cl_n$ is given by 
$\mathrm{St}_{e_1} \equiv \mathrm{St}_{Z_1} = \Cl_{n-1} \cdot \mathrm{\HW_n} \cdot \langle S_1, \{\CZ_{1,i}, \CNOT_{1,i}\}_{i \in \{2, \dots, n\}}\rangle$.
\end{lem}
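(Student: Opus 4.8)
The plan is to first reduce the stabilizer condition to a statement about the single operator $Z_1$. Composing the two unitary channels gives $\omega^\dagger(g)\ketbrar{Z_1}{Z_1}\omega(g) = \ketbrar{\omega^\dagger(g)(Z_1)}{\omega^\dagger(g)(Z_1)}$ with $\omega^\dagger(g)(Z_1) = g^\dagger Z_1 g$. Since a Clifford sends $Z_1$ to a signed Pauli $\pm\s_a$, the sign cancels between bra and ket, so $\ketbrar{\pm\s_a}{\pm\s_a} = \ketbrar{\s_a}{\s_a}$ and $g \in \mathrm{St}_{Z_1}$ if and only if $g^\dagger Z_1 g = \pm Z_1$; this set is visibly closed under inverses, so equivalently $g Z_1 g^\dagger = \pm Z_1$. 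Thus the task is to identify the group of Cliffords that fix $Z_1$ up to sign under conjugation.

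For the inclusion $\supseteq$ I would check each listed generator directly: $\Cl_{n-1}$ acts only on qubits $2,\dots,n$ and hence commutes with $Z_1$; every Pauli in $\HW_n$ conjugates $Z_1$ to $\pm Z_1$; the phase gate obeys $S_1 Z_1 S_1^\dagger = Z_1$; and both $\CZ_{1,i}$ and $\CNOT_{1,i}$ (with qubit $1$ as control) leave $Z_1$ fixed, because a $Z$ on the control wire commutes with them. Hence the generated group lies in $\mathrm{St}_{Z_1}$.

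The harder inclusion $\subseteq$ I would prove constructively by ``peeling off'' the first qubit. Given $g$ with $g Z_1 g^\dagger = \pm Z_1$, the image $P \coloneqq g X_1 g^\dagger$ (writing $X_1$ for Pauli $X$ on qubit $1$) must anticommute with $Z_1$, i.e.\ its qubit-$1$ factor is $X$ or $Y$. The key claim is that $\langle S_1, \CZ_{1,i}, \CNOT_{1,i}\rangle$ together with $\HW_n$ acts transitively by conjugation on this set of admissible images while fixing $Z_1$: $\CNOT_{1,i}$ appends $X_i$, $\CZ_{1,i}$ appends $Z_i$, $S_1$ toggles the qubit-$1$ factor between $X$ and $Y$, and Paulis fix the remaining signs. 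Choosing $h$ in this subgroup with $h X_1 h^\dagger = P$, the Clifford $g_1 \coloneqq h^{-1} g$ then fixes both $X_1$ and $Z_1$ up to signs, which a further Pauli removes; a Clifford commuting with the full qubit-$1$ algebra must factor as $\id_1\otimes c$ with $c\in\Cl_{n-1}$. Rearranging $g = h\,(\id_1\otimes c)$ into the stated product $\Cl_{n-1}\cdot\HW_n\cdot\langle\dots\rangle$, using normality of $\HW_n$ in $\Cl_n$, completes the argument.

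As an independent cross-check that no generators are missing, I would run an orbit--stabilizer count: Cliffords act transitively on the $4^n-1$ nonzero signed-Pauli classes, so $|\mathrm{St}_{Z_1}| = |\Cl_n|/(4^n-1)$, which modulo $\HW_n$ matches $2^{2n-1}\,|\mathrm{Sp}(2(n-1),\FF_2)|$, where $2^{2n-1}$ is exactly the number of admissible images of $X_1$ and $|\mathrm{Sp}(2(n-1),\FF_2)|$ counts the $\Cl_{n-1}$ freedom. I expect the main obstacle to be the $\subseteq$ step: establishing the transitivity claim cleanly and tracking the signs and phases carefully enough that the final factorization is precisely $\Cl_{n-1}\cdot\HW_n\cdot\langle S_1,\{\CZ_{1,i},\CNOT_{1,i}\}\rangle$, rather than merely a group of the correct order.
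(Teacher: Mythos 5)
Your proposal is correct, and for the hard inclusion it takes a genuinely different route from the paper. The paper proves $\supseteq$ exactly as you do (direct verification of each generator) but then establishes equality purely by counting: the orbit of $\ketbrar{Z_1}{Z_1}$ consists of all $4^n-1$ non-identity Pauli projectors, so $|\mathrm{St}_{e_1}| = |\Cl_n|/(4^n-1)$, and the order of the product set $\Cl_{n-1}\cdot\HW_n\cdot\langle S_1,\CZ_{1,i},\CNOT_{1,i}\rangle$ is computed factor by factor (quotienting the outer factors by $\HW_{n-1}$ and $\langle Z_1\rangle$ to control overlaps) and shown to match. What you present only as a cross-check is therefore the paper's entire argument for the reverse inclusion; your primary argument --- tracking the image of $X_1$, using transitivity of the controlled-Pauli-plus-$S_1$ subgroup on the $2^{2n-1}$ admissible images, and reducing to an element of the commutant of the qubit-$1$ algebra --- is constructive and yields an explicit factorization of any stabilizer element, which the counting proof does not. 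The price is that you must justify the transitivity claim and, at the end, the reordering of factors: your construction naturally produces $g\in\langle S_1,\CZ_{1,i},\CNOT_{1,i}\rangle\cdot\HW_n\cdot\Cl_{n-1}$, the reverse of the stated order, so you need the observation that conjugating a controlled-Pauli gate by $\id_1\otimes c$ with $c\in\Cl_{n-1}$ yields another controlled-Pauli gate up to Pauli factors (together with normality of $\HW_n$) to conclude the set product is order-independent. Your orbit--stabilizer figure $2^{2n-1}\,|\mathrm{Sp}(2(n-1),\FF_2)|$ for the stabilizer modulo $\HW_n$ is consistent with the standard convention $\Cl_n/\HW_n\cong \mathrm{Sp}(2n,\FF_2)$; the paper's explicit formula for $|\Cl_n|$ carries an extra phase factor, so if you run the cross-check against the paper's numbers you should track that factor of $2$, but it cancels on both sides and does not affect the conclusion.
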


\begin{proof}
Since $Z_1$ is just the identity on qubits $2, \dots, n$, any adjoint action by a unitary acting only on qubits $2, \dots, n$ leaves $Z_1$ invariant. 
This holds, in particular, for $\id \otimes \omega(g)$ with $g \in \Cl_{n-1}$. 
For $g \in \HW_n$ we know that $\omega(g)^{\dagger}\ketr{\s_a} = \pm \ketr{\s_a}$ for any $a \in \FF_2^{2n}$ and, thus, $\omega^{\dagger}(g)\ketbrar{\s_a}{\s_a} \omega(g) = \ketbrar{\s_a}{\s_a}$.
The remaining Clifford group elements that can stabilize $\ketbrar{Z_1}{Z_1}$ act only on qubit 1 or between qubit 1 and qubits $2, \dots n$. 
Since $Z_1$ is diagonal in the computational basis, this includes all diagonal Cliffords which we did not already count in $\Cl_{n-1}$, namely $\langle S_1, \CZ_{1,i > 1} \rangle$. 
In addition, identical diagonal elements of $Z_1$ can be permuted and one can easily verify that, for instance, $\omega(\CNOT)\ketr{Z_{(1,0)}} = \ketr{Z_{(1,0)}}$. 

To show that this is indeed the complete stabilizer, note that the orbit of $\ketbrar{Z_1}{Z_1}$ corresponds to all non-identity Pauli strings and thus has size $4^n-1$.
Hence, we have $|\Cl_n| / |\mathrm{St}_{e_1}| = 4^n-1 $, where
\begin{align}
|\Cl_n| &= 2^{2n+3} 2^{n^2} \prod_{i = 1}^n (4^i -1), &  |\HW_n| &= 2^{2n+2}\, .
\end{align}
The order of the above defined group $\Cl_{n-1} \cdot \HW_n \cdot \langle S_1, \{\CZ_{1,i}, \CNOT_{1,i}\}_{i \in \{2, \dots, n\}}\rangle$ can be computed by observing that we can simply compute the cardinalities of the first and last factor up to Pauli operators, and multiply those by $|\HW_n|$. 
Since CNOT normalizes diagonal Clifford unitaries, $\langle S_1, \CZ_{1,i}, \CNOT_{1,i}\rangle = \langle S_1, \CZ_{1,i} \rangle \rtimes \langle \CNOT_{1,i}\rangle $ is a semidirect product and $\langle S_1, \CZ_{1,i} \rangle$ is Abelian. 
We have $2^{n-1}$ possibilities of applying $\CZ_{1,i}$ and $\CNOT_{1,i}$ and the S-gate has order 4, hence $|\langle S_1, \CZ_{1,i}, \CNOT_{1,i}\rangle / \langle Z_1 \rangle| = 2 \cdot 2^{n-1} 2^{n-1} = 2^{2n-1}$. 
Putting everything together, we find
\begin{align}
|\Cl_{n-1} \cdot \HW_n \cdot \langle S_1, \{\CZ_{1,i}, \CNOT_{1,i}\}_{i \in \{2, \dots, n\}}\rangle| &= |\Cl_{n-1}/\HW_{n-1}| \times |\HW_n| \times |\langle S_1, \CZ_{1,i}, \CNOT_{1,i}\rangle / \langle Z_1 \rangle | \\
&= 2^{(n-1)^2+1} \prod_{i = 1}^{n-1}(4^i - 1) \times 2^{2n+2} \times 2^{2n-1}\\
&= 2^{2n+3} 2^{n^2} \frac{1}{4^n - 1} \prod_{i = 1}^{n} (4^i - 1) \\
&= \frac{|\Cl_n|}{4^n -1}\, ,
\end{align}
which shows the claim.
\end{proof}

Let $g_a$ be any element of $\Cl_n$ that satisfies $\omega^{\dagger}(g_a)\ketbrar{Z_1}{Z_1}\omega(g_a) = \ketbrar{\s_a}{\s_a}$ for $a\neq \ii$. The stabilizer of $\ketbrar{\s_a}{\s_a}$ can then simply be written as $\mathrm{St}_a = g_a \mathrm{St}_{e_1}$. 
Moreover, the 
stabilizers $g_a \mathrm{St}_{e_1}$ are exactly the left cosets of $\mathrm{St}_{e_1}$ and, thus, $\Cl_n = \bigcup_{a \neq 0} g_a \mathrm{St}_{e_1}$, where $g_a \mathrm{St}_{e_1}$ and $g_{a'} \mathrm{St}_{e_1}$ are pairwise disjoint for $a \neq a'$.

\begin{prop} \label{prop: Global Clifford average structure}
  The frame operator for shadow estimation with uniform sampling from the $n$-qubit Clifford group is given by 
\begin{align}
\tilde S = \ketbrar{\onen}{\onen}  +  \frac{1}{d+1}  \sum_{a \neq 0} \ketbrar{\sn_a}{\sn_a} \EE_{h \in \CNOTcoset}\, \EE_{h' \in \mathrm{St}_{e_1}} \Lambda(h^{-1} h'g_a), 
\end{align}
where $g_a$ can be any element of $\Cl_n$ satisfying $\omega^{\dagger}(g_a)\ketbrar{Z_1}{Z_1}\omega(g_a) = \ketbrar{\s_a}{\s_a}$.
\end{prop}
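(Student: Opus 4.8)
The plan is to specialize the general formula of \Cref{lem: Noise averaging}, namely $\tilde S = \sum_{a} s_a \ketbrar{\sn_a}{\sn_a}\bar\Lambda_a$ with uniform weight $p(g) = 1/\abs{\Cl_n}$, and to reorganize each averaging set $\bigcup_{z}\Xi_z^{-1}(\s_a)$ into the product form $\{h^{-1}h'g_a\}$ using the decompositions from \Cref{lem: Z-Basis measurement decomposition,lem: e_1 stabilizer}. First I would dispose of the $a=0$ term: since $\Xi_z^{-1}(\ii)=\emptyset$ for $z\neq0$ and $\Xi_0^{-1}(\ii)=\Cl_n$, one has $s_0=1$, and trace preservation gives $\brar{\onen}\bar\Lambda_0=\brar{\onen}$, so this term contributes exactly $\ketbrar{\onen}{\onen}$, as already noted below \Cref{lem: Noise averaging}. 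For $a\neq0$ the weight is $s_a=1/(d+1)$, so it only remains to identify $\bar\Lambda_a$.

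The core of the argument is a group-theoretic reparametrization of the preimages. For each $z \neq 0$, \Cref{lem: Z-Basis measurement decomposition} supplies a unique $h \in \CNOTcoset$ with $\ketbrar{\n Z_z}{\n Z_z} = \omega(h)^{\dagger} \ketbrar{\n Z_1}{\n Z_1}\omega(h)$, and $h \mapsto z$ is a bijection onto $\FF_2^n\setminus\{0\}$. Using that $\omega$ is a homomorphism, so that $\omega(g)^{\dagger}\omega(h)^{\dagger} = \omega(hg)^{\dagger}$ and $\omega(h)\omega(g)=\omega(hg)$, I would rewrite
\[
\omega(g)^{\dagger}\ketbrar{Z_z}{Z_z}\omega(g) = \omega(hg)^{\dagger}\ketbrar{Z_1}{Z_1}\omega(hg).
\]
Hence $g \in \Xi_z^{-1}(\s_a)$, i.e.\ the left side equals $\ketbrar{\s_a}{\s_a}$, holds iff $hg$ lies in the preimage of $\ketbrar{\s_a}{\s_a}$ under $k \mapsto \omega(k)^{\dagger}\ketbrar{Z_1}{Z_1}\omega(k)$. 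By the coset analysis around \Cref{lem: e_1 stabilizer}, this preimage is a coset of $\mathrm{St}_{e_1}$, namely the right coset $\mathrm{St}_{e_1} g_a$, with $g_a$ appearing on the right exactly as in the target expression. Therefore $\Xi_z^{-1}(\s_a) = h^{-1}\mathrm{St}_{e_1}g_a$.

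Next I would check disjointness and carry out the count. Distinct $z,z'$ yield disjoint $\Xi_z^{-1}(\s_a)$ and $\Xi_{z'}^{-1}(\s_a)$, since a common $g$ would force $g^{\dagger} Z_z g = \pm g^{\dagger} Z_{z'} g$ and hence $Z_z = \pm Z_{z'}$; thus $\bigcup_{z\neq0}\Xi_z^{-1}(\s_a) = \{h^{-1}h'g_a : h\in\CNOTcoset,\ h'\in\mathrm{St}_{e_1}\}$ with each element arising exactly once. The cardinality is $\abs{\CNOTcoset}\,\abs{\mathrm{St}_{e_1}} = (2^n-1)\cdot\abs{\Cl_n}/(4^n-1) = \abs{\Cl_n}/(d+1) = s_a\abs{\Cl_n}$, confirming that the weights match. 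Substituting $p(g)=1/\abs{\Cl_n}$ into $\bar\Lambda_a = s_a^{-1}\sum_z\sum_{g\in\Xi_z^{-1}(\s_a)} p(g)\Lambda(g)$ then yields $\bar\Lambda_a = \EE_{h\in\CNOTcoset}\EE_{h'\in\mathrm{St}_{e_1}}\Lambda(h^{-1}h'g_a)$, and assembling the $a=0$ and $a\neq0$ contributions gives the claim.

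I expect the main obstacle to be bookkeeping the coset sides and the sign ambiguities cleanly. One must track that the homomorphism property sends the $Z_z$-preimage to the left translate $h^{-1}(\argdot)$ of the $Z_1$-preimage, that the $Z_1$-preimage is the \emph{right} coset $\mathrm{St}_{e_1}g_a$ (which is what produces the $h'g_a$ ordering in $\Lambda(h^{-1}h'g_a)$), and that all $\pm$ signs are immaterial because only the projectors $\ketbrar{\sn_a}{\sn_a}$, not the signed Paulis, enter $\tilde S$. The counting identity $(2^n-1)\abs{\mathrm{St}_{e_1}} = \abs{\Cl_n}/(d+1)$ then serves as the consistency check that the reindexing $(h,h')\mapsto h^{-1}h'g_a$ is a genuine bijection onto the averaging set.
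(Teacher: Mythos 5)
Your proof is correct and takes essentially the same route as the paper's: the paper substitutes the decomposition of $M$ from \cref{lem: Z-Basis measurement decomposition} into the definition of $\tilde S$, reindexes $g\mapsto h^{-1}g$, and splits $\Cl_n$ into cosets of $\mathrm{St}_{e_1}$, which is exactly your identification $\Xi_z^{-1}(\s_a)=h^{-1}\mathrm{St}_{e_1}g_a$ read off from the general form in \cref{lem: Noise averaging}. The paper itself records this equivalence (together with $s_a=1/(d+1)$) in the remark immediately following its proof, so the two arguments coincide up to the order of presentation.
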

\begin{proof}
With the use of \Cref{lem: Z-Basis measurement decomposition} and \Cref{lem: e_1 stabilizer} we can successively rewrite the frame operator as follows:
\begin{align}
\tilde S &= \frac{1}{|\Cl_n|} \sum_{g \in \Cl_n} \omega^{\dagger}(g) \sum_{z \in \FF_2^n} \ketbrar{\normalize Z_z}{\normalize Z_z} \omega(g) \Lambda(g) \\
&= \ketbrar{\onen}{\onen} + \frac{1}{|\Cl_n|} \sum_{h \in \CNOTcoset} \sum_{g \in \Cl_n} \omega(hg)^{\dagger} \ketbrar{\normalize Z_1}{\normalize Z_1} \omega(hg) \Lambda(g) \\
&= \ketbrar{\onen}{\onen} + \frac{1}{|\Cl_n|} \sum_{g \in \Cl_n} \omega(g)^{\dagger} \ketbrar{\normalize Z_1}{\normalize Z_1} \omega(g)  \sum_{h \in \CNOTcoset} \Lambda(h^{-1}g) \\
&= \ketbrar{\onen}{\onen} + \frac{1}{|\Cl_n|} \sum_{g \in \Cl_n / \mathrm{St}_{e_1}} \omega(g)^{\dagger} \ketbrar{\normalize Z_1}{\normalize Z_1} \omega(g)  \sum_{h \in \CNOTcoset, h' \in \mathrm{St}_{e_1}} \Lambda(h^{-1} h'g) \\
&= \ketbrar{\onen}{\onen} + \frac{1}{|\Cl_n|} \sum_{a\neq\ii} \ketbrar{\sn_a}{\sn_a} \sum_{h \in \CNOTcoset, h' \in \mathrm{St}_{e_1}} \Lambda(h^{-1} h' g_a). 
\end{align}
Since $|\mathrm{St}_{e_1}| = \frac{|\Cl_n|}{4^n -1}$ and $|\CNOTcoset| = d-1$, we can rewrite the last line in terms of the channel averages as
\begin{align}
\frac{1}{|\Cl_n|} \sum_{h \in \CNOTcoset, h' \in \mathrm{St}_{e_1}} \Lambda(h^{-1} h' g_a) 
&= \frac{1}{(d+1)(d-1)|\mathrm{St}_{e_1}|} \sum_{h \in \CNOTcoset, h' \in \mathrm{St}_{e_1}} \Lambda(h^{-1} h' g_a) \\
&= \frac{1}{d+1} \EE_{h \in \CNOTcoset}\, \EE_{h' \in \mathrm{St}_{e_1}} \Lambda(h^{-1} h'g_a)\, .
\end{align}
\end{proof}
This result ties back to the general form derived in \Cref{lem: Noise averaging} via $\Xi_{b}^{-1}(a) = h_{b}^{-1}\mathrm{St}_{e_1}g_a$ and $s_a = \frac{1}{d+1}$. 
We will now turn to the local Clifford group and determine the factors $s_a$, as well as the compositions of $\bar \Lambda_a$. The result will take a simpler form for $\emph{local noise}$, which we define as noise that factorizes as $\Lambda(g) = \bigotimes_{i = 1}^n \Lambda^{(i)}(g_i)$ on all $g \in \Cl_1^{\times n}$. We also define the support of $a \in \FF_2^{2n}$ as $\abs{\supp(a)} = |\{i \in [n]: a_i \neq 0\}|$.

\begin{prop}\label{prop:local Cliffords}
Let $G(a) \subseteq \mathrm{Cl}_1$ be defined by
\begin{align}
G(a) = 
\begin{cases}
\mathrm{Cl}_1 & a = 0 \\
\mathrm{St}(Z)g_a & a \in (\FF_2^{2*})\, .
\end{cases}
\end{align}
Then the frame operator for the $n$-qubit local Clifford group is given by
\begin{align}
\tilde S = \sum_{\vec a \in \FF_2^{2n}} \frac{1}{3^{|\supp(\vec a)|}} \ketbrar{\sn_a}{\sn_a} \bar \Lambda_a\, ,
\end{align}
where $\bar \Lambda_a = \EE_{g_1 \in G(a_1)} \cdots \EE_{g_n \in G(a_n)} \Lambda(g_1, \dots, g_n)$ for global noise and 
$
\bar \Lambda_a = \EE_{g_1 \in G(a_1)} \Lambda^{(1)}(g_1) \otimes \cdots \otimes \EE_{g_n \in G(a_n)} \Lambda^{(n)}(g_n) 
= \bigotimes_{i = 1}^n \bar \Lambda_{a_i}
$ 
for local noise.
\end{prop}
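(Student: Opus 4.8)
The plan is to reduce everything to a single-qubit computation and then exploit the complete tensor-product factorization of both the gate group $\Cl_1^{\times n}$ and the measure-and-prepare channel $M = M_1^{\otimes n}$ with $M_1 = \ketbrar{\onen}{\onen} + \ketbrar{\normalize Z}{\normalize Z}$. This is the local analogue of \cref{prop: Global Clifford average structure}, but it is considerably simpler: for a single qubit there is only one nontrivial diagonal Pauli, so no analogue of the $\CNOTcoset$ chain from \cref{lem: Z-Basis measurement decomposition} is needed.

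First I would settle the case $n=1$. The single-qubit frame operator is $\tilde S_1 = \frac{1}{|\Cl_1|}\sum_{g \in \Cl_1}\omega(g)^\dagger M_1\,\omega(g)\Lambda(g)$. The identity component is fixed by every $\omega(g)$, so it contributes $\ketbrar{\onen}{\onen}$ times the full average $\EE_{g \in \Cl_1}\Lambda(g)$, matching $G(0)=\Cl_1$. For the $Z$ component the signs $(-1)^{\varphi_Z(g)}$ cancel exactly as in \cref{lem: Noise averaging}, leaving $\ketbrar{\normalize \Xi_Z(g)}{\normalize \Xi_Z(g)}$; grouping gates by the Pauli label they produce gives $\sum_{a \neq 0}\ketbrar{\sn_a}{\sn_a}\sum_{g \in \Xi_Z^{-1}(\s_a)}\Lambda(g)/|\Cl_1|$. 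The orbit of $\ketbrar{Z}{Z}$ under conjugation is $\{\ketbrar{X}{X},\ketbrar{Y}{Y},\ketbrar{Z}{Z}\}$, of size $3$, so orbit-stabilizer yields $|\mathrm{St}(Z)| = |\Cl_1|/3 = 8$ and the preimage $\Xi_Z^{-1}(\s_a) = \mathrm{St}(Z)g_a = G(a)$ has $8$ elements. Hence each nontrivial term carries the prefactor $8/24 = 1/3$ and $\bar\Lambda_a = \EE_{g \in G(a)}\Lambda(g)$.

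Next I would lift to $n$ qubits by inserting the factorizations of $M$ and of the uniform measure on $\Cl_1^{\times n}$ and expanding $\bigotimes_{i=1}^n \omega(g_i)^\dagger M_1 \omega(g_i)$. Each tensor factor splits into an identity piece (any $g_i$) and a $Z$ piece ($g_i$ constrained to $\mathrm{St}(Z)g_{a_i}$), so regrouping the product over a global label $\vec a \in \FF_2^{2n}$ gives $\tilde S = \sum_{\vec a}\ketbrar{\sn_a}{\sn_a}\,\frac{1}{24^n}\sum_{g_i \in G(a_i)}\Lambda(g_1,\dots,g_n)$. The normalization then follows from the cardinality count $\prod_{i=1}^n |G(a_i)| = 24^{\,n - |\supp(a)|}\,8^{\,|\supp(a)|}$, which upon dividing by $24^n$ produces exactly $(8/24)^{|\supp(a)|} = 3^{-|\supp(a)|}$ and identifies the nested average $\bar\Lambda_a = \EE_{g_1 \in G(a_1)}\cdots\EE_{g_n \in G(a_n)}\Lambda(g_1,\dots,g_n)$. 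For local noise $\Lambda = \bigotimes_i \Lambda^{(i)}(g_i)$, each single-qubit expectation touches only its own tensor factor, so the nested average factorizes into $\bigotimes_i \EE_{g_i \in G(a_i)}\Lambda^{(i)}(g_i) = \bigotimes_i \bar\Lambda_{a_i}$.

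The computation is essentially bookkeeping and I expect no deep obstacle; the two points requiring care are the sign cancellation in the $Z$ component (which I would import from \cref{lem: Noise averaging} rather than redo) and the index alignment when re-summing the expanded tensor product over $\vec a$, making sure the identity-labelled factors are correctly paired with the unconstrained average over $\Cl_1$. Alternatively, one can bypass the explicit expansion entirely by specializing \cref{lem: Noise averaging} to uniform sampling from $\Cl_1^{\times n}$ and merely evaluating $s_a = \sum_z \sum_{g \in \Xi_z^{-1}(\s_a)} p(g) = 3^{-|\supp(a)|}$ together with the averaging sets $G(a_i)$; the single-qubit orbit-stabilizer count is then the only genuinely new ingredient.
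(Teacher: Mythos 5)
Your proposal is correct and follows essentially the same route as the paper: both rely on the tensor factorization $M = (\ketbrar{\onen}{\onen}+\ketbrar{\normalize Z}{\normalize Z})^{\otimes n}$, group gates by the preimages $\Xi_Z^{-1}(\s_a)=\mathrm{St}(Z)g_a$, and obtain the prefactor $3^{-|\supp(\vec a)|}$ from the coset count $|G(a)|/|\Cl_1|=1/3$, with the local-noise factorization as the final step. The only cosmetic difference is that the paper spells out the $n=2$ case and appeals to an evident generalization, whereas you do $n=1$ and tensor up; your closing remark that one may simply specialize \cref{lem: Noise averaging} and compute $s_a$ is in fact exactly how the paper frames its own argument.
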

\begin{proof}
Since the local Clifford group also satisfies the conditions in \Cref{lem: Noise averaging} it remains to show that the averages $\bar \Lambda_a$ take the above form and that the coefficients are $s_a = \frac{1}{3^{|\supp(a)|}}$. We will now explicitly proof the case $n = 2$, from which the result for an arbitrary system size can be straightforwardly generalized. In this case $M = \left(\ketbrar{\onen}{\onen} + \ketbrar{\normalize Z}{\normalize Z}\right)^{\otimes 2} = \sum_{z_1, z_2 \in \FF_2} \ketbrar{\normalize Z_{z_1}}{\normalize Z_{z_1}} \otimes \ketbrar{\normalize Z_{z_2}}{\normalize Z_{z_2}}$.
For $\Cl_1 \times \Cl_1$, the product representation $\omega(g_1, g_2) = \omega(g_1)\otimes \omega(g_2)$, and implementation map $\phi(g_1,g_2) = (\omega(g_1)\otimes \omega(g_2))\Lambda(g_1,g_2)$, the frame operator becomes

\begin{align}\label{Eq: Avg_noise}
\tilde S &= \frac{1}{|\Cl_1|^2} \sum_{g_1,g_2 \in \Cl_1} \sum_{z_1, z_2 \in \FF_2} \left(\omega(g_1)^{\dagger} \ketbrar{\normalize Z_{z_1}}{\normalize Z_{z_1}} \omega(g_1) \otimes \omega(g_2)^{\dagger} \ketbrar{\normalize Z_{z_2}}{\normalize Z_{z_2}} \omega(g_2)\right) \Lambda(g_1,g_2) \\
&= \frac{1}{|\Cl_1|^2} \sum_{g_2 \in \Cl_1} \sum_{a_1 \in \ZZ_4} \sum_{z_2 \in \FF_2} \left(\ketbrar{\sn_{a_1}}{\sn_{a_1}} \otimes  \omega(g_2)^{\dagger} \ketbrar{\normalize Z_{z_2}}{\normalize Z_{z_2}} \omega(g_2)\right) \sum_{z_1 \in \FF_2} \sum_{g_1 \in \Xi^{-1}_{z_1}(a_1)} \Lambda(g_1,g_2) \\
&= \frac{1}{|\Cl_1|^2} \sum_{a_1 \in \ZZ_4} \ketbrar{\sn_{a_1}}{\sn_{a_1}} \otimes \sum_{a_2 \in \ZZ_4} \ketbrar{\sn_{a_2}}{\sn_{a_2}} \sum_{z_1, z_2 \in \FF_2} \sum_{g_1 \in \Xi^{-1}_{z_1}(a_1)} \sum_{g_2 \in \Xi^{-1}_{z_2}(a_2)} \Lambda(g_1,g_2).
\end{align}
Since $\omega(g)^{\dagger} \ketbrar{\one}{\one}\omega(g) = \1$ for all $g \in \mathrm{Cl}_1$ and $\omega(g)^{\dagger} \ketbrar{\s_a}{\s_a}\omega(g) \neq 1$ for all $a \neq 0$ and $g \in \mathrm{Cl}_1$, we know that $\Xi^{-1}_{0}(\ii) = \mathrm{Cl}_n$, $\Xi^{-1}_{0}(\sigma_a) = \emptyset$ as well as $\Xi^{-1}_{a}(\ii) = \emptyset$. Moreover, if we apply \Cref{prop: Global Clifford average structure} to the case $n=1$, we see that $\CNOTcoset = \{\ii\}$ and $\Xi^{-1}_1(\sigma_a) = \mathrm{St}_1 g_a$. The cosets $G(a \neq 0) = \mathrm{St}_1 g_a$ are disjoint for $a \in \{X,Y,Z\}$ and of order $\abs{\mathrm{Cl}_1}/3$. We then get
\begin{align}
\tilde S & = \sum_{a_1, a_2 \in \ZZ_4} \frac{\abs{G(a_1)}}{\Cl_1} \frac{\abs{G(a_2)}}{\Cl_1} \ketr{\sn_{a_1}} \brar{\sn_{a_1}} \otimes \ketr{\sn_{a_2}} \brar{\sn_{a_2}} \EE_{g_1 \in G(a_1)}\, \EE_{g_2 \in G(a_2)} \Lambda(g_1,g_2) \\
& = \sum_{a_1, a_2 \in \ZZ_4} \frac{1}{3^{|\supp(\vec a)|}} \ketr{\sn_{a_1}} \brar{\sn_{a_1}} \otimes \ketr{\sn_{a_2}} \brar{\sn_{a_2}}  \EE_{g_1 \in G(a_1)} \, \EE_{g_2 \in G(a_2)} \Lambda(g_1,g_2).
\end{align}
The last line follows from 
\begin{align}
\frac{\abs{G(a)}}{\abs{\mathrm{Cl}_1}} = 
\begin{cases}
1 & a = 0 \\
1/3 & a \in (\FF_2^{2*}), 
\end{cases}
\end{align}
which we can write as $\frac{\abs{G(a_1)}}{\abs{\mathrm{Cl}_1}} = 3^{-\abs{\supp(a_1)}}$ whereafter $\frac{\abs{G(a_1)}\abs{G(a_2)}}{\abs{\mathrm{Cl}_1}^2} = 3^{-\abs{\supp(\vec a)}}$. For local noise, we get 
\begin{align}
\EE_{g_1 \in G(a_1)} \, \EE_{g_2 \in G(a_2)} \Lambda(g_1,g_2) = \EE_{g_1 \in G(a_1)}  \Lambda^{(1)}(g_1) \otimes \EE_{g_2 \in G(a_2)} \Lambda^{(2)}(g_2) = \bar \Lambda_{a_1} \otimes \bar \Lambda_{a_2}.
\end{align}
The $n$-qubit local Clifford group result then follows.
\end{proof}

A different locality structure of the noise channels $\Lambda(g)$ (other than fully local noise) will lead to the corresponding locality structure on the average noise channels. For instance, if all noise channels factorize along a bipartition of the set of qubits, the average noise channels will inherit this factorization.

\section{A norm inequality for Pauli channels} \label{app: pauli-channels}
In the following we consider Pauli channels that act as $\Lambda(\rho) = \sum_{b \in \FF_2^{2n}} p_b \s_b \rho \s_b$, with $p_b \in [0,1]$ and $\sum_b p_b = 1$. The corresponding superoperators are known to be diagonal in the Pauli basis and to have eigenvalues $\lambda_a = \sum_{a\in\FF_2^{2n}} (-1)^{[a,b]} p_b$ where $[a,b]=0$ if $\s_a$ and $\s_b$ commute and $[a,b]=1$ otherwise.
\begin{lem}
\label{lem:pauli-channels}
Let $\Lambda$ and $\Lambda'$ be Pauli channels, then it holds that 
$\snorm{\Lambda-\Lambda'} \leq \dnorm{\Lambda - \Lambda'}$.
\end{lem}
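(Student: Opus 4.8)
The plan is to reduce both sides to statements about the Pauli eigenvalues and then exhibit a single witness operator. Write $\Phi \coloneqq \Lambda - \Lambda'$. Since both $\Lambda$ and $\Lambda'$ are Pauli channels, they are simultaneously diagonal in the orthonormal basis $\{\sn_a\}_a$ with real eigenvalues $\lambda_a,\lambda_a'$, so $\Phi$ is diagonal with $\Phi\ketr{\s_a} = (\lambda_a-\lambda_a')\ketr{\s_a}$. Because $\{\sn_a\}$ is an orthonormal eigenbasis of $\Phi$ and the eigenvalues are real, the superoperator spectral norm is just the largest eigenvalue modulus, $\snorm{\Phi} = \max_a \abs{\lambda_a - \lambda_a'}$. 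Thus the lemma is equivalent to the bound $\max_a\abs{\lambda_a-\lambda_a'} \le \dnorm{\Phi}$, and it remains to produce a test operator whose trace-norm gain under $\Phi$ realizes $\max_a\abs{\lambda_a-\lambda_a'}$.

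Next I would lower-bound the diamond norm by the induced trace norm $\norm{\Phi}_{1\to1} \coloneqq \max_{\trnorm{X}\le1}\trnorm{\Phi(X)}$. This rests on the standard fact that appending an ancilla can only increase the completely bounded trace norm, i.e.\ $\norm{\Phi}_{1\to1} \le \dnorm{\Phi}$ (see Ref.~\cite{Wat18}). Crucially, the $1\to1$ norm optimizes over \emph{all} trace-class inputs, not only over states, which is precisely what lets us use a (non-positive) Pauli operator as a witness.

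Finally, let $a^\ast$ attain $\max_a\abs{\lambda_a-\lambda_a'}$ and take the test input $X = \s_{a^\ast}/d$. Since $\s_{a^\ast}$ has eigenvalues $\pm1$ we have $\trnorm{\s_{a^\ast}} = d$, hence $\trnorm{X}=1$, while $\Phi(X) = (\lambda_{a^\ast}-\lambda_{a^\ast}')\,\s_{a^\ast}/d$ gives $\trnorm{\Phi(X)} = \abs{\lambda_{a^\ast}-\lambda_{a^\ast}'}$. Chaining the estimates yields $\snorm{\Phi} = \abs{\lambda_{a^\ast}-\lambda_{a^\ast}'} = \trnorm{\Phi(X)} \le \norm{\Phi}_{1\to1} \le \dnorm{\Phi}$, as claimed. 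I do not expect a genuine obstacle: the only points demanding care are the identification of $\snorm{\Phi}$ with the maximal eigenvalue modulus (which relies on the normalized Paulis forming an orthonormal eigenbasis) and the normalization $\trnorm{\s_a}=d$; the substance of the inequality is simply that the Pauli witness saturating the left-hand side is an admissible input for the diamond-norm optimization.
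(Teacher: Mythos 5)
Your proof is correct, but it takes a genuinely different route from the paper's. Both arguments start the same way, identifying $\snorm{\Lambda-\Lambda'}$ with $\max_a\abs{\lambda_a-\lambda_a'}$ via simultaneous diagonality in the Pauli basis. From there the paper writes the eigenvalues as signed sums of the Pauli error probabilities, $\lambda_a=\sum_b(-1)^{[a,b]}p_b$, bounds $\max_a\abs{\lambda_a-\lambda_a'}\leq\sum_b\abs{p_b-p_b'}$ by the triangle inequality, and then invokes the known identity $\dnorm{\Lambda-\Lambda'}=\sum_b\abs{p_b-p_b'}$ for Pauli channels from Ref.~\cite{Magesan2012}. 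You instead lower-bound the diamond norm by the induced trace norm and exhibit the witness $X=\s_{a^\ast}/d$, using that $\trnorm{\s_{a^\ast}}=d$ because Pauli operators are Hermitian and unitary; the chain $\snorm{\Phi}=\trnorm{\Phi(X)}\leq\norm{\Phi}_{1\to 1}\leq\dnorm{\Phi}$ then closes the argument. Your version is more self-contained (it needs no external formula for the diamond distance of Pauli channels, only the elementary fact that the diamond norm dominates the induced trace norm) and is slightly more general, since the witness argument applies to any superoperator diagonalized by an orthogonal basis of Hermitian unitaries. The paper's version buys the sharper intermediate statement that the diamond distance is exactly the $\ell_1$ distance of the error-probability vectors, which is stronger information but relies on a cited result. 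Your two points of care --- that the $1\to 1$ norm optimizes over all trace-class inputs rather than states, so a non-positive Pauli witness is admissible, and the normalization $\trnorm{\s_a}=d$ --- are exactly the right ones, and both check out.
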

\begin{proof}
Let $\Lambda$ and $\Lambda'$ be given by the probability distributions $p$ and $p'$, respectively, hence their eigenvalues are $\lambda_a = \sum_{a\in\FF_2^{2n}} (-1)^{[a,b]} p_b$ and $\lambda'_a = \sum_{a\in\FF_2^{2n}} (-1)^{[a,b]} p'_b$. 
Since $\Lambda$ and $\Lambda'$ are both diagonal in the Pauli basis, we have $\snorm{\Lambda-\Lambda'} = \max_a \abs{\lambda_a-\lambda'_a}$. It then follows that $\max_a \abs{\lambda_a-\lambda'_a} = \max_a \abs{\sum_b (-1)^{[a,b]} (p_b - p'_b )} \leq \sum_b |p_b - p'_b| = \dnorm{\Lambda - \Lambda'}$, where the last step uses a well-known relation for the diamond distance of Pauli channels \cite{Magesan2012}.
\end{proof}
Since the channel average over Pauli channels is again a Pauli channel, the above statement holds in particular for $\bar \Lambda_a$ and the identity operation, $\snorm{\id-\bar\Lambda_a} \leq \dnorm{\id-\bar\Lambda_a}$.

\section{Variance bounds for gate-dependent noise} \label{app: variance bound}
To bound the variance of the estimator, $\Var[\estimator{o}] = \EE(\estimator{o}^2) - \EE(\estimator{o})^2$, we compute the second moment
\begin{align}\label{eq: Variance definition}
\EE(\estimator{o}^2) = \brar{O \otimes O} (S^{-1})^{\otimes 2}  \sum_{x \in \FF_2^n} \sum_{g \in G} p(g) \omega(g)^{\dagger \otimes 2} \kkbrar{E_x}{E_x}{E_x}\omega(g) \Lambda(g) \ketr{\rho} \, .
\end{align}
It will come in handy again to express the map $M_3 \coloneqq \sum_{x \in \FF_2^n}\kkbrar{E_x}{E_x}{E_x}$ in the normalized Pauli basis. 
By using that $\braketr{E_x}{\s_a} = 0$ if $\s_a$ is not diagonal and $\braketr{E_x}{Z_z} = (-1)^{x\cdot z}$, one can readily show that 
\begin{align}
M_3 
= \frac{1}{\sqrt{d}} \sum_{z,z' \in \FF_2^n} \kkbrar{\n Z_z}{\n Z_{z'}}{\n Z_{z+z'}}
= \frac{1}{d^2} \sum_{z,z' \in \FF_2^n} \kkbrar{Z_z}{Z_{z'}}{Z_{z+z'}}\, .
\end{align}
Here and in the following, addition of binary vector such as $z+z'$ is within the binary field $\FF_2$, i.e.~to be taken modulo 2.
In the absence of noise, one can show that the relevant operator in \cref{eq: Variance definition} can be written as
\begin{align}
\label{eq:S3 definition}
 S_3 
 \coloneqq 
 \sum_{g \in G} p(g) \omega(g)^{\dagger \otimes 2} M_3 \omega(g)
 = 
 \frac{1}{\sqrt{d}} \sum_{\substack{a, a' \in \FF_2^{2n}: \\ \symplecticp{a}{a'} = 0}} s_{a,a'} 
 \kkbrar{\sn_a}{\sn_{a'}}{\sn_{a+a'}} \,,
\end{align}
for suitable constants $s_{a,a'}\in \RR$.
Under noise, we show that the analogous operator
\begin{align}
\tilde S_3 \coloneqq \sum_{g \in G} p(g) \omega(g)^{\dagger \otimes 2} M_3 \omega(g) \Lambda(g)
\end{align}
can be brought in a similar form to \cref{eq:S3 definition} where the noise enters linearly and from the right.
We then obtain an analogous statement for the variance as for the expectation value in \cref{lem: Noise averaging}:

\begin{lem}
  Consider a shadow estimation protocol with random sampling from the Clifford group
   according to an arbitrary probability distribution $p$ that ensures informational completeness. 
   Let $s_{a,a'}$ be as in \cref{eq:S3 definition} and let $s_a$ and $s_{a'}$ be as in \cref{lem: Noise averaging}.
   Then, the second moment for an observable $O$ and a state $\rho$ can be written as 
  \begin{align} \label{eq: noisy second moment}
  \EE(\estimator{o}^2) = \frac{1}{\sqrt{d}} \sum_{\substack{a, a' \in \FF_2^{2n}: \\ \symplecticp{a}{a'} = 0}} \frac{s_{a,a'}}{s_a s_{a'}} \braketr{O}{\sn_a}\braketr{O}{\sn_{a'}} \brar{\sn_{a+a'}}\bar\Lambda_{a,a'}\ketr{\rho}\, ,
  \end{align}
  where $\bar\Lambda_{a,a'}$ are suitable averages of the gate noise channels $\Lambda(g)$.
\end{lem}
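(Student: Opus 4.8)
The plan is to mirror the strategy of \cref{lem: Noise averaging}, now applied to the third-moment operator $\tilde S_3$ rather than to $\tilde S$, and then to sandwich the result between $\brar{O\otimes O}(S^{-1})^{\otimes 2}$ and $\ketr{\rho}$. Concretely, I would first establish that $\tilde S_3 = \frac{1}{\sqrt d}\sum_{a,a':\,\symplecticp{a}{a'}=0} s_{a,a'}\,\kkbrar{\sn_a}{\sn_{a'}}{\sn_{a+a'}}\,\bar\Lambda_{a,a'}$, with exactly the same real coefficients $s_{a,a'}$ as in the noise-free operator $S_3$ of \cref{eq:S3 definition} and with $\bar\Lambda_{a,a'}$ a normalized average of the gate noise channels. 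Granting this form, the remainder is bookkeeping: since $S^{-1}=\sum_a s_a^{-1}\ketbrar{\sn_a}{\sn_a}$ (as in the proof of \cref{thm: Error bound}), the normalized Paulis $\sn_a\otimes\sn_{a'}$ are eigenoperators of $(S^{-1})^{\otimes2}$ with eigenvalue $1/(s_a s_{a'})$, so acting with $(S^{-1})^{\otimes 2}$ produces the prefactor $s_{a,a'}/(s_a s_{a'})$; contracting the output leg against $\brar{O\otimes O}$ gives $\braketr{O}{\sn_a}\braketr{O}{\sn_{a'}}$, while the input leg leaves $\brar{\sn_{a+a'}}\bar\Lambda_{a,a'}\ketr{\rho}$, which is precisely \cref{eq: noisy second moment}.

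The core is therefore the claimed form of $\tilde S_3$. I would expand $M_3=\frac{1}{\sqrt d}\sum_{z,z'}\kkbrar{\n Z_z}{\n Z_{z'}}{\n Z_{z+z'}}$ and push the Clifford maps onto each Pauli factor via \cref{eq:clifford-action}: the two output factors become $\omega(g)^\dagger\ketr{\n Z_z}$ and $\omega(g)^\dagger\ketr{\n Z_{z'}}$, i.e.\ signed normalized Paulis $\sn_a,\sn_{a'}$, while the input factor becomes $\brar{\n Z_{z+z'}}\omega(g)$, a signed $\brar{\sn_{a+a'}}$ still multiplied from the right by $\Lambda(g)$. Two structural facts then pin down the summation pattern. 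First, because $Z_z$ and $Z_{z'}$ commute and Clifford conjugation preserves commutation, the images $\s_a,\s_{a'}$ commute, forcing $\symplecticp{a}{a'}=0$. Second, $Z_z Z_{z'}=Z_{z+z'}$ together with the homomorphism property of conjugation forces the third label to be $a+a'$, exactly the combination in $\kkbrar{\sn_a}{\sn_{a'}}{\sn_{a+a'}}$.

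The step that requires care is the sign accounting, which I expect to be the main obstacle. Each of the three Pauli factors carries a $g$-dependent sign, and a priori these could obstruct grouping gates together. Writing $g^\dagger Z_z g=(-1)^{\mu_z(g)}\s_a$ and likewise for $z'$ and $z+z'$, I would use $\s_a\s_{a'}=(-1)^{\nu(a,a')}\s_{a+a'}$ for the commuting pair to obtain $\mu_z+\mu_{z'}+\nu(a,a')\equiv\mu_{z+z'}\pmod 2$; the product of the three signs $(-1)^{\mu_z+\mu_{z'}+\mu_{z+z'}}$ then collapses to $(-1)^{\nu(a,a')}$, which depends only on $(a,a')$ and not on $g$. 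This is the crucial point: the residual sign is common to every gate mapping into a given slot $(a,a',a+a')$, so it factors out of the average exactly as in the noise-free case and is absorbed into $s_{a,a'}$ (explaining why $s_{a,a'}$ is merely real). What remains inside the slot is $\sum_{g}p(g)\Lambda(g)$ over the relevant preimages, and normalizing by the total weight $\lvert s_{a,a'}\rvert$ defines $\bar\Lambda_{a,a'}$ as a genuine convex combination of channels, hence a quantum channel that reduces to $\id$ in the noise-free limit and reproduces the $s_{a,a'}$ of \cref{eq:S3 definition}. The rest then follows the template of \cref{lem: Noise averaging}.
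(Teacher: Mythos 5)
Your proposal is correct and follows essentially the same route as the paper: expand $M_3$ in the diagonal Paulis, push the Clifford conjugation onto each leg, observe that the three $g$-dependent signs combine to the $g$-independent factor $(-1)^{\beta(a,a')}$ absorbed into $s_{a,a'}$, group gates by the preimage slots $\Xi_z^{-1}(a)\cap\Xi_{z'}^{-1}(a')$ to define $\bar\Lambda_{a,a'}$ as a normalized (hence convex) average, and then contract with $\brar{O\otimes O}(S^{-1})^{\otimes 2}$ and $\ketr{\rho}$. Your sign bookkeeping via the relation $\mu_{z+z'}\equiv\mu_z+\mu_{z'}+\nu(a,a')\pmod 2$ is just a rephrasing of the paper's direct substitution $Z_{z+z'}=Z_zZ_{z'}$ under conjugation, so there is no substantive difference.
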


\begin{proof}
For the target implementation of Clifford unitaries, we can again write the action of $\omega^{\dagger}(g) \otimes \omega^{\dagger}(g) (\cdot) \omega(g)$ on $\ketr{Z_z} \otimes \ketr{Z_{z'}} \brar{Z_{z+z'}}$ in terms of $\varphi_a: \Cl_n \rightarrow \FF_2$ and $\Xi_a: \Cl_n \rightarrow \Pg$. It also holds that $\brar{Z_{z+z'}}\omega(g) = \brar{g Z_z g^{\dagger} g Z_{z'}  g^{\dagger}} = \brar{\Xi_z(g) \Xi_{z'}(g)} (-1)^{\varphi_z(g) + \varphi_{z'}(g)}$, where the Pauli operators $\Xi_z(g)$ and $\Xi_{z'}(g)$ commute. Therefore, we get 
\begin{align*}
\omega^{\dagger}(g) \otimes \omega^{\dagger}(g) \kkbrar{Z_z}{Z_{z'}}{Z_{z+z'}} \omega(g) 
&= (-1)^{\varphi_z(g) + \varphi_{z'}(g)} \kkbrar{\Xi_z(g)}{\Xi_{z'}(g)}{\Xi_z(g) \Xi_{z'}(g)} (-1)^{\varphi_z(g) + \varphi_{z'}(g)} \\
&= \kkbrar{\Xi_z(g)}{\Xi_{z'}(g)}{\Xi_z(g) \Xi_{z'}(g)}\, .
\end{align*}
Note that if $\sigma_a = \Xi_z(g)$ and $\sigma_{a'} = \Xi_{z'}(g)$ for suitable $a,a'$ with $[a,a']=0$, then $\sigma_a \sigma_{a'} = (-1)^{\beta(a,a')} \sigma_{a+a'}$ for a suitable binary function $\beta$.
Since $\tilde S_3$ depends only linearly on the right noise channels $\Lambda(g)$ we can proceed in analogy to \Cref{lem: Noise averaging} and rewrite $\tilde S_3$ as 
\begin{align*}
\tilde S_3 &= \sum_{g \in G} p(g) \omega(g)^{\dagger \otimes 2} M_3 \omega(g) \Lambda(g) \\
&= 
\frac{1}{d^2} \sum_{g \in G} p(g) \sum_{z,z' \in \FF_2^n} \ketr{\Xi_z(g)} \otimes \ketr{\Xi_{z'}(g)} \brar{\Xi_z(g) \Xi_{z'}(g)} \Lambda(g) \\
&= 
\frac{1}{d^2} \sum_{\substack{a, a' \in \FF_2^{2n}: \\ \symplecticp{a}{a'} = 0}} 
\kkbrar{\s_a}{\s_{a'}}{\s_a \s_{a'}} \sum_{z,z' \in \FF_2^n} \sum_{g \in \Xi_z^{-1}(a) \cap \Xi_{z'}^{-1}(a')} p(g) \Lambda(g) \\
&= 
\frac{1}{\sqrt{d}} \sum_{\substack{a, a' \in \FF_2^{2n}: \\ \symplecticp{a}{a'} = 0}} s_{a,a'} 
\kkbrar{\sn_a}{\sn_{a'}}{\sn_{a+a'}} \sum_{z,z' \in \FF_2^n} \sum_{g \in \Xi_z^{-1}(a) \cap \Xi_{z'}^{-1}(a')} \frac{p(g)}{r_{a,a'}} \Lambda(g)\, ,
\end{align*}
where $s_{a,a'} = (-1)^{\beta(a,a')} r_{a,a'}$ and $r_{a,a'} = \sum_{z,z' \in \FF_2^n} \sum_{g \in \Xi_z^{-1}(a) \cap \Xi_{z'}^{-1}(a')} p(g)$. 
We can now define average channels $\bar \Lambda_{a,a'} \coloneqq \sum_{z,z' \in \FF_2^n} \sum_{g \in \Xi_z^{-1}(a) \cap \Xi_{z'}^{-1}(a')} \frac{p(g)}{r_{a,a'}} \Lambda(g)$ and write $\tilde S_3$ as 

\begin{align} \label{eq: noisy second moment operator}
\tilde S_3 
= 
\frac{1}{\sqrt{d}} \sum_{\substack{a, a' \in \FF_2^{2n}: \\ \symplecticp{a}{a'} = 0}} 
s_{a,a'} \kkbrar{\sn_a}{\sn_{a'}}{\sn_{a+a'}} \bar \Lambda_{a,a'}\, .
\end{align}
The second moment of $\estimator{o}$ is then
\begin{equation}
  \EE(\estimator{o}^2) = \brar{O \otimes O} S^{-1}\otimes S^{-1} \tilde S_3 \ketr{\rho}
  = 
  \frac{1}{\sqrt{d}} \sum_{\substack{a, a' \in \FF_2^{2n} \\ \symplecticp{a}{a'} = 0}} \frac{s_{a,a'}}{s_a s_{a'}} \braketr{O}{\sn_a} \braketr{O}{\sn_{a'}} \brar{\sn_{a+a'}}\bar\Lambda_{a,a'}\ketr{\rho}
\end{equation}
\end{proof}

In analogy to \cref{thm: Error bound}, we obtain the following bound on the deviation of the second moment from its value in the absence of noise.

\begin{prop} \label{prop: Variance bias}
In the setting of \cref{thm: Error bound}, assume that $|s_{a,a'}|/(s_a s_a') \leq C$ for all $a\neq a'$ with $\braketr{O}{\s_a}\neq 0$ and $\braketr{O}{\s_{a'}}\neq 0$.
Then, we have 
\begin{equation}
  |\EE(\estimator{o}^2) - \EE(\estimator{o}_\text{noise-free}^2)|
  \leq
  C \stnorm{O}^2 \max_{a,b\in\FF_2^{2n}} \, \dnorm{\id-\bar\Lambda_{a,b}} 
  \leq
  C \stnorm{O}^2 \max_{g\in G} \, \dnorm{\id-\Lambda(g)} \,,
\end{equation}
where $\estimator{o}_\text{noise-free}$ is the shadow estimator in the absence of any noise and $\bar\Lambda_{a,b}$ are suitably averaged noise channels.
\end{prop}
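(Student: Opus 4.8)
The plan is to proceed in direct analogy to the proof of \cref{thm: Error bound}, now applied to the two-copy expression for the second moment rather than the single-copy frame operator. The starting point is the explicit form of $\EE(\estimator{o}^2)$ furnished by \cref{eq: noisy second moment}, together with the observation that the noise-free second moment $\EE(\estimator{o}_\text{noise-free}^2)$ arises from the very same expression upon replacing each averaged channel $\bar\Lambda_{a,a'}$ by $\id$. Subtracting the two, the summation structure is preserved and only the channel difference $\bar\Lambda_{a,a'}-\id$ survives, giving
\begin{equation*}
\EE(\estimator{o}^2) - \EE(\estimator{o}_\text{noise-free}^2) = \frac{1}{\sqrt{d}} \sum_{\substack{a,a' \in \FF_2^{2n}: \\ \symplecticp{a}{a'}=0}} \frac{s_{a,a'}}{s_a s_{a'}} \braketr{O}{\sn_a}\braketr{O}{\sn_{a'}} \brar{\sn_{a+a'}}(\bar\Lambda_{a,a'} - \id)\ketr{\rho}.
\end{equation*}

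Next I would apply the triangle inequality and bound the three independent pieces of each summand separately. First, the coefficient is controlled by the hypothesis $|s_{a,a'}|/(s_a s_{a'}) \leq C$, which is assumed precisely on the index set where both Pauli coefficients $\braketr{O}{\s_a}$ and $\braketr{O}{\s_{a'}}$ are nonzero, i.e.\ exactly where the summand does not vanish. Second, the inner matrix element is treated as in \cref{thm: Error bound}: the matrix Hölder inequality gives $|\brar{\sn_{a+a'}}(\id - \bar\Lambda_{a,a'})\ketr{\rho}| \leq \snorm{\sn_{a+a'}}\,\pnorm[1]{(\id - \bar\Lambda_{a,a'})(\rho)}$, and then $\snorm{\sn_{a+a'}}=1/\sqrt{d}$ together with $\pnorm[1]{(\id - \bar\Lambda_{a,a'})(\rho)} \leq \dnorm{\id - \bar\Lambda_{a,a'}} \leq \max_{a,b}\dnorm{\id - \bar\Lambda_{a,b}}$.

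After pulling the constant $C$ and the maximal diamond distance out of the sum, what remains is the factor $\frac{1}{d}\sum_{a,a'}|\braketr{O}{\sn_a}||\braketr{O}{\sn_{a'}}|$, where the symplectic constraint $\symplecticp{a}{a'}=0$ has been dropped---legitimate because every summand is nonnegative after taking absolute values. Converting normalized to unnormalized Paulis via $\braketr{O}{\sn_a}=\braketr{O}{\s_a}/\sqrt{d}$ makes this double sum factor into a product of two identical single sums, each equal to $d\stnorm{O}$ by the definition \eqref{eq:def-stab-norm}, so the product is $d^2\stnorm{O}^2$ and the surviving $1/d^2$ reduces it to exactly $\stnorm{O}^2$. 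This yields the first claimed inequality. The second then follows from the same convexity argument already used in \cref{thm: Error bound}: since each $\bar\Lambda_{a,b}$ is a convex average of the gate channels $\Lambda(g)$, convexity of the diamond norm gives $\dnorm{\id - \bar\Lambda_{a,b}} \leq \max_{g\in G}\dnorm{\id - \Lambda(g)}$.

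I do not anticipate a genuine conceptual obstacle, as the argument mirrors the main theorem almost line for line. The one place demanding care is the bookkeeping of the normalization factors---the overall $1/\sqrt{d}$ prefactor, the $\snorm{\sn_{a+a'}}=1/\sqrt{d}$ from the inner bound, and the $\sqrt{d}$ produced by each of the two $\braketr{O}{\sn_a}$ terms---which must combine to produce precisely $\stnorm{O}^2$ rather than a stray power of $d$. A secondary point to verify is that dropping the commutation constraint and factoring the double sum is valid; this is immediate once absolute values are taken, since all terms are then nonnegative.
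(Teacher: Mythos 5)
Your overall strategy is the right one and matches the paper's: start from the explicit form of the second moment in \cref{eq: noisy second moment}, observe that the noise-free value is obtained by replacing $\bar\Lambda_{a,a'}$ with $\id$, and bound the difference term by term using Hölder, the normalization $\snorm{\sn_{a+a'}}=1/\sqrt{d}$, and the stabilizer norm. Your bookkeeping of the powers of $d$ is also correct, and the final step via convexity of the diamond norm is fine.

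However, there is one genuine gap: you apply the coefficient bound $|s_{a,a'}|/(s_a s_{a'})\leq C$ to \emph{every} non-vanishing summand, but the hypothesis of the proposition grants this bound only for $a\neq a'$. On the diagonal the bound fails badly: the paper first shows that $s_{a,a}=s_a$ (by writing $s_{a,a'}$ as a matrix element of $S_3$ and noting $\sn_{a+a}=\sn_0$), so that $s_{a,a}/(s_a s_a)=1/s_a$, which for uniform sampling from the global Clifford group equals $d+1$ for $a\neq 0$ --- exponentially large and not covered by any constant $C$. If your argument were carried through as written, the diagonal terms would contribute a factor of order $d$ rather than $C$. The resolution, which is the one nontrivial idea in the paper's proof beyond mirroring \cref{thm: Error bound}, is that the diagonal terms vanish identically: for $a=a'$ the relevant matrix element is $\brar{\sn_{0}}(\bar\Lambda_{a,a}-\id)\ketr{\rho}=\tfrac{1}{\sqrt{d}}\Tr\left[(\bar\Lambda_{a,a}-\id)(\rho)\right]=0$, because the averaged channels $\bar\Lambda_{a,a}$ are trace preserving. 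You should therefore split the sum into $a=a'$ and $a\neq a'$, kill the former by trace preservation, and apply your Hölder-plus-stabilizer-norm estimate only to the latter (where dropping the constraints $a\neq a'$ and $\symplecticp{a}{a'}=0$ to factorize the double sum is indeed harmless after taking absolute values).
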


\begin{proof}
We have the following expression for $s_{a,a'}$:
\begin{equation}
 s_{a,a'}
 =
 \sqrt{d} \, \brar{\sn_a \otimes \sn_{a'}}S_3\ketr{\sn_{a+a'}}
 =
 \sqrt{d} \sum_{x\in\FF_2^n} \sum_{g\in G} p(g) \brar{\sn_a}\omega(g)^\dagger\ketr{E_x}\brar{\sn_{a'}}\omega(g)^\dagger\ketr{E_x} \brar{E_x}\omega(g)\ketr{\sn_{a+a'}} \,.
 \label{eq:sab}
\end{equation}
First, note that if $a=a'$, then $\sn_{a+a}=\sn_0=\1/\sqrt{d}$ and hence
\begin{equation}
 s_{a,a}
 = \sum_{x\in\FF_2^n} \sum_{g\in G} p(g) \brar{\sn_a}\omega(g)^\dagger\ketr{E_x}^2
 = \sum_{x\in\FF_2^n} \sum_{g\in G} p(g) \brar{\sn_a}\omega(g)^\dagger\ketr{E_x}\brar{E_x}\omega(g)\ketr{\sn_a}
 = \brar{\sn_a}S\ketr{\sn_a}
 = s_a\,,
 \label{eq:saa}
\end{equation}
using that all matrix coefficients are real.
By assumption, $|s_{a,a'}|/(s_a s_a') \leq C$ for all non-zero terms, and thus we find
\begin{align*}
  |\EE(\estimator{o}^2) - \EE(\estimator{o}_\text{noise-free}^2)|
  &= 
  |\brar{O \otimes O} S^{-1}\otimes S^{-1} (\tilde S_3 - S_3) \ketr{\rho}| \\
  &= 
  \frac{1}{\sqrt{d}} \Big| \sum_{\substack{a, a' \in \FF_2^{2n} \\ \symplecticp{a}{a'} = 0}} \frac{s_{a,a'}}{s_a s_{a'}} \braketr{O}{\sn_a}\braketr{O}{\sn_{a'}} \brar{\sn_{a+a'}}\bar\Lambda_{a,a'}-\id\ketr{\rho} \Big| \\
  &\leq
  \frac{C}{d^2} \sum_{a\neq a'}|\braketr{O}{\s_a}| |\braketr{O}{\s_{a'}}| \, \dnorm{\bar\Lambda_{a,a'}-\id} 
  +
  \frac{1}{d^2} \sum_{a} \frac{|\braketr{O}{\s_a}|^2}{s_a}  \, \Tr\left[(\bar\Lambda_{a,a'}-\id)(\rho)\right]\\
  &\leq
  C\,\stnorm{O}^2 \max_{a,a'} \, \dnorm{\bar\Lambda_{a,a'}-\id} \,,
\end{align*}
using that $\bar\Lambda_{a,a'}$ is trace preserving.
\end{proof}

An open question is whether we have $|s_{a,a'}|/(s_a s_a') = \LandauO(1)$ for general distributions $p$ on the Clifford group.
The best general upper bound we could find is $d$, but we think that this is too pessimistic for practically relevant cases.
For uniform sampling from a Clifford subgroup, we can -- in principle -- get an analytical handle on the $s_{a,a'}$ (and $s_a$) using Schur's lemma and information about the irreps of the subgroup. 
This can help to improve on \cref{prop: Variance bias}, as we illustrate in \cref{app:variance Cliffords} and \cref{app:variance local Cliffords} at the case of the local and global Clifford groups.

\subsection{The global Clifford group}
\label{app:variance Cliffords}

For the global Clifford group $\Cl_n$, the representation $\omega$ is composed of two irreducible representation and  can be decomposed as $\omega = \tau_0 \oplus \tau_1$. The noiseless second moment operator $S_3$ for uniform sampling from $\Cl_n$ then decomposes as (see also Ref.~\cite[{App.~C}]{Heinrich22GeneralGuarantees}):
\begin{align*}
S_3 = \frac{1}{|G|} \bigoplus_{i \in \FF_2}\bigoplus_{j \in \FF_2}\bigoplus_{k \in \FF_2} \sum_{g \in G} \tau_i(g)^{\dagger} \otimes \tau_j(g)^{\dagger} M_3 \tau_k(g)\, .
\end{align*}
By Schur's lemma, the operator
\begin{equation}
 \Pi_{ijk} = \frac{1}{|G|} \sum_{g \in G} \tau_i(g)^{\dagger} \otimes \tau_j(g)^{\dagger} (\cdot) \tau_k(g)
\end{equation}
is an orthogonal projector which is only non-zero if the irrep $\tau_k$ is contained in $\tau_i \otimes \tau_j$.
It is straightforward to see that $\Pi_{ijk}$ is thus zero for $(ijk) = (001), (100), (010)$.
More generally, $\rank \Pi_{ijk}$ is equal to the multiplicity of $\tau_k$ in $\tau_i \otimes \tau_j$.
Thus, $\Pi_{ijk}$ is rank one for $(ijk) = (000), (101), (011), (110)$ and rank two (one) for $(ijk) = (111)$ if $n>1$ ($n=1$) \cite[{Eq.~(97)}]{Heinrich22GeneralGuarantees}.
The rank one cases can be straightforwardly computed by finding a superoperator $I_{ijk}$ in the range of $\Pi_{ijk}$ and projecting $M_3$ onto $I_{ijk}$, i.e.
\begin{align*}
  \Pi_{ijk}(M_3) = \frac{\Tr(I_{ijk}^\dagger M_3)}{\Tr(I_{ijk} I_{ijk}^{\dagger})} I_{ijk}
\end{align*}
The rank two case was already computed in Ref.~\cite[{App.~C.1}]{Heinrich22GeneralGuarantees}.
Evaluating $S_3$ in this way results in the following Lemma.

\begin{lem}
For uniform sampling from the global Clifford group, the coefficients $s_{a,a'}/(s_a s_{a'})$ in the second moment are given by 
\begin{align}
\frac{s_{a,a'}}{s_a s_{a'}} = 
\begin{cases} 
1 & a = 0 \lor a' = 0 \\
d+1 & a \neq 0 \land a' \neq 0 \land a = a'\\
\frac{2(d+1)}{d+2}(-1)^{\beta(a,a')} & a\neq 0 \,\, \land \,\, a' \neq 0 \,\, \land \,\, a \neq a' \,\, \land \,\, \symplecticp{a}{a'} = 0 \\
0 & \text{else.}
\end{cases}
\end{align}
\end{lem}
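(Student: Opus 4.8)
The plan is to compute each matrix element $s_{a,a'}=\sqrt{d}\,\brar{\sn_a\otimes\sn_{a'}}S_3\ketr{\sn_{a+a'}}$ from \cref{eq:sab} by means of the Schur-isotypic decomposition of $S_3$ set up above, and then to divide by the known global-Clifford frame eigenvalues $s_0=1$ and $s_a=1/(d+1)$ for $a\neq 0$. Writing $S_3=\sum_{ijk}\Pi_{ijk}(M_3)$, the bra $\brar{\sn_a}$ projects onto $\tau_0$ when $a=0$ and onto $\tau_1$ when $a\neq 0$, and similarly for the slots labelled by $a'$ and $a+a'$; hence each case of the claimed formula is governed by a single triple $(ijk)$, and only the $\Pi_{ijk}$ with $\tau_k\subseteq\tau_i\otimes\tau_j$ can contribute.

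First I would treat the trivial slots. When $a=0$ or $a'=0$, one factor carries $\tau_0$, the fusion rule forces the output irrep to match the remaining input, and the only surviving projector ($\Pi_{011}$ or $\Pi_{101}$) is rank one; evaluating it should return $s_{0,a'}=s_{a'}$ (resp.\ $s_{a,0}=s_a$) and thus the ratio $1$. The diagonal block $a=a'\neq 0$ is already handled by \cref{eq:saa}, which gives $s_{a,a}=s_a$ and hence $s_{a,a}/(s_a s_a)=1/s_a=d+1$. Finally, pairs with $[a,a']\neq 0$ never enter the sum defining $S_3$ (it ranges only over commuting $a,a'$), so they contribute $0$, matching the last line.

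The substantive case is $a\neq a'$ with both nonzero and $[a,a']=0$, which lives entirely in the $(ijk)=(111)$ sector. Here I would use the rank-two projector $\Pi_{111}$ computed in Ref.~\cite[App.~C.1]{Heinrich22GeneralGuarantees}, project $M_3$ onto its two-dimensional range, and read off the coefficient multiplying $\kkbrar{\sn_a}{\sn_{a'}}{\sn_{a+a'}}$. The Pauli product identity $\sigma_a\sigma_{a'}=(-1)^{\beta(a,a')}\sigma_{a+a'}$ supplies the sign $(-1)^{\beta(a,a')}$, while the multiplicity-two structure produces the rational prefactor $2(d+1)/(d+2)$ after dividing by $s_a s_{a'}=(d+1)^{-2}$.

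The main obstacle will be this $(111)$ sector: one must correctly normalize the two-dimensional isotypic projector, propagate the combinatorial sign $\beta$ through the Pauli multiplication, and confirm that the distinct sub-cases separate so the off-diagonal contribution yields exactly $2(d+1)/(d+2)$ rather than the diagonal value $d+1$. A good consistency check is the $n=1$ degeneration, where $\Pi_{111}$ drops to rank one; there no distinct commuting nonzero pair $a\neq a'$ exists, so the rank-two formula is vacuous, in agreement with the structure of the single-qubit Clifford group.
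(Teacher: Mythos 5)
Your proposal is correct and follows essentially the same route as the paper: the Schur-isotypic decomposition $S_3=\sum_{ijk}\Pi_{ijk}(M_3)$ over the two Clifford irreps, vanishing of the incompatible triples, rank-one evaluation of the mixed sectors, and reliance on the rank-two $(111)$ projector from Ref.~\cite[App.~C.1]{Heinrich22GeneralGuarantees} for the off-diagonal coefficient $2(d+1)/(d+2)(-1)^{\beta(a,a')}$. The only (harmless) deviation is that you dispatch the diagonal case $a=a'\neq 0$ via the general identity $s_{a,a}=s_a$ from \cref{eq:saa} rather than by explicitly evaluating the $(110)$ sector as the paper does; both give $1/s_a=d+1$.
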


\begin{proof}
We will first determine the superoperators $I_{ijk}$ and calculate each individual contribution to $S_3$. 
\begin{enumerate}[(i)]
\item We can choose $I_{000} = \kkbrar{\onen}{\onen}{\onen}$ with $\Tr(I^{\dagger}_{000}I_{000}) = 1$ and 
  \begin{align*}
    \Tr(I_{000}^\dagger M_3) 
    = \frac{1}{\sqrt{d}} \sum_{z,z' \in \FF_2^n} 
    \braketr{\onen}{\normalize Z_z} \braketr{\onen}{\normalize Z_{z'}} \braketr{\normalize Z_{z+z'}}{\onen} 
    = \frac{1}{\sqrt{d}} \, .
  \end{align*}
  The first term in $S_3$ is thus 
  \begin{align} \label{eq: I_000 term}
    \frac{\Tr(I_{000}^\dagger M_3)}{\Tr(I^{\dagger}_{000} I_{000})} I_{000} = \frac{1}{\sqrt{d}} \kkbrar{\onen}{\onen}{\onen}\, .
  \end{align}

\item 
  Next we look at $(ijk)=(011),(101)$.
  Note that $\sum_{a \neq 0} \ketbrar{\sn_a}{\sn_a}$ is the projector onto the traceless subspace and hence left invariant by $\tau_1^\dagger(g) (\cdot) \tau_1(g)$ for all $g\in G$.
  Consequently $I_{011} = \sum_{a \neq 0} \kkbrar{\onen}{\sn_a}{\sn_a}$,  $I_{101} = \sum_{a \neq 0} \kkbrar{\sn_a}{\onen}{\sn_a}$ are valid choices with $\Tr(I_{011}^{\dagger}I_{011}) = \Tr(I_{101}^{\dagger}I_{101}) = d^2-1$. 
  For the overlap with $M_3$ we find 
  \begin{align*}
  \Tr(I_{011}^\dagger M_3) 
  &= \frac{1}{\sqrt{d}}\sum_{z,z' \in \FF_2^n} \sum_{a \neq 0} \braketr{\onen}{\n Z_z}\braketr{\sn_a}{\n Z_{z'}} \braketr{\n Z_{z+z'}}{\sn_a}\\
  & = \frac{1}{\sqrt{d}}\sum_{z,z' \in \FF_2^n} \sum_{a \neq 0} \delta_{z,0}\delta_{z',a}\delta_{z+z',a} \\
  & = \frac{1}{\sqrt{d}}\sum_{z' \neq 0} 1\\
  & = \frac{1}{\sqrt{d}} (d-1)\, .
  \end{align*}
  The same result can be obtained for $\Tr(I_{101}^\dagger M_3)$, and we get the contributions 
  \begin{align} \label{eq: I_101 and I_011 term}
    \frac{\Tr(I_{011}^\dagger M_3)}{\Tr(I^{\dagger}_{011} I_{011})} I_{011}
    &= \frac{1}{\sqrt{d}(d+1)} \sum_{a \neq 0} \kkbrar{\onen}{\sn_a}{\sn_a}\, ,&
    \frac{\Tr(I_{101}^\dagger M_3)}{\Tr(I^{\dagger}_{101} I_{101})} I_{101} 
    &= \frac{1}{\sqrt{d}(d+1)} \sum_{a \neq 0} \kkbrar{\sn_a}{\onen}{\sn_a} \, .
  \end{align}
\item 
  The $(110)$ case is treated in \cite[{App.~C.1a}]{Heinrich22GeneralGuarantees} using
  \begin{align*}
  I_{110} = \left[F - \ketr{\onen \otimes \onen}\right] \brar{\onen},
  \end{align*}
   where $F$ is the flip operator on the first two tensor factors. 
   To compute $\Tr(I_{110}^\dagger M_3)$, we first remember the property of the flip operator that $\Tr(F \sn_a \otimes \sn_b) = \braketr{\sn_a}{\sn_b} = \delta_{a,b}$, which leads us to 
  \begin{align*}
      \Tr(I_{110}^\dagger\ketr{\sn_a \otimes \sn_b}\brar{\sn_c}) 
      = \big[\Tr(F \sn_a \otimes \sn_b) - \braketr{\onen}{\sn_a} \braketr{\onen}{\sn_b} \big] \braketr{\onen}{\sn_c} 
      = (\delta_{a,b} - \delta_{a,0}\delta_{b,0})\delta_{c,0} \, .
  \end{align*}
  Therefore, we can write $I_{110}$ in the Pauli basis as $I_{110} = \sum_{a\neq 0} \kkbrar{\sn_a}{\sn_a}{\onen}$, from where one can deduce the normalization $\Tr(I_{110}^{\dagger}I_{110}) = d^2-1$. For the overlap with $M_3$ we have  
  \begin{align*}
    \Tr(I_{110}^\dagger M_3) 
    &= \frac{1}{\sqrt{d}} \sum_{z,z' \in \FF_2^n} \bigl(\delta_{z,z'} - \delta_{z,0}\delta_{z',0} \bigr) \delta_{z+z',0}
    = \frac{d-1}{\sqrt{d}} \, . 
  \end{align*}
  The contribution to $S_3$ is then given as
  \begin{align}\label{eq: I_110 term}
    \frac{\Tr(I_{110}^\dagger M_3)}{\Tr(I^{\dagger}_{110} I_{110})} I_{110} = \frac{1}{\sqrt{d}(d+1)}\sum_{a \neq 0} \kkbrar{\sn_a}{\sn_a}{\onen}\, .
  \end{align}
  \item 
    Lastly we take a look at $I_{111}$, which was also determined in \cite[{App.~C.1b}]{Heinrich22GeneralGuarantees}, where it was shown that 
    \begin{align}
      \Tr(I_{111}M_3) = \frac{\Tr(I_\mathrm{ad}^{(1)}M_3)}{d^3 (d^2-1)(d^2-4)} (I_\mathrm{ad}^{(1)} + I_\mathrm{ad}^{(2)})\, 
    \end{align}
    with 
    \begin{align}
    I_\mathrm{ad}^{(1)} &= \sum_{\substack{a\neq 0, b \neq 0 \\ a \neq b }} \kkbrar{\s_a}{\s_b}{\s_a \s_b} \,, &
    I_\mathrm{ad}^{(1)} &= \sum_{\substack{a\neq 0, b \neq 0 \\ a \neq b }} \kkbrar{\s_a}{\s_b}{\s_b \s_a} \, .
    \end{align}
    Using $\s_a \s_b = (-1)^{\symplecticp{a}{b}} \s_b \s_a$ and the above definition we find that 
    \begin{align}
      I_\mathrm{ad}^{(1)} + I_\mathrm{ad}^{(2)} = 2\sum_{\substack{a\neq 0, b \neq 0 \\ a \neq b, \symplecticp{a}{b} = 0 }} \kkbrar{\s_a}{\s_b}{\s_a\s_b}\, ,
    \end{align}
    as well as 
    \begin{align*}
      \Tr(I_\mathrm{ad}^{(1)\dagger} M_3) 
      = d \sum_{z,z' \in \FF_2^n} \sum_{\substack{a\neq 0, b \neq 0 \\ a \neq b, \symplecticp{a}{b} = 0 }} 
        (-1)^{\beta(a,b)}\delta_{z,a} \delta_{z',b}\delta_{z+z',a+b} 
      = d \sum_{\substack{z \neq 0, z' \neq 0 \\ z \neq z'}} 1 
      = d(d-1)(d-2)\, .
    \end{align*}
    The contribution to $S_3$ is thus
    \begin{align} 
      \frac{\Tr(I_{111}^\dagger M_3)}{\Tr(I^{\dagger}_{111} I_{111})} I_{111} 
      &= \frac{2}{d^2(d+1)(d+2)} \sum_{\substack{a\neq 0, a' \neq 0 \\ a \neq a', \symplecticp{a}{a'} 
      = 0 }} \kkbrar{\s_a}{\s_{a'}}{\s_a \s_{a'}} \\
      &= \frac{2}{\sqrt{d}(d+1)(d+2)} \sum_{\substack{a\neq 0, a' \neq 0 \\ a \neq a', \symplecticp{a}{a'} 
      = 0 }} (-1)^{\beta(a,a')}\kkbrar{\sn_a}{\sn_{a'}}{\sn_{a+a'}}\, . \label{eq: I_111 term}
    \end{align}
\end{enumerate}

By comparing the prefactors of different terms from $S_3$ given in \cref{eq: I_000 term}, \cref{eq: I_101 and I_011 term}, \cref{eq: I_110 term} and \cref{eq: I_111 term} with \cref{eq:S3 definition}, we can read off the coefficients $s_{a,a'}$:
\begin{align}
s_{a,a'} = 
\begin{cases} 
1 & a = 0 \lor a' = 0 \\
\frac{1}{d+1} & a \neq 0 \land a' \neq 0 \land a = a'\\
\frac{2}{(d+1)(d+2)}(-1)^{\beta(a,a')} & a\neq 0 \,\, \land \,\, a' \neq 0 \,\, \land \,\, a \neq a' \,\, \land \,\, \symplecticp{a}{a'} = 0 \\
0 & \text{else.}
\end{cases}
\end{align}
 For the second moment $\EE(\estimator{o})$ we need the fractions $\frac{s_{a,a'}}{s_a s_a'}$, which using $s_a = (d+1)^{-1}$ for $a \neq 0$ and $s_0 = 1$ are found to be
\begin{align} \label{eq-app: s-coefficient cases}
\frac{s_{a,a'}}{s_a s_{a'}} = 
\begin{cases} 
1 & a = 0 \lor a' = 0 \\
d+1 & a \neq 0 \land a' \neq 0 \land a = a'\\
\frac{2(d+1)}{d+2}(-1)^{\beta(a,a')} & a\neq 0 \,\, \land \,\, a' \neq 0 \,\, \land \,\, a \neq a' \,\, \land \,\, \symplecticp{a}{a'} = 0 \\
0 & \text{else.}
\end{cases}
\end{align}
\end{proof}

\subsection{The local Clifford group}
\label{app:variance local Cliffords}

For the local Clifford group, we begin by arguing that for uniform sampling and without the presence of noise, the frame operator factorizes. First note that the Z-basis measurement operator factorizes as
\begin{align}
 M_3 = \frac{1}{\sqrt{d}} \left(\sum_{z,z' \in \FF_2} \kkbrar{\n Z_z}{\n Z_{z'}}{\n Z_{z+z'}}\right)^{\otimes n} \, .
\end{align}
Consequently, we find:
\begin{align}
S_3 
&= \frac{d^{-1/2}}{\abs{\Cl_1^{\otimes n}}} \sum_{g \in \Cl_1^{\otimes n}} \omega^{\dagger}(g)^{\otimes 2} M_3 \omega(g) \\
&= \frac{1}{\sqrt{d}} \left(\frac{1}{\abs{\Cl_1}}\sum_{g \in \Cl_1}\sum_{z,z' \in \FF_2}  \omega^{\dagger}(g)^{\otimes 2} \kkbrar{\n Z_z}{\n Z_{z'}}{\n Z_{z+z'}} \,  \omega(g)\right)^{\otimes n} \\
&= \frac{1}{\sqrt{d}} \left(\sum_{a,a' \in \FF_2^2} s_{a,a'} \kkbrar{\sn_a}{\sn_{a'}}{\sn_{a+a'}}\right)^{\otimes n}\, ,
\end{align}
where we used \cref{eq:S3 definition} for the case $n=1$ in the last line.
Since $S$ also factorizes for the local Clifford group, one can easily verify that $S^{-1} \otimes S^{-1} S_3 = 
\frac{1}{\sqrt{d}} \left(\sum_{a,a' \in \FF_2^2} \frac{s_{a,a'}}{s_a s_{a'}} \kkbrar{\sn_a}{\sn_{a'}}{\sn_{a+a'}}\right)^{\otimes n}$. The coefficients were already determined in \cref{eq-app: s-coefficient cases} and for $n=1$ we can simplify them to get (the third case cannot occur):
\begin{align} \label{eq-app: local s-coefficient cases}
\frac{s_{a,a'}}{s_a s_a'} \overset{n = 1}{=} 
\begin{cases} 
1 & a = 0 \lor a' = 0 \\
3 & a \neq 0 \land a' \neq 0 \land a = a'\\
0 & \text{else.}
\end{cases}
\end{align}

For $n>1$ qubits, recall that we label Pauli operators as $\sigma_a = \sigma_{a_1} \otimes \dots \otimes \sigma_{a_n}$ where $a_i \in \FF_2^2$.
The coefficients are then found as products of the above single qubit coefficients:
\begin{align}\label{eq-app: k-local s-coefficients}
\frac{s_{a,a'}}{s_a s_a'}  =  \prod_{i = 1}^n \frac{s_{a_i,a'_i}}{s_{a_i} s_{a'_i}} =
\begin{cases} 
0 & \exists i \in \supp(a) \cap \supp(a'):\, a_i \neq a'_i \\
3^{|\supp(a) \cap \supp(a')|} & \text{else.}
\end{cases}
\end{align}

\subsection{Variance bound}
\thmVariance*
\begin{proof}
The variance is given by $\Var[\hat o] = \EE [\hat o - \EE [\hat o]]^2$ and $O = O_0 + \Tr(O) \ii/d$ as before.
We have $\sandwichr{\ii}{S^{-1}\omega^{\dagger}(g)}{E_x} = 1$ and $\sandwichr{\ii}{S^{-1}\tilde S}{\rho} = \braketr{\ii}{\rho} = 1$ since the involved superoperators and the noise is trace preserving.
We then find:
\begin{align}
  \hat o(g,x) - \EE [\hat o] = \sandwichr{O}{S^{-1}\omega^{\dagger}(g)}{E_x} - \sandwichr{O}{S^{-1}\tilde S}{\rho} = 
  \sandwichr{O_0}{S^{-1}\omega^{\dagger}(g)}{E_x} - \sandwichr{O_0}{S^{-1}\tilde S}{\rho} = \hat o_0(g,x) - \EE [\hat o_0]
\end{align}
and the variance only depends on the traceless part $\Var(\hat o) = \Var(\hat o_0)$.
We will now bound the variance for the global Clifford protocol in terms of the second moment: $\Var[\hat o_0] = \EE[\hat o_0^2] - (\EE [\hat o_0])^2 \leq \EE[\hat o_0^2]$.
To rewrite the second moment given in \cref{eq: noisy second moment}, we first note that $\bar \Lambda_{a,a'} = \bar \Lambda_{a',a}$ which can be seen from its definition and from $s_{a,a'} = s_{a',a}$. 
The second moment for a traceless observable $O_0$ can then be written as 
\begin{align}
  \EE(\estimator{o}_0^2)
  &= \frac{1}{\sqrt{d}} \sum_{\substack{a \neq 0, a' \neq 0: \\ \symplecticp{a}{a'} = 0}} 
  \frac{s_{a,a'}}{s_a s_{a'}} \braketr{O_0}{\sn_a} \braketr{O_0}{\sn_{a'}} \brar{\sn_{a+a'}}\bar\Lambda_{a,a'}\ketr{\rho} \\
  &= \frac{1}{\sqrt{d}} \frac{2(d+1)}{d+2} \sum_{\substack{a\neq 0, a' \neq 0 \\ a \neq a', \symplecticp{a}{a'} = 0 }} (-1)^{\beta(a,a')} \braketr{O_0}{\sn_a} \braketr{O_0}{\sn_{a'}} \brar{\sn_{a+a'}}\bar\Lambda_{a,a'}\ketr{\rho} + \frac{d+1}{\sqrt{d}} \sum_{a\neq 0}\braketr{O_0}{\sn_a}^2 \brar{\onen}\bar\Lambda_{a,a}\ketr{\rho}\, . \\
\end{align}
We can bound the sums by using $\abs{\sandwichr{\sn_{a+a'}}{\bar\Lambda_{a,a'}}{\rho}} \leq \frac{1}{\sqrt{d}}$ to get

\begin{align}
\EE(\estimator{o}_0^2) &\leq \frac{2(d+1)}{d(d+2)}\sum_{\substack{a\neq 0, a' \neq 0 \\ a \neq a', \symplecticp{a}{a'} = 0 }} \abs{\braketr{O_0}{\sn_a}\braketr{O_0}{\sn_{a'}}} 
  + \frac{d+1}{d} \twonorm{O_0}^2 \\
  &\leq \frac{2(d+1)}{d(d+2)}\sum_{a\neq 0} \abs{\braketr{O_0}{\sn_a}} \sum_{a' \neq 0} \abs{\braketr{O_0}{\sn_{a'}}}  + \frac{d+1}{d} \twonorm{O_0}^2 \\
  &= \frac{2(d+1)}{d+2} \stnorm{O_0}^2  + \frac{d+1}{d} \twonorm{O_0}^2 \, .
\end{align} 

To bound the second moment for the local Clifford group, we start at the general form for Clifford protocols given in \cref{eq: noisy second moment} and use again that $\abs{\sandwichr{\sn_{a+a'}}{\bar\Lambda_{a,a'}}{\rho}} \leq \frac{1}{\sqrt{d}}$. 
This leads us to 
\begin{align}
    \EE(\estimator{o}^2) 
    &= \frac{1}{\sqrt{d}} \sum_{\substack{a, a' \in \FF_2^{2n}: \\ \symplecticp{a}{a'} = 0}} 
      \frac{s_{a,a'}}{s_a s_{a'}} \braketr{O}{\sn_a} \braketr{O}{\sn_{a'}} \sandwichr{\sn_{a+a'}}{\bar\Lambda_{a,a'}}{\rho}  
    \leq \frac{1}{d} \sum_{\substack{a, a' \in \FF_2^{2n}: \\ \symplecticp{a}{a'} = 0}} 
      \frac{s_{a,a'}}{s_a s_{a'}} \abs{\braketr{O}{\sn_a} \braketr{O}{\sn_{a'}}} \, . \label{eq-app: local clifford second moment bound}
\end{align}

We first consider $k$-local Pauli observables $P = \sigma_{b_1}\otimes \cdots \otimes \sigma_{b_n}$, where $b_i = 0$ on $n-k$ sites.
Then, \cref{eq-app: local clifford second moment bound} factorizes and using \cref{eq-app: local s-coefficient cases} we find that 
\begin{align} \label{eq-app: local Clifford local Pauli bound}
    \EE(\estimator{o}_P^2) 
    \leq \frac{1}{d} \prod_{i = 1}^n  \sum_{\substack{a_i, a'_i \in \FF_2^2: \\ \symplecticp{a_i}{a'_i} = 0}} 
      \frac{s_{a_i,a'_i}}{s_{a_i} s_{a'_i}} \abs{ \braketr{\s_{b_i}}{\sn_{a_i}} \braketr{\s_{b_i}}{\sn_{a_i'}} } 
    = \frac{1}{d} \prod_{i = 1}^n  \sum_{\substack{a_i, a'_i \in \FF_2^2: \\ \symplecticp{a_i}{a'_i} = 0}} 
      \frac{s_{a_i,a'_i}}{s_{a_i} s_{a'_i}} d \, \delta_{a_i,b_i} \delta_{a_i',b_i}
    = 3^k\,.
\end{align}
The proof for $k$-local observables $O = O_\mathrm{loc}\otimes \ii^{\otimes (n-k)}$ is almost identical to the proof without noise given in \citet{Huang2020Predicting}, and we repeat the relevant steps here. Akin to \cref{eq-app: local Clifford local Pauli bound} we find that
\begin{align} 
    \EE(\estimator{o}^2) &\leq \frac{1}{2^k} \sum_{\substack{a, a' \in \FF_2^{2k}: \\ \symplecticp{a}{a'} = 0}} \frac{s_{a,a'}}{s_{a} s_{a'}} \abs{ \braketr{O_\mathrm{loc}}{\sn_a} \braketr{O_\mathrm{loc}}{\sn_{a'}} }\, .
    \label{eq:local Clifford local observable}
\end{align}
In the following, we drop the condition $\symplecticp{a}{a'} = 0$ in the sum since we have $s_{a,a'} = 0$ for $\symplecticp{a}{a'} \neq 0$.
Let $\FF_2^{2*} = \FF_2^2\setminus\{0\}$ label the single-qubit non-identity Paulis and define a partial order on $\FF_2^{2k}$ such that $a \leq b$ iff for all $i \in [k]$ either $a_i = b_i$ or $a_i = 0$ holds. 
By \cref{eq-app: k-local s-coefficients}, pairs $a, a' \in \FF_2^{2k}$ which do not coincide on their common support do not contribute to the sum in \cref{eq:local Clifford local observable}.
For the remaining pairs, we can always find a $b \in (\FF_2^{2*})^k$ such that $a\leq b$ and $a'\leq b$. 
We can thus replace the sum in \cref{eq:local Clifford local observable} as $\sum_{a, a' \in \FF_2^{2k}} \rightarrow \sum_{b \in (\FF_2^{2*})^k}\sum_{a \leq b}\sum_{a' \leq b}$ if we take care of potential over counting. 
To this end, note that we are free to choose $b_i$ for every index $i$ where $a_i = a'_i = 0$ and there are in total $k - \abs{\supp(a) \cup \supp(a')} = k - \abs{\supp(a)} - \abs{\supp(a')} + \abs{\supp(a)\cap \supp(a')}$ of such indices.
Using $\frac{s_{a,a'}}{s_{a} s_{a'}} = 3^{\abs{\supp(a)\cap \supp(a')}}$, we find:
\begin{align}
\frac{1}{2^k} \sum_{a, a' \in \FF_2^{2k}} \frac{s_{a,a'}}{s_{a} s_{a'}} \abs{ \braketr{O_\mathrm{loc}}{\sn_a} \braketr{O_\mathrm{loc}}{\sn_{a'}} }
 &= \frac{1}{2^k} \sum_{b \in (\FF_2^{2*})^k}\sum_{\substack{a \leq b \\ a' \leq b}} 
 \frac{3^{\abs{\supp(a)} + \abs{\supp(a')}}}{3^{k}} \abs{ \braketr{O_\mathrm{loc}}{\sn_a} \braketr{O_\mathrm{loc}}{\sn_{a'}} } \\
 &= \frac{1}{2^k 3^k} \sum_{b \in (\FF_2^{2*})^k} \left(\sum_{a \leq b} 3^{\abs{\supp(a)}} \abs{\braketr{O_\mathrm{loc}}{\sn_{a}}}\right)^2
\, .
\end{align}
With $\sum_{a \leq b} 3^{\abs{\supp(a)}} = 4^k$ for all $b \in (\FF_2^{2*})^k$ \cite{Huang2020Predicting} and the Cauchy-Schwarz inequality, this can be simplified as follows:
\begin{align} 
\frac{1}{2^k 3^k} \sum_{b \in (\FF_2^{2*})^k} \left(\sum_{a \leq b} 3^{\abs{\supp(a)}}  \abs{\braketr{O_\mathrm{loc}}{\sn_{a}}}\right)^2 
&\leq \frac{1}{2^k 3^k} \sum_{b \in (\FF_2^{2*})^k} \sum_{a \leq b} 3^{\abs{\supp(a)}}  \sum_{a' \leq b} 3^{\abs{\supp(a')}} \braketr{O_\mathrm{loc}}{\sn_{a'}}^2 \\
&= \frac{4^k}{2^k} \sum_{b \in (\FF_2^{2*})^k} \sum_{a' \leq b} \frac{3^{\abs{\supp(a')}}}{3^k}\, \braketr{O_\mathrm{loc}}{\sn_{a'}}^2 
\, .
\end{align}
Note that in the last line, the number of times each $a' \in \FF_2^{2k}$ appears in the double sum is given by $3^{k-\abs{\supp(a')}}$ and therefore $\sum_{b \in (\FF_2^{2*})^k}\sum_{a'\leq b} \frac{3^{\abs{\supp(a')}}}{3^k} f(a') = \sum_{a' \in \FF_4^k} f(a')$ for arbitrary summands $f(a')$. 
This leads us to the final result 
\begin{align}
\frac{4^k}{2^k} \sum_{b \in (\FF_2^{2*})^k} \sum_{a' \leq b} \frac{3^{\abs{\supp(a')}}}{3^k}\, \braketr{O_\mathrm{loc}}{\sn_{a'}}^2 
= \frac{4^k}{2^k} \sum_{a' \in \FF_2^{2k}} \braketr{O_\mathrm{loc}}{\sn_{a'}}^2 
= 2^k \twonorm{O_\mathrm{loc}}^2 
\leq 4^k \snorm{O_\mathrm{loc}}^2
\, .
\end{align}

\end{proof}

\section{Robust shadow estimation under gate-dependent noise}\label{app: robust shadows}
A prominent noise mitigation technique for shadow estimation are the \emph{robust classical shadows} developed by \citet{Chen21RobustShadowEstimation}. 
Robust classical shadows rely on the assumption of gate-independent left-noise, 
i.e.\ 
\begin{equation}\label{eq:gate-independent_noise}
  \phi(g) = \Lambda \omega(g)
\end{equation}
for a $g$-independent channel $\Lambda$. 
In this case, the frame operator can be written as $\tilde{\mc S} = \sum_{\lambda \in \mathrm{Irr}(G)} f_{\lambda} \Pi_{\lambda}$, where $\mathrm{Irr}(G)$ is the set of irreducible representations of the group G and $\Pi_{\lambda}$ are projectors onto invariant subspaces. 
For the global Clifford group and trace preserving noise, this reduces to $\tilde S = \ketbrar{\onen}{\onen} + f \sum_{a \neq 0} \ketbrar{\sn_a}{\sn_a}$, meaning that a single parameter $f$ needs to be estimated in order to have full knowledge of $\tilde S$, which allows for the mitigation with $\tilde S^{-1}$ in post-processing. In \citet{Chen21RobustShadowEstimation}, $f$ is determined by the median of means of the single shot estimator 
\begin{equation}
  \hat f(g,x) \coloneqq (d \brar{E_x}\omega(g)\ketr{E_0} - 1)/(d-1)\, 
\end{equation}
where for each round $\ket{0}$ state is prepared, a random operation $g \in G$ is applied and the result $x$ is stored. 
It is then shown that the expectation value for gate-independent left noise satisfies $\EE[\hat f(g,x)] = f$. 

We will now show how gate-independent left noise can be written in our gate-dependent right noise model and the effect it has on the noisy frame operator. This provides an understanding as to how left- and right noise interrelate and shows consistency of our formalism with previous works on gate independent noise \cite{Chen21RobustShadowEstimation,KohGrewal:2022:ClassicalShadows}.
The assumption \eqref{eq:gate-independent_noise} translates to $\phi(g) = \omega(g) \omega^{\dagger}(g) \Lambda \omega(g) = \omega(g) \Lambda(g)$ for $\Lambda(g) =  \omega^{\dagger}(g) \Lambda \omega(g)$. 
The average noise channels $\bar \Lambda_a$ of \Cref{lem: Noise averaging} for uniform sampling from the global Clifford group are given by 
\begin{align}\label{eq: gate independent noise average}
\bar \Lambda_a  = \frac{(d+1)}{|\Cl_n|}\sum_{\vec z \in \FF_2^n} \sum_{g \in \Xi^{-1}_{z}(a)} \Lambda(g) = \frac{(d+1)}{|\Cl_n|}\sum_{\vec z \in \FF_2^n} \sum_{g \in \Xi^{-1}_{z}(a)} \omega^{\dagger}(g) \Lambda \omega(g).
\end{align}
We know that for every $g\in \Xi^{-1}_{z}(a)$, $hg \in \Xi^{-1}_{z}(a)$ with $h \in \HW_n$, since
\begin{align}
\omega^{\dagger}(hg) \ketbrar{\normalize Z_z}{\normalize Z_z}\omega(hg) = \omega^{\dagger}(g) \omega^{\dagger}(h) \ketbrar{\normalize Z_z}{\normalize Z_z}\omega(h) \omega(g) = \omega^{\dagger}(g) \ketbrar{\normalize Z_z}{\normalize Z_z}\omega(g).
\end{align}
This means that the average in \cref{eq: gate independent noise average} contains an average $\frac{1}{|\HW_n|} \sum_{h \in \HW_n}\omega^{\dagger}(g) \Lambda \omega(g)$, which is commonly referred to as a Pauli twirl and the result is a Pauli channel (see e.g.\ \cite{Flammia2019EfficientEstimation}). In the following we will restrict ourselves to $a \neq 0$ where we will need the inverse relation 
\begin{align}
\frac{(d+1)}{|\Cl_n|}\sum_{\vec z \in \FF_2^n} \sum_{g \in \Xi^{-1}_{z}(a)} \omega(g) \ketbrar{\sn_a}{\sn_a}\omega(g)^{\dagger} 
&= \frac{(d+1)}{|\Cl_n|}\sum_{\vec z \in \FF_2^n} \abs{\Xi^{-1}_{z}(a)} \ketbrar{\normalize Z_z}{\normalize Z_z}
= \frac{1}{d-1} \sum_{\vec z \in \FF_2^n} \ketbrar{\normalize Z_z}{\normalize Z_z}\,,
\end{align}
which can be verified using \Cref{prop: Global Clifford average structure}.
The diagonal entries of $\bar \Lambda_a$ for $a \neq 0$ can then be determined by
\begin{align}
  \frac{(d+1)}{|\Cl_n|}\sum_{\vec z \in \FF_2^n} \sum_{g \in \Xi^{-1}_{z}(a)} \brar{\sn_a}\omega(g)^{\dagger}\Lambda \omega(g) \ketr{\sn_a} 
  &=  \frac{(d+1)}{|\Cl_n|}\sum_{\vec z \in \FF_2^n} \sum_{g \in \Xi^{-1}_{z}(a)} \Tr\left[\Lambda \omega(g) \ketbrar{\sn_a}{\sn_a}\omega(g)^{\dagger}\right] \\
  &= \frac{1}{d-1} \Tr\left[ \Lambda \sum_{\vec z \in \FF_2^n\backslash \vec 0} \ketbrar{\normalize Z_z}{\normalize Z_z} \right] \, .
\end{align}
In summary, the average channels $\bar \Lambda_a$ are diagonal since they are Pauli channels, and they all share the same diagonal elements $\brar{\sn_a}\bar \Lambda_a \ketr{\sn_a}$, which are independent of $a$. Note that $\brar{\onen} \Lambda \ketr{\onen} = 1$ for trace preserving or unital noise and let $f =  \Tr\left[ \Lambda \sum_{\vec z \in \FF_2^n\backslash \vec 0} \ketbrar{\normalize Z_z}{\normalize Z_z} \right]/(d^2-1) =  (\Tr[\Lambda M]-1)/(d^2-1)$. Then the noisy frame operator is given by 
\begin{align}
\tilde S = \ketbrar{\onen}{\onen} + \frac{1}{d+1} \sum_{a \neq 0} \ketbrar{\sn_a}{\sn_a}\bar \Lambda_a  
= \ketbrar{\onen}{\onen} + f \sum_{a \neq 0} \ketbrar{\sn_a}{\sn_a}\, 
\end{align}
which is the result of \cite{Chen21RobustShadowEstimation} for the global Clifford group. \\

We will now turn our attention back to gate-dependent noise with the following lemma. First, we define $\Zsf^n \coloneqq \{ 00, 01 \}^n$, the index set for diagonal Paulis.

\begin{lem}\label{lem: Robust shadow parameter}
The mitigation parameter in the robust shadow estimation protocol for uniform sampling over the global Clifford group under gate-dependent Pauli noise is given by 
\begin{align} \label{eq-app: RS mitigation parameter}
\EE_{g,x} [\hat f(g,x)] = \frac{1}{d+1} \EE_{a \in \Zsf^n\backslash 0} \bar \lambda_a\, .
\end{align}
\end{lem}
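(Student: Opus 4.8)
The plan is to evaluate $\EE_{g,x}[\hat f(g,x)]$ directly, carefully distinguishing the two roles of the measurement outcome: $x$ is drawn from the \emph{noisy} distribution $p(x\mid g) = \brar{E_x}\phi(g)\ketr{E_0}$ with $\phi(g)=\omega(g)\Lambda(g)$, whereas the estimator function $\hat f(g,x)=(d\brar{E_x}\omega(g)\ketr{E_0}-1)/(d-1)$ is evaluated with the \emph{ideal} channel $\omega(g)$. First I would write $\EE_{g,x}[\hat f]=\EE_g\sum_x p(x\mid g)\hat f(g,x)$ and split off the constant $-1/(d-1)$; since $\sum_x\brar{E_x}=\brar{\1}$ and $\phi(g)$ is trace preserving, this piece contributes exactly $-1/(d-1)$. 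For the remaining term I would contract the two real scalars $\brar{E_x}\phi(g)\ketr{E_0}$ and $\brar{E_x}\omega(g)\ketr{E_0}$ over $x$ using $M=\sum_x\ketbrar{E_x}{E_x}$, turning the sum into $\frac{d}{d-1}\brar{E_0}\,\EE_g[\phi(g)^\dagger M\omega(g)]\,\ketr{E_0}$.

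Next I would exploit the Pauli-noise structure exactly as in the proof of \cref{lem: Noise averaging}. Because Pauli channels are self-adjoint, $\phi(g)^\dagger M\omega(g)=\Lambda(g)\,\omega(g)^\dagger M\omega(g)$, and the Clifford action $\omega(g)^\dagger\ketr{\n Z_z}=(-1)^{\varphi_z(g)}\ketr{\normalize\Xi_z(g)}$ makes the signs cancel, so $\omega(g)^\dagger M\omega(g)=\sum_z\ketbrar{\normalize\Xi_z(g)}{\normalize\Xi_z(g)}$. Applying the diagonal channel gives $\Lambda(g)\omega(g)^\dagger M\omega(g)=\sum_z\lambda_{\Xi_z(g)}(g)\ketbrar{\normalize\Xi_z(g)}{\normalize\Xi_z(g)}$. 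Sandwiching with $E_0=\frac1d\sum_{z'}Z_{z'}$ and using $|\braketr{E_0}{\sn_b}|^2=1/d$ when $\sigma_b$ is diagonal and $0$ otherwise, only output Paulis with $\Xi_z(g)\in\Zsf^n$ survive, leaving $\frac1d\sum_{z:\,\Xi_z(g)\in\Zsf^n}\lambda_{\Xi_z(g)}(g)$.

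The decisive step is to reindex the $g$-average by the \emph{output} label $a$: grouping all pairs $(z,g)$ with $\Xi_z(g)=\sigma_a$ and recalling from \cref{lem: Noise averaging} that $\frac{1}{|\Cl_n|}\sum_z\sum_{g\in\Xi_z^{-1}(\sigma_a)}\lambda_a(g)=s_a\bar\lambda_a$ for uniform sampling, I obtain $\EE_g[\,\cdots]=\frac1d\sum_{a\in\Zsf^n}s_a\bar\lambda_a$. Separating the identity term ($s_0=1$, $\bar\lambda_0=1$, contributing $1/d$) from the $a\neq 0$ terms (where $s_a=1/(d+1)$) yields $\frac1d\big(1+\frac{1}{d+1}\sum_{a\in\Zsf^n\setminus 0}\bar\lambda_a\big)$. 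Assembling this with the prefactor $d/(d-1)$ and the constant $-1/(d-1)$, the $1/d$ pieces cancel and I arrive at $\EE_{g,x}[\hat f]=\frac{1}{(d-1)(d+1)}\sum_{a\in\Zsf^n\setminus 0}\bar\lambda_a$, which is $\frac{1}{d+1}\EE_{a\in\Zsf^n\setminus 0}\bar\lambda_a$ since $|\Zsf^n\setminus 0|=d-1$. The main obstacle is the bookkeeping in this reindexing: keeping the noisy outcome distribution strictly separate from the ideally evaluated estimator, correctly arguing that only diagonal output Paulis contribute, and matching the resulting sum to the $s_a\bar\lambda_a$ normalization of \cref{lem: Noise averaging}; the final cancellation is then routine, and it is reassuring to check that the formula reduces to $1/(d+1)$ in the noiseless case and recovers the gate-independent value $f$ of \citet{Chen21RobustShadowEstimation} when all $\bar\lambda_a$ coincide.
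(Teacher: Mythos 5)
Your proposal is correct and follows essentially the same route as the paper: both reduce $\EE_{g,x}[\hat f]$ to the matrix element $\brar{E_0}\tilde S\ketr{E_0}$ (your $\phi(g)^\dagger M\omega(g)$ form is the adjoint of the paper's $\omega(g)^\dagger M\phi(g)$ and agrees on the real sandwich with $E_0$), expand $E_0$ in diagonal Paulis, and invoke the structure and coefficients $s_0=1$, $s_{a\neq 0}=1/(d+1)$ from \cref{lem: Noise averaging} to isolate $\sum_{a\in\Zsf^n\setminus 0}\bar\lambda_a$. The bookkeeping you flag as the main obstacle (noisy outcome distribution versus ideally evaluated estimator, and the reindexing by output label) is handled identically in the paper's proof.
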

\begin{proof}
We need to compute the expectation value of the single shot estimator $\hat f(g,x) = \frac{d \cdot \brar{E_x}\omega(g)\ketr{E_0} - 1}{d-1}$. 
In the first step we rewrite the expectation value of $\brar{E_x}\omega(g)\ketr{E_0}
 $.
\begin{align*}
\EE_{g,x} \brar{E_x}\omega(g)\ketr{E_0} 
&= \frac{1}{|\Cl_n|} \sum_{g \in \Cl_n} \sum_{x \in \FF_2^n} \brar{E_x}\omega(g)\ketr{E_0} \brar{E_x}\omega(g)\Lambda(g)\ketr{E_0} \\
&= \frac{1}{|\Cl_n|} \sum_{g \in \Cl_n} \sum_{x \in \FF_2^n} \brar{E_0}\omega(g)^{\dagger} \ketr{E_x}\brar{E_x}\omega(g)\Lambda(g)\ketr{E_0}\\
& = \brar{E_0} \tilde S \ketr{E_0}\, .
\end{align*}
By noting that $\ketr{E_0} = \left(\frac{1}{\sqrt{2}} (\ketr{\onen} + \ketr{\n Z})\right)^{\otimes n} = \frac{1}{\sqrt{d}}\sum_{z \in \FF_2^n} \ketr{\n Z(z)}$ we have $\brar{E_0} \tilde S \ketr{E_0} = \frac{1}{d}\sum_{z,z' \in \FF_2^n} \brar{\n Z(z)}\tilde S \ketr{\n Z(z')}$. 
With the use of \Cref{lem: Noise averaging} and $s_a = 1/(d+1)$ for $a \neq 0$ and $s_0 = 1$, we find:
\begin{align*}
\EE_{g,x} \brar{E_x}\omega(g)\ketr{E_0}  &= \frac{1}{d} + \frac{1}{d(d+1)} \sum_{a \in \Zsf^n\backslash 0} \bar \lambda_a\, .
\end{align*}
It follows that 
\begin{equation}
 \EE_{g,x} \hat f(g,x) = \frac{d \EE_{g,x} \brar{E_x}\omega(g)\ketr{E_0}  - 1}{d-1} = \frac{1}{(d+1)(d-1)}\sum_{a \in \Zsf^n\backslash 0} \bar \lambda_{a} = \frac{1}{d+1} \EE_{a \in \Zsf^n\backslash 0} \bar \lambda_a
\end{equation}
\end{proof}
Since \cref{eq-app: RS mitigation parameter} shows that for Pauli noise $\EE[\hat f(g,x)]$ always contains the factor $1/(d+1)$, we define the mitigation parameter as $\hat f_m \coloneqq (d+1) \EE[\hat f(g,x)]$.
The final estimate of the robust shadow estimation procedure is then given by
\begin{align}\label{eq: Robust shadow estimator}
\EE[ \estimator{o}_{RS}] \coloneqq \brar{O}\Big(\ketbrar{\onen}{\onen} + \frac{d+1}{\hat f_m} \sum_{a \neq 0} \ketbrar{\sn_a}{\sn_a}\Big) \tilde S \ketr{\rho}\,.
\end{align}

We will now show that whether the robust shadow estimation strategy succeeds for gate-dependent noise is highly dependent on the initial state, the observable and the noise present in the experiment.
As can be seen via a simple example, the error can actually be dramatically increased if the noise is not of left gate-independent form. 

\begin{restatable}{prop}{RSerror}\label{prop: Robust shadow exponential error}
  Under gate-dependent local noise $\phi_{\epsilon}(g) = (1-\epsilon) \omega(g) + \epsilon\, \omega(g)\Lambda(g)$, using the robust shadow estimate $\hat{o}_\mathrm{RS}$ with the global Clifford group can introduce a bias $\bias{o_\mathrm{RS}}{O} \geq \abs{\av{O_0}(\frac{1}{2}(1 + \epsilon)^n-1 )}$.
\end{restatable}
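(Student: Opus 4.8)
The plan is to prove the claim by exhibiting an explicit adversarial example: a product observable $O$, an input state $\rho$, and a local gate-dependent noise $\Lambda(g)$ whose effect is aligned with the non-identity Pauli support of $O$ and $\rho$ but invisible to the calibration step of robust shadow estimation. The natural starting point is the linearity of the frame operator in the implementation map: since $\phi_\epsilon(g)=(1-\epsilon)\omega(g)+\epsilon\,\omega(g)\Lambda(g)$, we have $\tilde S_\epsilon=(1-\epsilon)S+\epsilon\tilde S_1$, and correspondingly each average channel splits as $\bar\Lambda_a^{(\epsilon)}=(1-\epsilon)\id+\epsilon\bar\Lambda_a^{(1)}$ while the calibration parameter splits as $\hat{f}_m=(1-\epsilon)+\epsilon\hat{f}_m^{(1)}$. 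Using the form of $\tilde S$ from \cref{lem: Noise averaging} together with $s_0=1$ and $s_a=1/(d+1)$ for $a\neq 0$, the robust estimate \eqref{eq: Robust shadow estimator} collapses to $\EE[\hat o_{\mathrm{RS}}]=\Tr(O)/d+\hat{f}_m^{-1}\sum_{a\neq 0}\braketr{O}{\sn_a}\brar{\sn_a}\bar\Lambda_a^{(\epsilon)}\ketr{\rho}$, so that the bias is entirely governed by the interplay of the scalar rescaling $\hat{f}_m^{-1}$ with the sector-dependent matrix elements $\brar{\sn_a}\bar\Lambda_a^{(\epsilon)}\ketr{\rho}$.

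The structural observation I would exploit is that the calibration estimator of \cref{lem: Robust shadow parameter} probes only the diagonal ($\Zsf^n$) Pauli sector through the $\ket{0}$ probe, whereas the uniform correction $\tfrac{d+1}{\hat{f}_m}$ is then applied to \emph{all} non-identity Paulis. I would therefore choose $\Lambda(g)$ to act benignly on the diagonal Paulis indexed by $\Zsf^n$---so that $\hat{f}_m$ remains close to its noise-free value and the calibration "detects no noise"---while acting non-trivially (through a non-unital coherent tilt) on the off-diagonal sector carrying $O$ and $\rho$. Taking $O$ and $\rho$ to be products aligned with a single off-diagonal Pauli type, the relevant matrix element factorizes into $n$ single-qubit contributions, each of which the aligned noise amplifies by a factor $\propto(1+\epsilon)$; their product yields the advertised $\tfrac12(1+\epsilon)^n$ scaling, while the \emph{standard} estimator stays essentially unbiased because it applies the correct uniform rescaling $d+1$ rather than the mis-tuned $(d+1)/\hat{f}_m$.

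Concretely, the steps are: (i) fix the product data $O,\rho$ and the local noise $\Lambda(g)$; (ii) evaluate $\hat{f}_m$ from \cref{lem: Robust shadow parameter} and verify it stays benign; (iii) evaluate the off-diagonal matrix elements $\brar{\sn_a}\bar\Lambda_a^{(\epsilon)}\ketr{\rho}$ via the explicit global-Clifford average of \cref{prop: Global Clifford average structure}, and resum over the aligned index set to obtain the product $\tfrac12(1+\epsilon)^n\,\av{O_0}$; (iv) subtract the true value $\av{O_0}$ (the identity term being common) to read off $\bias{o_\mathrm{RS}}{O}\ge\abs{\av{O_0}\bigl(\tfrac12(1+\epsilon)^n-1\bigr)}$. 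The main obstacle I anticipate is step (iii): for \emph{global} Clifford sampling the averaged channels $\bar\Lambda_a$ are twirls over the large coset $\mathrm{St}_{e_1}g_a$ and do not factorize across qubits, so I must verify that the chosen local noise is such that the relevant sector's matrix elements nevertheless assemble into a genuine $n$-fold product, while the diagonal-sector twirl simultaneously keeps $\hat{f}_m$ controlled. Establishing this compatibility---noise that survives the global twirl multiplicatively on the observable's sector yet averages away on the calibration sector---is the crux of the construction.
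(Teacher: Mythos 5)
There is a genuine gap: your mechanism is inverted relative to the one that actually makes the robust estimator fail. You propose noise that is invisible to the calibration step (so that $\hat f_m$ stays at its noise-free value $1$) while amplifying the matrix elements on the Pauli sector carrying $O$ and $\rho$. But if $\hat f_m$ is unchanged, the robust inverse $\ketbrar{\onen}{\onen}+\frac{d+1}{\hat f_m}\sum_{a\neq0}\ketbrar{\sn_a}{\sn_a}$ coincides with $S^{-1}$, so $\hat o_{\mathrm{RS}}$ and the standard estimator are the \emph{same} linear functional applied to the same noisy data: any bias you manufacture this way afflicts both equally, contradicting your own claim that the standard estimator ``stays essentially unbiased.'' Conversely, if the noise genuinely amplifies $\brar{\sn_a}\bar\Lambda_a\ketr{\rho}$ by $(1+\epsilon)$ per qubit on the observable's support, then the unmitigated estimator already carries the full $(1+\epsilon)^n$ bias --- that is essentially the content of \cref{prop: 1-norm bound saturation}, not of this statement, whose point is that \ac{RSE} can \emph{create} a bias where the standard estimator has none.

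The paper achieves this with the opposite alignment. It takes gate-independent \emph{right} noise $\Lambda(g)=\mc X^{\otimes n}$ (a local bit flip), so that $S^{-1}\tilde S=\Lambda_\epsilon^{\otimes n}$ acts trivially on $X$-type Paulis and with eigenvalue $1-2\epsilon$ on $Y$- and $Z$-type Paulis, and chooses $O=(\ketbra{+}{+})^{\otimes n}$, supported only on $X$-type Paulis; the standard estimator is then exactly unbiased. The calibration, however, probes only the diagonal ($\Zsf^n$) sector, where the noise \emph{does} act, and returns $\hat f_m=\frac{d(1-\epsilon)^n-1}{d-1}<1$; dividing by this inflates the $X$-sector (which needed no correction) by $\frac{d-1}{d(1-\epsilon)^n-1}\geq\frac{d-1}{d}(1+\epsilon)^n$, which gives the claimed bound. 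So the calibration must \emph{detect} noise that the observable never feels, not be blind to noise that it does feel. A secondary unresolved point in your plan is step (iii): for genuinely gate-dependent local noise the global-Clifford averages $\bar\Lambda_a$ are twirls over large cosets and need not factorize across qubits; you correctly flag this as the crux but do not resolve it, whereas the paper sidesteps it entirely because for gate-independent right noise one has $\bar\Lambda_a=\Lambda_\epsilon^{\otimes n}$ for every $a$.
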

\begin{proof}
Consider the local bit flip channel $\Lambda(g) = \mc X$ for all $g \in \Cl_n$ and let $\Lambda_{\epsilon}$ be the bit flip channel with error probability $\epsilon$, i.e. $\Lambda_{\epsilon} = (1-\epsilon) \id + \epsilon \Lambda$. Its process matrix is given by $\Lambda_{\epsilon} = \ketbrar{\onen}{\onen} + \ketbrar{\n X}{\n X} + (1-2\epsilon)(\ketbrar{\n Y}{\n Y} + \ketbrar{\normalize Z}{\normalize Z})$.  If this Pauli channel acts on all qubits before each Clifford gate, then the global implementation map is $\phi_{\epsilon}(g) = \omega(g) \Lambda_{\epsilon}^{\otimes n}$.
Since this right noise channel is gate-independent, per \Cref{lem: Noise averaging} we get $\tilde{S} = (\ketbrar{\onen}{\onen} + \frac{1}{d+1} \sum_{a \neq 0}\ketbrar{\sn_a}{\sn_a}) \cdot \Lambda_{\epsilon}^{\otimes n}$ and $S^{-1}\tilde S = \Lambda_{\epsilon}^{\otimes n}$.
The robust shadow mitigation parameter $\EE_{g,x} \hat f(g,x)$ is per \Cref{lem: Robust shadow parameter} given by 
\begin{align}
\EE_{g,x} \hat f(g,x) &= \frac{1}{(d-1)(d+1)}\sum_{a \in \Zsf \backslash 0} \brar{\sn_a}\Lambda_{\epsilon}^{\otimes n}\ketr{\sn_a} \\
&= \frac{1}{(d-1)(d+1)} \left(\sum_{\vec z \in \FF_2^n} 1^{n-|\vec z|} (1-2\epsilon)^{|\vec z|} - 1\right)\\
&= \frac{1}{(d-1)(d+1)} \left(\sum_{i = 1}^n \binom{n}{i} 1^{n-i} (1-2\epsilon)^{i} - 1\right)\\
&= \frac{1}{(d+1)(d-1)}\left(d(1-\epsilon)^n -1\right).
\end{align}
In the robust shadow estimation procedure, the inverse frame operator is then given by 
\begin{equation}
S_{\RS}^{-1} = \ketbrar{\onen}{\onen} + (d+1) \frac{d-1}{d(1-\epsilon)^n -1} \sum_{a \neq 0}\ketbrar{\sn_a}{\sn_a},
\end{equation}
and the expected outcome is
\begin{equation}
\EE [\estimator{o}_{\RS}] =
\brar{O}S_{\RS}^{-1} \tilde S \ketr{\rho} = 
\frac{\Tr(O)}{d} + \frac{d-1}{d(1-\epsilon)^n -1} \sum_{a \neq 0}\braketr{O}{\sn_a}\brar{\sn_a} \Lambda_{\epsilon}^{\otimes n}\ketr{\rho} \, .
\end{equation}
Consequently, errors on Y-type and Z-type Pauli observables are partially mitigated while previously absent errors on X-Pauli observables are introduced. In particular, for the stabilizer state $O = (\ketbra{+}{+})^{\otimes n} = \frac{1}{\sqrt{d}}\sum_{x} \ketr{\normalize X^{x}}$, we find that $\EE [\estimator{o}] =  \bra{+}\rho\ket{+}$, meaning the standard shadow estimate is unbiased. 
Moreoever, we have $\brar{O}\Lambda_{\epsilon}^{\otimes n} = \brar{O}$ and thus
\begin{equation}
\EE [\estimator{o}_{\RS}] - \frac{\Tr(O)}{d} 
=
\frac{d-1}{d(1-\epsilon)^n -1} \left( \sandwichr{O}{\Lambda_{\epsilon}^{\otimes n}}{\rho} - \frac{\Tr(O)}{d} \right)
=
\frac{d-1}{d(1-\epsilon)^n -1} \av{O_0}\,,
\end{equation}
using the unitality of $\Lambda_{\epsilon}^{\otimes n}$ and $\av{O_0} = \braketr{O_0}{\rho}$.
Hence, 
\begin{align}
\EE [\estimator{o}_{\RS}] - \frac{\Tr(O)}{d} = \frac{d-1}{d(1-\epsilon)^n -1} \av{O_0} > \frac{d-1}{d(1-\epsilon)^n} \av{O_0} \geq \frac{d-1}{d}(1+\epsilon)^n \av{O_0}, 
\end{align}
where we assumed $d(1-\epsilon)^n -1 > 1$. Therefore $\bias{o_{\RS}}{O} \geq |\av{O_0}| (\frac{d-1}{d}(1 + \epsilon)^n-1 ) \geq \abs{\av{O_0}(\frac{1}{2}(1 + \epsilon)^n-1 )}$.
\end{proof}

\section{Bias mitigation conditions} \label{app: Error mitigation}
In this section, we derive sufficient conditions for the robust shadow protocol to work even in the presence of gate-dependent Pauli noise.
The ultimate aim is to determine under which conditions the estimation bias decreases, i.e., $\bias{o_{\RS}}{O} \leq  \bias{o}{O}$. 
Here, we characterize conditions of overcorrection (\cref{fig: mitigation regions}) and prove that istropic Pauli noise is well-conditioned in the sense that \acl{RSE} `works well' (\cref{prop: Offset concentration Appendix}). 
 Let $\bar \lambda = \sum_{a \neq 0} \bar \lambda_a/(d^2-1)$ be the mean of $\bar \lambda_a$ (excluding $\lambda_0$) and $O_0, \rho_0$ be the traceless parts of $O$ and $\rho$ respectively. We further define $\mc D(\vec \Delta)$ to be the diagonal channel in Pauli basis with values $\Delta_a$ on the diagonal.
The following observation provides insight onto how the bias for gate-dependent Pauli noise is determined by a single parameter $f_{\eff}$. 
\begin{obs}\label{obs: Pauli noise robust shadow bias}
 Let the average Pauli eigenvalues $\{\bar \lambda_a\}$ be parameterized as $\bar \lambda_{a} = \bar \lambda + \Delta_a$ for $a \neq 0$, and $\lambda_0 = 1$ (trace preserving noise). Then the bias is determined by the quantity $\bar \lambda + \frac{\brar{O_0}\mc D (\vec \Delta)\ketr{\rho_0}}{\langle O_0 \rangle} \eqqcolon f_{\eff}$ as
\begin{align}
\bias{o_\RS}{O} = |\langle O_0 \rangle| \cdot |1-\hat{f}_m^{-1}f_{\eff}|
\quad \text{and} \quad 
\bias{o}{O} = |\langle O_0 \rangle| \cdot |1- f_{\eff}|\,.
\end{align}
\end{obs}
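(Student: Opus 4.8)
The plan is to evaluate both estimators directly from the Pauli-noise form of the effective channel supplied by \cref{lem: Noise averaging}, namely $S^{-1}\tilde S = \sum_{a \in \FF_2^{2n}} \bar\lambda_a \ketbrar{\sn_a}{\sn_a}$, while tracking the identity component separately from the traceless one. With $\ketr{\sn_0} = \ketr{\onen}$, trace preservation fixes $\bar\lambda_0 = 1$, so the $a=0$ contribution to $\brar{O}S^{-1}\tilde S\ketr{\rho}$ is exactly $\Tr(O)/d$, which coincides with the identity part of $\av{O} = \Tr(O)/d + \av{O_0}$. Hence this term cancels out of every bias, and only the traceless sector matters, where $\braketr{O}{\sn_a} = \braketr{O_0}{\sn_a}$ and $\braketr{\sn_a}{\rho} = \braketr{\sn_a}{\rho_0}$ for $a \neq 0$.

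First I would handle the unmitigated estimator. Subtracting $\av{O_0} = \braketr{O_0}{\rho_0} = \sum_{a\neq 0}\braketr{O_0}{\sn_a}\braketr{\sn_a}{\rho_0}$ from $\EE[\estimator{o}]$ gives $\EE[\estimator{o}] - \av{O} = \sum_{a\neq 0}(\bar\lambda_a - 1)\braketr{O_0}{\sn_a}\braketr{\sn_a}{\rho_0}$. Substituting $\bar\lambda_a = \bar\lambda + \Delta_a$ splits this into $(\bar\lambda - 1)\av{O_0}$ plus $\brar{O_0}\mc D(\vec\Delta)\ketr{\rho_0} = \sum_{a\neq 0}\Delta_a\braketr{O_0}{\sn_a}\braketr{\sn_a}{\rho_0}$, the restriction to $a \neq 0$ being automatic because both arguments are traceless. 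Factoring out $\av{O_0}$ and recognizing the bracket as $f_{\eff} - 1$ yields $\bias{o}{O} = |\av{O_0}|\,|1 - f_{\eff}|$.

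Next I would treat the robust estimator. From \cref{eq: Robust shadow estimator} we have $\EE[\estimator{o}_\RS] = \brar{O}S_\RS^{-1}\tilde S\ketr{\rho}$ with $S_\RS^{-1} = \ketbrar{\onen}{\onen} + \tfrac{d+1}{\hat f_m}\sum_{a\neq0}\ketbrar{\sn_a}{\sn_a}$. Using $\tilde S = \sum_a s_a\bar\lambda_a\ketbrar{\sn_a}{\sn_a}$ with $s_a = 1/(d+1)$ for $a\neq 0$ and $s_0 = 1$, the product is diagonal in the Pauli basis: the $a=0$ sector again contributes $\Tr(O)/d$, while each $a\neq 0$ sector picks up the factor $\tfrac{d+1}{\hat f_m}\cdot\tfrac{1}{d+1} = \hat f_m^{-1}$, so the effective map becomes $\ketbrar{\onen}{\onen} + \hat f_m^{-1}\sum_{a\neq 0}\bar\lambda_a\ketbrar{\sn_a}{\sn_a}$. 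The same subtraction gives $\EE[\estimator{o}_\RS] - \av{O} = \hat f_m^{-1}\sum_{a\neq0}\bar\lambda_a\braketr{O_0}{\sn_a}\braketr{\sn_a}{\rho_0} - \av{O_0}$, and since the remaining sum equals $\av{O_0}\,f_{\eff}$ by the identical decomposition used above, this is $\av{O_0}(\hat f_m^{-1}f_{\eff} - 1)$, hence $\bias{o_\RS}{O} = |\av{O_0}|\,|1 - \hat f_m^{-1}f_{\eff}|$.

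There is no genuine analytic obstacle here; the argument is a bookkeeping exercise. The one point requiring care is the treatment of the identity component: one must verify that $\bar\lambda_0 = s_0 = 1$ (trace preservation together with the normalization of the frame operator) so that the $\Tr(O)/d$ pieces cancel identically against the identity part of $\av{O}$ in both biases, leaving only the traceless sector in which $f_{\eff}$ is defined. A secondary subtlety is that $\Delta_a$, and thus $\mc D(\vec\Delta)$, is only specified off the identity, but this introduces no ambiguity since $O_0$ and $\rho_0$ have no $a=0$ component.
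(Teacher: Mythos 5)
Your proposal is correct and follows essentially the same route as the paper: the paper's proof (given inside \cref{prop-app: Pauli twirled RS}) likewise expands everything in the Pauli basis, writes $\gamma_a = \braketr{O}{\sn_a}\braketr{\sn_a}{\rho}$, uses $\bar\lambda_a = \bar\lambda + \Delta_a$ with $\innerp{\vec\gamma}{\mathbf 1} = \av{O_0}$ and $\innerp{\vec\gamma}{\vec\Delta} = \brar{O_0}\mc D(\vec\Delta)\ketr{\rho_0}$, and obtains the two biases as $|\av{O_0}|\,|1-f_{\eff}|$ and $|\av{O_0}|\,|1-\hat f_m^{-1}f_{\eff}|$. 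Your explicit bookkeeping of the identity component and of the factor $\tfrac{d+1}{\hat f_m}\cdot\tfrac{1}{d+1}=\hat f_m^{-1}$ is exactly what the paper relies on implicitly.
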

Why this holds is shown in the proof of \cref{prop-app: Pauli twirled RS} below. 
The difference to the case of left gate-independent noise lies in the additional term $\frac{\brar{O_0}\mc D (\vec \Delta)\ketr{\rho_0}}{\langle O_0 \rangle}$ that quantifies how Pauli noise aligns with the signal $\braketr{O_0}{\rho_0}$. 
Perfect bias mitigation is achieved 
for
$\hat{f}_m = f_{\eff}$.
If the additional term is small, then a feasible strategy is to just estimate $\bar \lambda$ and use it as a mitigation parameter. This is essentially what robust shadow estimation does, albeit by only taking the average over $a \in \Zsf^n$ since $\hat f_m^{-1} = \EE_{a \in \Zsf^n} \bar{\lambda}_a$. 

\begin{figure}[t] 
  \includegraphics{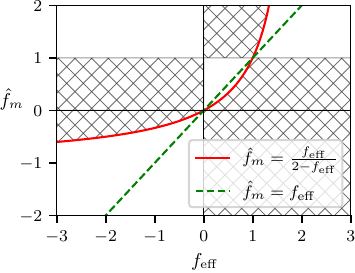}
  \caption{Parameter regions where robust shadow estimation increases the bias (hatched) and regions where it decreases the bias (plain). Perfect mitigation is achieved on the green dashed line. \label{fig: mitigation regions}
  }
\end{figure}

In \cref{fig: mitigation regions} areas in parameter space corresponding to $|\av{\hat O_{\RS}} - \langle O\rangle| \leq |\langle \hat O \rangle - \langle O\rangle|$ are visualized. 
Here, we allow $|\hat{f}_m| > 1$, which is not the case if $\hat{f}_m$ is estimated according to the robust shadow estimation protocol, but still instructive for large $\frac{\brar{O_0}\mc D (\vec \Delta)\ketr{\rho_0}}{\langle O_0 \rangle}$. 
We can see that for $f_{\eff}\leq 0$ and $f_{\eff}\geq 2$, the error is always reduced by setting $\hat{f}_m \rightarrow \infty$, meaning we return $\estimator{o}_{\mathrm{RS}} = 0$. 

This is the case for $\brar{O_0}\mc D (\vec \Delta)\ketr{\rho_0} \gg \langle O_0 \rangle$, i.e.\ large error alignment and small true expectation value $\av{O_0}$. 
In the most relevant parameter regime of $0 \leq f_{\eff} \leq 1$, we 
obtain
the condition $\hat{f} \geq \frac{f_{\eff}}{2- f_{\eff}}$ that safeguards against introducing new errors due to overcorrection. 
A formal treatment of the success criteria for robust shadow estimation is given in the following proposition. 
We denote the canonical inner product by $\innerp{\gamma}{\Delta}\coloneqq \sum_a \gamma_a^\ast \Delta_a$.

\begin{prop}\label{prop-app: Pauli twirled RS}
  Let $\hat O_{RS}$ be the robust shadow estimate on $O = \sum_a \braketr{\sn_a}{O}\ketr{\sn_a}, \, \rho = \sum_a \braketr{\sn_a}{\rho}\ketr{\sn_a}$ with mitigation parameter $f_m \in [-1,1]$ and let $\gamma_a = \braketr{O}{\sn_a} \braketr{\sn_a}{\rho}$ for $a \neq 0$. 
  Under gate-dependent Pauli noise parameterized as $\bar \lambda_a = \bar \lambda + \Delta_a$ with $\lambda_{\ii} = 1$, it holds that $\bias{o_\RS}{O} \leq \bias{o}{O}$ iff
  \begin{align}
    \left(f_{\eff} \geq 0 \land f_m \geq \frac{f_{\eff}}{2- f_{\eff}}\right) \quad \lor \quad 
    \left(f_{\eff} < 0 \land f_m \leq \frac{f_{\eff}}{2- f_{\eff}}\right)\, 
  \end{align}
  for $f_{\eff} \coloneqq \bar \lambda + \frac{\symplecticp{\vec \gamma}{\vec \Delta}}{\langle O_0 \rangle}$.
\end{prop}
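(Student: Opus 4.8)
The plan is to first turn \Cref{obs: Pauli noise robust shadow bias} into a precise identity, since both biases collapse to a single scalar comparison. For gate-dependent Pauli noise, \Cref{lem: Noise averaging} gives the diagonal form $S^{-1}\tilde S = \sum_a \bar\lambda_a \ketbrar{\sn_a}{\sn_a}$, so writing $\gamma_a = \braketr{O}{\sn_a}\braketr{\sn_a}{\rho}$ (with $\gamma_0 = \Tr(O)/d$), $\bar\lambda_0 = 1$, and $\bar\lambda_a = \bar\lambda + \Delta_a$ for $a\neq 0$, I would compute
\[
\EE[\estimator o] - \av{O} = \sum_{a\neq 0}(\bar\lambda + \Delta_a - 1)\gamma_a = \av{O_0}\,(f_{\eff} - 1),
\]
where the $a=0$ term cancels against the identity part of $\av{O}$ and I used $\sum_{a\neq 0}\gamma_a = \av{O_0}$ together with $\symplecticp{\vec\gamma}{\vec\Delta} = \sum_{a\neq 0}\Delta_a\gamma_a$. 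The robust estimator replaces $\bar\lambda_a$ by $\bar\lambda_a/f_m$ on the traceless block, since $S_{\RS}^{-1}\tilde S = \ketbrar{\onen}{\onen} + \tfrac{1}{f_m}\sum_{a\neq 0}\bar\lambda_a\ketbrar{\sn_a}{\sn_a}$ (using $s_a = 1/(d+1)$), so the same computation yields $\EE[\estimator o_\RS] - \av{O} = \av{O_0}(f_{\eff}/f_m - 1)$. This establishes $\bias{o}{O} = |\av{O_0}|\,|1 - f_{\eff}|$ and $\bias{o_\RS}{O} = |\av{O_0}|\,|1 - f_{\eff}/f_m|$, i.e.\ \Cref{obs: Pauli noise robust shadow bias}.

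If $\av{O_0}=0$ both biases vanish and the claimed inequality holds trivially, so I would assume $\av{O_0}\neq 0$ and divide it out, reducing $\bias{o_\RS}{O}\leq\bias{o}{O}$ to the scalar inequality $|1 - f_{\eff}/f_m| \leq |1 - f_{\eff}|$. Both sides are nonnegative, so squaring is reversible; multiplying through by $f_m^2>0$ (the estimator requires $f_m\neq 0$) turns this into $(f_m - f_{\eff})^2 \leq f_m^2(1-f_{\eff})^2$, which rearranges to
\[
f_{\eff}(1 - f_m)\big[f_{\eff}(1 + f_m) - 2 f_m\big] \leq 0.
\]
Introducing $m^\ast \coloneqq f_{\eff}/(2 - f_{\eff})$ and using the identity $f_{\eff}(1+f_m) - 2f_m = (2 - f_{\eff})(m^\ast - f_m)$, the condition becomes
\[
f_{\eff}\,(2 - f_{\eff})\,(1 - f_m)\,(m^\ast - f_m) \leq 0.
\]

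It then remains to read off the stated disjunction by a sign analysis. In the regime of interest $f_{\eff}<2$ one has $2-f_{\eff}>0$, and $f_m\leq 1$ gives $1-f_m\geq 0$; dividing by these nonnegative factors reduces the inequality to $f_{\eff}(m^\ast - f_m)\leq 0$, which splits precisely into ($f_{\eff}>0$ with $f_m\geq m^\ast$) or ($f_{\eff}<0$ with $f_m\leq m^\ast$), matching the claim. The main obstacle is not any single computation but the careful bookkeeping of the degenerate and boundary configurations that the \emph{iff} must respect: the case $\av{O_0}=0$, the case $f_{\eff}=0$ (where $m^\ast=0$ and both biases equal $|\av{O_0}|$ independently of $f_m$), the no-correction point $f_m=1$ and the point $f_{\eff}=2$ (where the product degenerates to an equality of biases). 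Each of these must be verified to sit on the correct side of the threshold so that squaring and clearing the denominator remain genuinely equivalences rather than one-sided implications.
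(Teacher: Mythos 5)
Your derivation of the two identities $\bias{o}{O} = |\av{O_0}|\,|1-f_{\eff}|$ and $\bias{o_\RS}{O} = |\av{O_0}|\,|1-f_m^{-1}f_{\eff}|$ coincides with the paper's first step, but from there you take a genuinely different route. The paper settles the scalar inequality $|1-f_m^{-1}f_{\eff}|\leq|1-f_{\eff}|$ by a four-way sign analysis on $(f_{\eff},f_m)$, manipulating one-sided inequalities of absolute values (and in its case (iii) it even writes $f_m\geq m^\ast \Rightarrow f_m^{-1}\geq 1/m^\ast$ for positive quantities, a sign slip your argument never needs). You instead square both sides --- an equivalence for nonnegative quantities --- clear $f_m^2>0$, and factor the difference as $f_{\eff}(1-f_m)\bigl[f_{\eff}(1+f_m)-2f_m\bigr]\leq 0$; the factorization is correct, and writing $m^\ast=f_{\eff}/(2-f_{\eff})$ it collapses to $f_{\eff}(m^\ast-f_m)\leq 0$ once the nonnegative factors $1-f_m$ and $2-f_{\eff}$ are divided out. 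This buys a single algebraic certificate for the ``iff'' and makes transparent exactly which sign assumptions are being used, where the paper's case analysis leaves them implicit.

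Two caveats, both concerning boundaries that you flag but do not resolve. First, your restriction to ``the regime of interest $f_{\eff}<2$'' is doing real work: for $f_{\eff}>2$ the factor $2-f_{\eff}$ is negative and your own inequality reduces to $f_m\leq m^\ast$ rather than $f_m\geq m^\ast$, so the proposition's stated condition is \emph{not} what the computation yields there. This is a defect of the statement (the paper's proof silently assumes the same restriction), but since you are proving an equivalence you must either state the restriction as a hypothesis or record the discrepancy. Second, your parenthetical that at $f_{\eff}=2$ ``the product degenerates to an equality of biases'' is wrong: $m^\ast$ is undefined there, and the unfactored form gives $f_{\eff}(1-f_m)\bigl[f_{\eff}(1+f_m)-2f_m\bigr]=4(1-f_m)>0$ for $f_m<1$, so mitigation strictly increases the bias except at the trivial point $f_m=1$. (The same $f_m=1$ subtlety makes the literal ``iff'' fail for $f_{\eff}\in(1,2)$, where $m^\ast>1$ is unattainable yet the two biases coincide at $f_m=1$; the paper's proof overlooks this as well.) These are edge cases of the proposition rather than of your method, but a complete write-up should dispose of them rather than defer them.
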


\begin{proof}
The error of non-mitigated shadow estimation under Pauli noise is given in Eq.~\eqref{eq: Pauli noise bound 1}, from which we gather that 
\begin{align}
\bias{o}{O} &= \left|\sum_{a \neq 0} \braketr{O}{\sn_a} \braketr{\sn_a}{\rho} \abs{1-\lambda_a}\right| = \left|\sum_{a \neq 0} \gamma_a \abs{1-\lambda_a}\right| 
= 
\left|\innerp{\vec \gamma}{\mathbf 1 - \bar \lambda \mathbf{1} - \vec \Delta} \right|\,
\end{align}
per the assumption that $\lambda_{\ii} = 1$ and $\lambda_{a \neq 0} = \bar \lambda + \Delta_a$, 
where $\mathbf 1$ is the vector of which all entries are $1$. 
When the mitigation factor is included we obtain 
$\hat{\tilde{S}}^{-1}\tilde S = \ketbrar{\onen}{\onen} + \sum_{a \neq 0}f_m^{-1} \lambda_a \ketbrar{\sn_a}{\sn_a}$, 
and the mitigated error becomes
\begin{align}
\bias{o_\RS}{O} &= \left|\sum_{a \neq 0} \gamma_a (1-f_m^{-1}\lambda_a)\right| = \left|\innerp{\vec \gamma}{\mathbf 1 - f_m^{-1} \bar \lambda \mathbf 1 - f_m^{-1} \Delta} \right| .
\end{align}
Since $\innerp{\vec \gamma}{\mathbf 1} = \sum_{a \neq 0}\gamma_a$ is by definition the expectation value of the traceless part of the observable, $\langle O_0 \rangle$, we can simplify the error terms by looking at the relative errors:
\begin{align}
\frac{\bias{o}{O}}{|\langle O_0 \rangle|} 
&= 
\left|1- \left(\bar \lambda  + \frac{\innerp{\gamma}{\Delta}}{\langle O_0 \rangle}\right) \right| = |1-f_{\eff}|
\intertext{and}
\frac{\bias{o_\RS}{O}}{|\langle O_0 \rangle|} 
&= 
\left|1- f_m^{-1}\left(\bar \lambda  + \frac{\innerp{\gamma}{\Delta}}{\langle O_0 \rangle}\right) \right| = |1-f_m^{-1}f_{\eff}|\, .
\end{align}
To determine when $\bias{o_\RS}{O} \leq \epsilon$ holds, we have to look at four cases corresponding the signs of $f_{\eff}$ and $f_m$. We also assume that $|f_m| \leq 1$ since $|\bar \lambda| \leq 1$ for any physical noise model.\\

\begin{enumerate}[(i)]
 \item  \textit{$f_{\eff}$ and $f_m$ are of opposite sign:} \\
In this case we have that $f_m^{-1} f_{\eff} \leq 0$ and therefore 
\begin{align}
  \bias{o}{O} = |1-f_{\eff}| \leq 1+|f_{\eff}| \leq 1+|f_m^{-1} f_{\eff}| = |1-f_m^{-1} f_{\eff}| = \bias{o_\RS}{O}
\end{align}
and the error mitigation technique always increases the error. 

 \item \textit{$f_{\eff} \leq 0 $ and $ f_m \leq 0$:}\\
 Let $f_m \leq \frac{f_{\eff}}{2-f_{\eff}}$, then $f_m^{-1} \leq \frac{2-f_{\eff}}{f_{\eff}}$ and $-f_m^{-1}f_{\eff} \leq f_{\eff}-2$. 
 Therefore, $\bias{o_\RS}{O} = |1-f_m^{-1}f_{\eff}| \leq |-1 + f_{\eff}| = \bias{o}{O}$ and error mitigation is achieved. 
 The condition is tight, since $f_m > \frac{f_{\eff}}{2-f_{\eff}}$ leads in the same fashion to $\bias{o_\RS}{O} \geq \bias{o}{O}$.

 \item \textit{$f_{\eff} > 0$ and $f_m > 0$:}\\
 Let now $f_m \geq \frac{f_{\eff}}{2-f_{\eff}}$. 
 We obtain $f_m^{-1} \geq \frac{2-f_{\eff}}{f_{\eff}}$ and $-f_m^{-1}f_{\eff} \leq f_{\eff}-2$, as $-f_{\eff}$ is negative. 
 Thus, again $\bias{o_\RS}{O} = |1-f_m^{-1}f_{\eff}| \leq |-1 + f_{\eff}| = \bias{o}{O}$ and errors are mitigated. The condition is again tight, as $f_m > \frac{f_{\eff}}{2-f_{\eff}}$ leads to $\bias{o_\RS}{O} \geq \bias{o}{O}$.
\end{enumerate}
\end{proof} 

To make a statement about $f_{\eff}$, without assuming explicit knowledge of the noise model and the prepared state, we treat $\Delta$ as a random vector. 
The following lemma gives a concentration inequality for the well known fact that in high dimensions, a uniformly distributed random vector on the unit sphere is almost orthogonal to any given fixed vector with high probability. 

\begin{lem}[Adapted from \cite{Vazirani11LectureNotes}]\label{lem: Random vector orthogonality}
For an arbitrary normalized vector $\vec x \in \mathbb{R}^k$ and a random vector $\vec g$ which is uniformly distributed on the unit sphere, holds that $\PP(|\innerp{\vec x}{\vec g}| \geq \frac{t}{\sqrt{k-1}}) \leq \e^{-t^2/2}$\,.
\end{lem}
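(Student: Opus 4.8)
The plan is to reduce the statement to a one-dimensional tail estimate for a single spherical coordinate and then control it by a logarithmic-derivative argument that sidesteps any special-function asymptotics. First I would exploit the rotational invariance of the uniform measure on the sphere $\sphere^{k-1}\subset\mathbb{R}^k$: since the law of $\vec g$ is invariant under $\O(k)$ and $\vec x$ is an arbitrary unit vector, I may rotate coordinates so that $\vec x = e_1$, whence $\innerp{\vec x}{\vec g} = g_1$ is just the first coordinate. It therefore suffices to show $\PP(\abs{g_1}\geq a)\leq \e^{-t^2/2}$ for $a = t/\sqrt{k-1}$, and by the reflection symmetry $g_1\mapsto -g_1$ we have $\PP(\abs{g_1}\geq a) = 2\,\PP(g_1\geq a)$.

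Next I would use the standard fact that the marginal density of one coordinate of a uniform point on $\sphere^{k-1}$ is proportional to $(1-s^2)^{(k-3)/2}$ on $[-1,1]$; this follows either from the representation $\vec g = \vec X/\norm{\vec X}$ with $\vec X$ a standard Gaussian vector, or from noting that the slice $\{g_1 = s\}$ is a sphere of radius $\sqrt{1-s^2}$. Setting $I(a)\coloneqq\int_a^1 (1-s^2)^{(k-3)/2}\,\mathrm{d}s$, the normalization cancels in the ratio and $\PP(g_1\geq a) = I(a)/\bigl(2I(0)\bigr)$, so that $\PP(\abs{g_1}\geq a) = I(a)/I(0)$ and the whole problem collapses to the clean inequality $I(a)/I(0)\leq \e^{-(k-1)a^2/2}$.

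The key step is to bound the logarithmic derivative of $I$. Since $s\geq a$ on the domain of integration, I would use the elementary estimate $1\leq s/a$ together with the explicit antiderivative $\int_a^1 s(1-s^2)^{(k-3)/2}\,\mathrm{d}s = (1-a^2)^{(k-1)/2}/(k-1)$ to get $I(a)\leq (1-a^2)^{(k-1)/2}/\bigl(a(k-1)\bigr)\leq (1-a^2)^{(k-3)/2}/\bigl(a(k-1)\bigr)$. Combining this with $I'(a) = -(1-a^2)^{(k-3)/2}$ yields $\frac{\mathrm{d}}{\mathrm{d}a}\log I(a) = I'(a)/I(a)\leq -(k-1)a$ for $a\in(0,1)$. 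Integrating from $0$ to $a$ gives $\log\bigl(I(a)/I(0)\bigr)\leq -(k-1)a^2/2$, and substituting $a = t/\sqrt{k-1}$ makes the exponent exactly $-t^2/2$, which is the asserted bound.

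The main obstacle is obtaining the sharp constant in the exponent without evaluating the normalizing Beta integral $\int_{-1}^1 (1-s^2)^{(k-3)/2}\,\mathrm{d}s$, which would otherwise introduce Gamma-function asymptotics and blur the clean $\e^{-t^2/2}$ estimate. The log-derivative trick avoids this entirely by working with the ratio $I(a)/I(0)$, the only genuine inputs being $1\leq s/a$ and the explicit antiderivative above. One minor point to verify is that the bound on $I(a)$ degrades as $a\to 0$, but this is harmless, since the differential inequality is only used on $(0,1)$ and integrated against the finite measure $\mathrm{d}a$.
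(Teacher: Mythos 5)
Your proof is correct, but it takes a genuinely different route from the paper's at the key step. Both arguments start the same way, using $\O(k)$-invariance of the uniform measure to reduce to the first coordinate $g_1$. The paper then argues geometrically: the event $\{\abs{g_1}\geq t/\sqrt{k-1}\}$ is a union of two spherical caps, and the total surface area of these caps is bounded by the surface area of the full sphere whose radius equals the caps' base radius $\sqrt{1-t^2/(k-1)}$; homogeneity of surface area, $A_S(k,r)\propto r^{k-1}$, then gives $\bigl(1-t^2/(k-1)\bigr)^{(k-1)/2}$, which is finished off with $1-x\leq\e^{-x}$. You instead work with the one-dimensional marginal density $\propto(1-s^2)^{(k-3)/2}$ and establish the differential inequality $\frac{\mathrm{d}}{\mathrm{d}a}\log I(a)\leq-(k-1)a$ from the elementary pointwise bound $1\leq s/a$ together with the explicit antiderivative of $s(1-s^2)^{(k-3)/2}$; integrating lands on $\e^{-(k-1)a^2/2}=\e^{-t^2/2}$ directly. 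Your computations check out, including the observation that the blow-up of the intermediate bound on $I(a)$ as $a\to 0$ is harmless because the differential inequality itself remains integrable near $0$; the only edge case worth one sentence is $t>\sqrt{k-1}$, where the probability is zero and the claim is trivial. In exchange for being slightly longer, your analytic route avoids any geometric input beyond the marginal density (no cap-area comparison) and produces the Gaussian tail without ever invoking $1-x\leq\e^{-x}$, whereas the paper's cap argument is shorter and gives the marginally sharper intermediate bound $\bigl(1-t^2/(k-1)\bigr)^{(k-1)/2}\leq\e^{-t^2/2}$.
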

\begin{proof}
If $g$ is uniformly distributed, then also the random vectors $O\vec g$ for $O \in \mathrm{SO}(k)$ are distributed uniformly. 
This implies that $\PP(|\innerp{\vec x}{\vec g}| \geq \frac{t}{\sqrt{k-1}})$ is independent of $\vec x$ and w.l.o.g.\ we choose $\vec x$ to be the first canonical basis vector, $\vec x = \vec{e}_1$.

The surface area of a $k$-dimensional unit sphere with radius $r$ is given by $A_S(k,r) = \frac{2 \pi^{k/2} r^{k-1}}{\Gamma(\frac{k}{2})}$. 
Since $\vec g$ is uniformly distributed on the unit sphere, the probability that $\abs{\innerp{e_1}{g}} = \abs{g_1}$ is larger than $\frac{t}{\sqrt{k-1}}$ is given by the ratio of the surface area of two spherical caps of height $1 - \frac{t}{\sqrt{k-1}}$ to the surface area $A_{S}(k,1)$ of the unit sphere. The base of these caps has radius $a = \sqrt{1-\frac{t^2}{k-1}}$, and we can bound the surface area of the two caps by the surface are of the full sphere of radius $a$ as $2A_{\mathrm{CAP}}(k,a) \leq A_S(k,a)$. This leads us to the bound
\begin{align}
  \PP\left[|\innerp{\vec{e}_1}{\vec g}| \geq \frac{t}{\sqrt{k-1}}\right] 
  = 
  \frac{2 A_{\mathrm{CAP}}(k,r)}{A_S(k,1)} 
  \leq 
  \frac{A_S(k,a)}{A_S(k,1)} 
  = 
  \left(1 - \frac{t^2}{k-1}\right)^{\frac{k-1}{2}} \leq \e^{-t^2/2}.
\end{align}
\end{proof}
We choose this bound for simplicity, slightly stronger versions can be found in e.g.\ \citet{Dasgupta2003} and references therein. 
An example of uniformly distributed vectors on the sphere is given by normalized standard Gaussian vectors where each entry is drawn from the 0-mean and unit variance normal distribution $\mc N(0,1)$. 

If the term $\frac{\innerp{\gamma}{\Delta}}{\langle O_0 \rangle}$ in $f_{\eff} = \bar \lambda  + \frac{\innerp{\gamma}{\Delta}}{\langle O_0 \rangle}$ is small, then it is sufficient to estimate $\bar \lambda$ to correct the bias. In \Cref{prop: Offset concentration Appendix} below, we show that this is the case under reasonable assumptions.   

\begin{prop} \label{prop: Offset concentration Appendix}
  Let $\langle O_0 \rangle \geq C_1$ and $\twonorm{O} < C_2$. If $\vec \Delta$ is a random vector such that $\vec \Delta/\lpnorm[2]{\vec \Delta}$ is uniformly distributed on the unit sphere and $\lpnorm[2]{\vec \Delta} \leq \mc O(d)$ with high probability,
  then 
  $f_{\eff} = \EE[\hat f_m] +  \LandauO(1/\sqrt{d})$ with high probability.
\end{prop}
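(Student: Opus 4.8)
The plan is to show that $f_{\eff}$ and $\EE[\hat f_m]$ differ only by inner products of the random fluctuation vector $\vec\Delta$ with \emph{fixed} vectors of small $\ell_2$-norm, and then to invoke the sphere-concentration bound of \cref{lem: Random vector orthogonality}. First I would recall from \cref{lem: Robust shadow parameter} that $\EE[\hat f_m] = \EE_{a\in\Zsf^n\setminus 0}\bar\lambda_a = \bar\lambda + \innerp{\mu}{\Delta}$, where $\mu_a = \tfrac{1}{d-1}$ for $a\in\Zsf^n\setminus 0$ and $\mu_a = 0$ otherwise, while by definition (\cref{obs: Pauli noise robust shadow bias}) $f_{\eff} = \bar\lambda + \innerp{\gamma}{\Delta}/\langle O_0\rangle$. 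Subtracting, the fixed offset $\bar\lambda$ cancels and
\[
 f_{\eff} - \EE[\hat f_m] = \frac{\innerp{\gamma}{\Delta}}{\langle O_0\rangle} - \innerp{\mu}{\Delta},
\]
so it suffices to control the two inner products $\innerp{\gamma}{\Delta}$ and $\innerp{\mu}{\Delta}$ with high probability over $\vec\Delta$.

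Next I would estimate the $\ell_2$-norms of the two fixed vectors. For $\mu$ one simply counts indices: $\lpnorm[2]{\mu} = \tfrac{1}{d-1}\sqrt{\abs{\Zsf^n\setminus 0}} = 1/\sqrt{d-1}$. For $\gamma_a = \braketr{O}{\sn_a}\braketr{\sn_a}{\rho}$ I would use that $\rho$ is a state, so $\abs{\braketr{\sn_a}{\rho}} = \abs{\Tr(\s_a\rho)}/\sqrt d \leq 1/\sqrt d$, together with Parseval $\sum_{a}\abs{\braketr{O}{\sn_a}}^2 = \twonorm{O}^2$ and the hypothesis $\twonorm{O} < C_2$, to conclude $\lpnorm[2]{\gamma}^2 \leq \tfrac{1}{d}\twonorm{O}^2 \leq C_2^2/d$, i.e.\ $\lpnorm[2]{\gamma} \leq C_2/\sqrt d$.

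Then I would write $\innerp{\gamma}{\Delta} = \lpnorm[2]{\Delta}\,\lpnorm[2]{\gamma}\,\innerp{\hat\gamma}{\hat\Delta}$ with $\hat\gamma = \gamma/\lpnorm[2]{\gamma}$ and $\hat\Delta = \Delta/\lpnorm[2]{\Delta}$, and similarly for $\mu$. Since $\hat\gamma$ and $\hat\mu$ are fixed unit vectors in $\RR^{d^2-1}$ determined by $O$ and $\rho$ alone (hence independent of the noise), and $\hat\Delta$ is uniform on the unit sphere by assumption, \cref{lem: Random vector orthogonality} with $k = d^2-1$ gives $\abs{\innerp{\hat\gamma}{\hat\Delta}}, \abs{\innerp{\hat\mu}{\hat\Delta}} \leq t/\sqrt{d^2-2} = \LandauO(1/d)$, each with probability at least $1-\e^{-t^2/2}$; a union bound, together with the assumed event $\lpnorm[2]{\Delta}\leq\LandauO(d)$, keeps all three bounds valid simultaneously with high probability. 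Finally I would combine everything using $\langle O_0\rangle \geq C_1$: the first term is bounded by $\tfrac{1}{C_1}\LandauO(d)\cdot\tfrac{C_2}{\sqrt d}\cdot\LandauO(1/d) = \LandauO(1/\sqrt d)$ and the second by $\LandauO(d)\cdot\tfrac{1}{\sqrt{d-1}}\cdot\LandauO(1/d) = \LandauO(1/\sqrt d)$, so the triangle inequality yields $\abs{f_{\eff}-\EE[\hat f_m]} = \LandauO(1/\sqrt d)$.

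I expect the main difficulty to be bookkeeping rather than conceptual: carefully assembling the three high-probability events (the two near-orthogonality bounds and the norm bound on $\vec\Delta$) into one statement via the union bound, and tracking how the constants $C_1,C_2$ and the scaling $\lpnorm[2]{\Delta}\leq\LandauO(d)$ propagate through the two terms. A secondary point requiring care is the independence hypothesis of \cref{lem: Random vector orthogonality}: one must note explicitly that $\hat\gamma$ and $\hat\mu$ are fixed with respect to the random direction $\hat\Delta$, which is precisely what licenses applying the lemma with $\vec x = \hat\gamma$ and $\vec x = \hat\mu$.
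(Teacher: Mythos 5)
Your proof is correct and follows essentially the same route as the paper: the identical decomposition $\innerp{\vec\gamma}{\vec\Delta} = \lpnorm[2]{\vec\gamma}\,\lpnorm[2]{\vec\Delta}\,\innerp{\normalize\gamma}{\normalize\Delta}$, the same bound $\lpnorm[2]{\vec\gamma}\leq \twonorm{O}/\sqrt{d}$ via $\abs{\braketr{\sn_a}{\rho}}\leq 1/\sqrt{d}$, and the same application of \cref{lem: Random vector orthogonality} with $k=d^2-1$ plus a union bound with the event $\lpnorm[2]{\vec\Delta}\leq\LandauO(d)$. The only divergence is your treatment of $\EE[\hat f_m]$: the paper takes the expectation over the random $\vec\Delta$, so the zero-mean entries make $\EE[\hat f_m]=\bar\lambda$ exactly and only one inner product remains to be controlled, whereas you keep the realized value $\bar\lambda+\innerp{\mu}{\Delta}$ and bound the extra fluctuation $\innerp{\mu}{\Delta}$ by the same concentration argument --- a correct (and if anything slightly stronger) variant.
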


\begin{proof}
If $\Delta$ is uniformly distributed on the unit sphere, then its entries $\Delta_a$ are zero mean random variables. 
Therefore, 
\begin{equation}
  \EE[\hat f_m] = \bar \lambda + \frac{1}{d-1}\sum_{a \in \Zsf^n\backslash \ii} \EE[\bar \lambda_a] 
  = 
  \bar \lambda + \frac{1}{d-1}\sum_{a \in \Zsf^n\backslash \ii} \EE[\Delta^a] 
  = 
  \bar \lambda
\end{equation}
and it remains to bound $\frac{\innerp{\gamma}{\Delta}}{\langle O_0 \rangle}$. 
We write the inner product as $\innerp{\vec \gamma}{\vec \Delta} = \lpnorm[2]{\vec \gamma} \lpnorm[2]{\vec \Delta}\, \innerp{\n \gamma}{\n \Delta}$ with $\n \gamma,\, \n \Delta$ normalized. 
Then we know from \cref{lem: Random vector orthogonality} for $k = d^2-1$ that 

\begin{align}
\PP\left[|\innerp{\n \gamma}{\n \Delta}| \geq \frac{t}{\sqrt{d^2-2}}\right] \leq \e^{-t^2/2}.
\end{align}
From the assumption that $\lpnorm[2]{\vec \Delta} \leq \mc O(d)$ and the union bound leads to $\lpnorm[2]{\vec \Delta}|\innerp{\n \gamma}{\n \Delta}| \leq \mc O(1)$ with high probability. 
Since $\abs{\braketr{\rho}{\sn_a}} \leq 1/\sqrt{d}$ for any quantum state $\rho$, we further have that $\lpnorm[2]{\vec \gamma} \leq \twonorm{O}/\sqrt{d} \leq C/\sqrt{d}$ and thus $\frac{\innerp{\gamma}{\Delta}}{\langle O_0 \rangle} = \mc{O}(1/\sqrt{d})$ with high probability. 
\end{proof}

\Cref{prop: Offset concentration Appendix} formalizes the intuition that in large systems noise is unlikely to be malicious, i.e.\ a randomly distributed noise vector $\vec \Delta$ is unlikely to align with the signal $\gamma$.

The Pauli eigenvalue average $\bar \lambda$ can also be related to the average gate fidelity of the frame operator $\tilde S$. 
Using the know relation $F_{\mathrm{Avg}}(\tilde S) = (d^{-1}\Tr[\tilde S] + 1)/d+1$ and $\Tr[\tilde S] = 1+\sum_{a \neq 0}\lambda_a/(d+1)$ \cite{Niel02}. 
A quick rearrangement yields $\bar \lambda = (d F_{\mathrm{Avg}}(\tilde S))(d+1)/(d-1)$, suggesting that an estimate of the average gate fidelity $F_{\mathrm{Avg}}(\tilde S)$ would also give us an estimate of $\bar \lambda$.

For a detailed bound in terms of a given error probability $\delta$, we now look to the example of a Gaussian random noise vector. 
This random vector provides a concrete example for a noise distribution that satisfies the requirements of \cref{prop: Offset concentration Appendix}. 

\begin{prop}\label{prop:isotropic_noise}
  Let $\vec \Delta$ be a Gaussian random vector of length $k = d^2-1$ with i.i.d. entries from $\mc N(0,\sigma^2)$ and $\vec \gamma$ be an arbitrary real vector of the same dimension.
  It holds that 
  \begin{align*}
    \PP\left[|\innerp{\vec \gamma}{\vec \Delta}| \leq \sigma^2\lpnorm[2]{\vec \gamma}g(\delta)\sqrt{\frac{k}{k-1}}\left(1 + \frac{g(\delta)}{\sqrt{k}} + \frac{g^2(\delta)}{k}\right)\right] 
     \geq 1-\delta \,,
  \end{align*}
  where $g(\delta) = \sqrt{\log(2/\delta)}$.
\end{prop}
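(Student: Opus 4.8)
The plan is to prove the bound not through the direct observation that $\innerp{\vec\gamma}{\vec\Delta}$ is itself a one-dimensional Gaussian (which would give a cleaner but structurally different estimate), but through the geometric decomposition that matches the stated form and connects back to \cref{prop: Offset concentration Appendix}. Writing $\hat\gamma = \vec\gamma/\lpnorm[2]{\vec\gamma}$ and $\hat\Delta = \vec\Delta/\lpnorm[2]{\vec\Delta}$, I would factor $\innerp{\vec\gamma}{\vec\Delta} = \lpnorm[2]{\vec\gamma}\,\lpnorm[2]{\vec\Delta}\,\innerp{\hat\gamma}{\hat\Delta}$ and control the magnitude $\lpnorm[2]{\vec\Delta}$ and the direction $\hat\Delta$ by two separate high-probability events. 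The key structural input is that an isotropic Gaussian is rotation invariant, so $\hat\Delta$ is uniformly distributed on the unit sphere in $\RR^k$ (and independent of the radial part), which is precisely the hypothesis of \cref{lem: Random vector orthogonality}. I would then allocate failure probability $\delta/2$ to each factor and combine the two estimates by a union bound, so no independence is actually needed beyond the uniformity of $\hat\Delta$.

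For the angular factor, $\hat\gamma$ is a fixed unit vector, so \cref{lem: Random vector orthogonality} applies with $k = d^2-1$: choosing the deviation parameter so that the exceedance probability equals $\delta/2$ yields $\abs{\innerp{\hat\gamma}{\hat\Delta}} \le g(\delta)/\sqrt{k-1}$ with probability at least $1-\delta/2$. This is the origin of the overall $g(\delta)$ prefactor and of the $1/\sqrt{k-1}$, which after multiplication by the $\sqrt{k}$ coming from the magnitude estimate produces the $\sqrt{k/(k-1)}$ in the claim.

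For the radial factor, $\lpnorm[2]{\vec\Delta}^2/\sigma^2 \sim \chi^2_k$, and I would invoke a standard chi-squared upper-tail bound (Laurent--Massart), $\PP(\chi^2_k \ge k + 2\sqrt{kx} + 2x) \le \e^{-x}$, with $x = g(\delta)^2 = \log(2/\delta)$ so that the failure probability is again $\delta/2$. This gives $\lpnorm[2]{\vec\Delta}^2 \le \sigma^2\bigl(k + 2\sqrt{k}\,g(\delta) + 2g(\delta)^2\bigr)$. The compound factor in the statement then arises by completing the square: one checks that $k + 2\sqrt{k}\,g + 2g^2 \le k\bigl(1 + g/\sqrt{k} + g^2/k\bigr)^2$, since expanding the right-hand side gives $k + 2\sqrt{k}g + 3g^2 + 2g^3/\sqrt{k} + g^4/k$, which exceeds the left-hand side by the nonnegative remainder $g^2 + 2g^3/\sqrt{k} + g^4/k$. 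Taking square roots yields $\lpnorm[2]{\vec\Delta} \le \sigma\sqrt{k}\,\bigl(1 + g(\delta)/\sqrt{k} + g(\delta)^2/k\bigr)$ with probability at least $1-\delta/2$, with the magnitude scale fixed by $\lpnorm[2]{\vec\Delta}\approx\sigma\sqrt{k}$.

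Finally I would multiply the angular and radial estimates (together with the fixed factor $\lpnorm[2]{\vec\gamma}$) on the intersection of the two good events, which has probability at least $1-\delta$ by the union bound. The product reassembles exactly the claimed expression: the $g(\delta)$ from the angular part, the $\sqrt{k/(k-1)}$ from combining the $\sqrt{k}$ of the magnitude with the $1/\sqrt{k-1}$ of the angular bound, and the compound factor $\bigl(1 + g(\delta)/\sqrt{k} + g(\delta)^2/k\bigr)$ from the chi-squared estimate. The main obstacle is purely bookkeeping: one must make the constants in the two tail bounds line up so that each contributes precisely the factor $g(\delta)=\sqrt{\log(2/\delta)}$ while the total failure probability stays at most $\delta$, which forces the even split $\delta/2+\delta/2$ together with the choice $x=\log(2/\delta)$; and the completing-the-square step must be carried out carefully so as to land on the exact compound factor rather than a looser surrogate.
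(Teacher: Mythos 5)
Your argument is essentially the paper's own proof: the same factorization $\innerp{\vec\gamma}{\vec\Delta}=\lpnorm[2]{\vec\gamma}\,\lpnorm[2]{\vec\Delta}\,\innerp{\n{\gamma}}{\n{\Delta}}$ with $\n{\Delta}$ uniform on the sphere by rotation invariance, the Laurent--Massart chi-squared tail with the identical completing-the-square simplification for the radial factor, \cref{lem: Random vector orthogonality} for the angular factor, and a $\delta/2+\delta/2$ union bound. The only point of divergence is that your standard reading of $\mc N(0,\sigma^2)$ yields an overall factor $\sigma$ where the statement has $\sigma^2$; since the paper's proof itself writes $\lpnorm[2]{\vec\Delta}^2/\sigma^4\sim\chi^2_k$ (i.e.\ implicitly treats $\sigma^2$ as the standard deviation), this is a notational inconsistency on the paper's side rather than a gap in your reasoning.
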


\begin{proof}
We begin with a bound for $\lpnorm[2]{\vec \Delta}^2/\sigma^4 = \lpnorm[2]{X}^2 = \sum_{i = 1}^{k} x_i^2$, where now $x_i \sim \mc N(0,1)$ and therefore $\lpnorm[2]{X}^2$ is a Chi-squared distributed random variable. 
We can then use the Laurent-Massart inequality which states that $\PP(\lpnorm[2]{X}^2 - k \geq 2\sqrt{kt} + 2t )\leq \e^{-t}$. 
We hence obtain $\PP(\lpnorm[2]{X} \geq \sqrt{k}\sqrt{1 + 2\sqrt{t/k} + 2t/k})\leq \e^{-t}$, and we can use $\sqrt{1 + 2\sqrt{t/k} + 2t/k} \leq 1 + \sqrt{t/k} + t/k$. 
By fixing the failure probability to $\e^{-t} = \delta/2$ we arrive at 
\begin{equation}
  \PP\left[\lpnorm[2]{X} \geq \sqrt{k}(1 + \sqrt{\frac{\mathrm{log}(2/\delta)}{k}} + \frac{\mathrm{log}(2/\delta)}{k})\right]\leq \frac{\delta}{2}
\end{equation}
and 
\begin{equation}\label{eq:concentration_Delta}
  \PP\left[\lpnorm[2]{\vec \Delta} \geq \sigma^2\sqrt{k}(1 + \frac{g(\delta)}{\sqrt{k}} + \frac{g^2(\delta)}{k})\right]\leq \frac{\delta}{2}. 
\end{equation}

Our aim is to bound the inner product $\innerp{\vec \gamma}{\vec \Delta}$, which we will write as $\lpnorm[2]{\vec \gamma} \lpnorm[2]{\vec \Delta}\, \innerp{\n \gamma}{\n \Delta}$, with $\n \gamma,\, \n \Delta$ normalized. This allows us to use \Cref{lem: Random vector orthogonality} for $\e^{-t^2/2} = \delta/2$, and we arrive at
\begin{align}\label{eq: random vector intermediate bound}
  \PP\left[|\innerp{\vec \gamma}{\vec \Delta}| \geq \lpnorm[2]{\vec \gamma} \lpnorm[2]{\vec \Delta} \frac{g(\delta)}{\sqrt{k-1}}\right]\leq \delta/2 \,.
\end{align}
Combining this bound with the bound \eqref{eq:concentration_Delta} for $\lpnorm[2]{\vec \Delta}$ via the union bound completes the proof.  
\end{proof}


\begin{acronym}[POVM]
\acro{ACES}{averaged circuit eigenvalue sampling}
\acro{AGF}{average gate fidelity}
\acro{AP}{Arbeitspaket}

\acro{BK}{Bravyi-Kitaev}
\acro{BOG}{binned outcome generation}

\acro{CNF}{conjunctive normal form}
\acro{CP}{completely positive}
\acro{CPT}{completely positive and trace preserving}
\acro{cs}{computer science}
\acro{CS}{compressed sensing} 
\acro{ctrl-VQE}{ctrl-VQE}

\acro{DAQC}{digital-analog quantum computing}
\acro{DD}{dynamical decoupling}
\acro{DFE}{direct fidelity estimation} 
\acro{DFT}{discrete Fourier transform}
\acro{DL}{deep learning}
\acro{DM}{dark matter}

\acro{FFT}{fast Fourier transform}

\acro{GS}{ground-state}
\acro{GST}{gate set tomography}
\acro{GTM}{gate-independent, time-stationary, Markovian}
\acro{GUE}{Gaussian unitary ensemble}

\acro{HOG}{heavy outcome generation}

\acro{irrep}{irreducible representation}

\acro{JW}{Jordan-Wigner}

\acro{LBCS}{locally-biased classical shadow}
\acro{LDPC}{low density partity check}
\acro{LP}{linear program}

\acro{MAGIC}{magnetic gradient induced coupling}
\acro{MAX-SAT}{maximum satisfiability}
\acro{MBL}{many-body localization}
\acro{MIP}{mixed integer program}
\acro{ML}{machine learning}
\acro{MLE}{maximum likelihood estimation}
\acro{MPO}{matrix product operator}
\acro{MPS}{matrix product state}
\acro{MS}{M{\o}lmer-S{\o}rensen}
\acro{MSE}{mean squared error}
\acro{MUBs}{mutually unbiased bases} 
\acro{mw}{micro wave}

\acro{NISQ}{noisy and intermediate scale quantum}

\acro{ONB}{orthonormal basis}
\acroplural{ONB}[ONBs]{orthonormal bases}

\acro{POVM}{positive operator valued measure}
\acro{PQC}{parametrized quantum circuit}
\acro{PSD}{positive-semidefinite}
\acro{PSR}{parameter shift rule}
\acro{PVM}{projector-valued measure}

\acro{QAOA}{quantum approximate optimization algorithm}
\acro{QC}{quantum computation}
\acro{QEC}{quantum error correction}
\acro{QFT}{quantum Fourier transform}
\acro{QM}{quantum mechanics}
\acro{QML}{quantum machine learning}
\acro{QMT}{measurement tomography}
\acro{QPT}{quantum process tomography}
\acro{QPU}{quantum processing unit}
\acro{QUBO}{quadratic binary optimization}
\acro{QWC}{qubit-wise commutativity}

\acro{RB}{randomized benchmarking}
\acro{RBM}{restricted Boltzmann machine}
\acro{RDM}{reduced density matrix}
\acro{rf}{radio frequency}
\acro{RIC}{restricted isometry constant}
\acro{RIP}{restricted isometry property}
\acro{RMSE}{root mean squared error}
\acro{RSE}{robust shadow estimation}

\acro{SDP}{semidefinite program}
\acro{SFE}{shadow fidelity estimation}
\acro{SIC}{symmetric, informationally complete}
\acro{SPAM}{state preparation and measurement}
\acro{SPSA}{simultaneous perturbation stochastic approximation}

\acro{TT}{tensor train}
\acro{TM}{Turing machine}
\acro{TV}{total variation}

\acro{VQA}{variational quantum algorithm}
\acro{VQE}{variational quantum eigensolver}

\acro{XEB}{cross-entropy benchmarking}

\end{acronym}

\bibliographystyle{./myapsrev4-2}

\begin{thebibliography}{37}%
\makeatletter
\providecommand \@ifxundefined [1]{%
 \@ifx{#1\undefined}
}%
\providecommand \@ifnum [1]{%
 \ifnum #1\expandafter \@firstoftwo
 \else \expandafter \@secondoftwo
 \fi
}%
\providecommand \@ifx [1]{%
 \ifx #1\expandafter \@firstoftwo
 \else \expandafter \@secondoftwo
 \fi
}%
\providecommand \natexlab [1]{#1}%
\providecommand \enquote  [1]{``#1''}%
\providecommand \bibnamefont  [1]{#1}%
\providecommand \bibfnamefont [1]{#1}%
\providecommand \citenamefont [1]{#1}%
\providecommand \href@noop [0]{\@secondoftwo}%
\providecommand \href [0]{\begingroup \@sanitize@url \@href}%
\providecommand \@href[1]{\@@startlink{#1}\@@href}%
\providecommand \@@href[1]{\endgroup#1\@@endlink}%
\providecommand \@sanitize@url [0]{\catcode `\\12\catcode `\$12\catcode
  `\&12\catcode `\#12\catcode `\^12\catcode `\_12\catcode `\%12\relax}%
\providecommand \@@startlink[1]{}%
\providecommand \@@endlink[0]{}%
\providecommand \url  [0]{\begingroup\@sanitize@url \@url }%
\providecommand \@url [1]{\endgroup\@href {#1}{\urlprefix }}%
\providecommand \urlprefix  [0]{URL }%
\providecommand \Eprint[0]{\href }%
\providecommand \doibase [0]{https://doi.org/}%
\providecommand \selectlanguage [0]{\@gobble}%
\providecommand \bibinfo  [0]{\@secondoftwo}%
\providecommand \bibfield  [0]{\@secondoftwo}%
\providecommand \translation [1]{[#1]}%
\providecommand \BibitemOpen [0]{}%
\providecommand \bibitemStop [0]{}%
\providecommand \bibitemNoStop [0]{.\EOS\space}%
\providecommand \EOS [0]{\spacefactor3000\relax}%
\providecommand \BibitemShut  [1]{\csname bibitem#1\endcsname}%
\let\auto@bib@innerbib\@empty
\bibitem [{\citenamefont {{Huang}}\ \emph {et~al.}(2020)\citenamefont
  {{Huang}}, \citenamefont {{Kueng}},\ and\ \citenamefont
  {{Preskill}}}]{Huang2020Predicting}%
  \BibitemOpen
  \bibfield  {author} {\bibinfo {author} {\bibfnamefont {H.-Y.}\ \bibnamefont
  {{Huang}}}, \bibinfo {author} {\bibfnamefont {R.}~\bibnamefont {{Kueng}}},\
  and\ \bibinfo {author} {\bibfnamefont {J.}~\bibnamefont {{Preskill}}},\
  }\bibinfo {title} {\emph {Predicting many properties of a quantum system from
  very few measurements}},\ \href {https://doi.org/10.1038/s41567-020-0932-7}
  {\bibfield  {journal} {\bibinfo  {journal} {Nat. Phys.}\ }\textbf {\bibinfo
  {volume} {16}},\ \bibinfo {pages} {1050–1057} (\bibinfo {year} {2020})},\
  \Eprint{https://arxiv.org/abs/2002.08953} {arXiv:2002.08953
  [quant-ph]}\BibitemShut {NoStop}%
\bibitem [{\citenamefont {{Elben}}\ \emph {et~al.}(2020)\citenamefont
  {{Elben}}, \citenamefont {{Kueng}}, \citenamefont {{Huang}}, \citenamefont
  {{van Bijnen}}, \citenamefont {{Kokail}}, \citenamefont {{Dalmonte}},
  \citenamefont {{Calabrese}}, \citenamefont {{Kraus}}, \citenamefont
  {{Preskill}}, \citenamefont {{Zoller}},\ and\ \citenamefont
  {{Vermersch}}}]{elbenMixedstateEntanglementLocal2020}%
  \BibitemOpen
  \bibfield  {author} {\bibinfo {author} {\bibfnamefont {A.}~\bibnamefont
  {{Elben}}}, \bibinfo {author} {\bibfnamefont {R.}~\bibnamefont {{Kueng}}},
  \bibinfo {author} {\bibfnamefont {H.-Y.~R.}\ \bibnamefont {{Huang}}},
  \bibinfo {author} {\bibfnamefont {R.}~\bibnamefont {{van Bijnen}}}, \bibinfo
  {author} {\bibfnamefont {C.}~\bibnamefont {{Kokail}}}, \bibinfo {author}
  {\bibfnamefont {M.}~\bibnamefont {{Dalmonte}}}, \bibinfo {author}
  {\bibfnamefont {P.}~\bibnamefont {{Calabrese}}}, \bibinfo {author}
  {\bibfnamefont {B.}~\bibnamefont {{Kraus}}}, \bibinfo {author} {\bibfnamefont
  {J.}~\bibnamefont {{Preskill}}}, \bibinfo {author} {\bibfnamefont
  {P.}~\bibnamefont {{Zoller}}},\ and\ \bibinfo {author} {\bibfnamefont
  {B.}~\bibnamefont {{Vermersch}}},\ }\bibinfo {title} {\emph {Mixed-state
  entanglement from local randomized measurements}},\ \href
  {https://doi.org/10.1103/PhysRevLett.125.200501} {\bibfield  {journal}
  {\bibinfo  {journal} {\prl}\ }\textbf {\bibinfo {volume} {125}},\ \bibinfo
  {eid} {200501} (\bibinfo {year} {2020})},\
  \Eprint{https://arxiv.org/abs/2007.06305} {arXiv:2007.06305
  [quant-ph]}\BibitemShut {NoStop}%
\bibitem [{\citenamefont {Zhang}\ \emph {et~al.}(2021)\citenamefont {Zhang},
  \citenamefont {Sun}, \citenamefont {Fang}, \citenamefont {Zhang},
  \citenamefont {Yuan},\ and\ \citenamefont
  {Lu}}]{zhang2021ExperimentalQuantumState}%
  \BibitemOpen
  \bibfield  {author} {\bibinfo {author} {\bibfnamefont {T.}~\bibnamefont
  {Zhang}}, \bibinfo {author} {\bibfnamefont {J.}~\bibnamefont {Sun}}, \bibinfo
  {author} {\bibfnamefont {X.-X.}\ \bibnamefont {Fang}}, \bibinfo {author}
  {\bibfnamefont {X.-M.}\ \bibnamefont {Zhang}}, \bibinfo {author}
  {\bibfnamefont {X.}~\bibnamefont {Yuan}},\ and\ \bibinfo {author}
  {\bibfnamefont {H.}~\bibnamefont {Lu}},\ }\bibinfo {title} {\emph
  {Experimental quantum state measurement with classical shadows}},\ \href
  {https://doi.org/10.1103/PhysRevLett.127.200501} {\bibfield  {journal}
  {\bibinfo  {journal} {\prl}\ }\textbf {\bibinfo {volume} {127}},\ \bibinfo
  {pages} {200501} (\bibinfo {year} {2021})},\
  \Eprint{https://arxiv.org/abs/2008.05234} {arXiv:2008.05234
  [quant-ph]}\BibitemShut {NoStop}%
\bibitem [{\citenamefont {Struchalin}\ \emph {et~al.}(2021)\citenamefont
  {Struchalin}, \citenamefont {Zagorovskii}, \citenamefont {Kovlakov},
  \citenamefont {Straupe},\ and\ \citenamefont
  {Kulik}}]{struchalinExperimentalEstimationQuantum2021}%
  \BibitemOpen
  \bibfield  {author} {\bibinfo {author} {\bibfnamefont {G.}~\bibnamefont
  {Struchalin}}, \bibinfo {author} {\bibfnamefont {Y.~A.}\ \bibnamefont
  {Zagorovskii}}, \bibinfo {author} {\bibfnamefont {E.}~\bibnamefont
  {Kovlakov}}, \bibinfo {author} {\bibfnamefont {S.}~\bibnamefont {Straupe}},\
  and\ \bibinfo {author} {\bibfnamefont {S.}~\bibnamefont {Kulik}},\ }\bibinfo
  {title} {\emph {Experimental estimation of quantum state properties from
  classical shadows}},\ \href {https://doi.org/10.1103/PRXQuantum.2.010307}
  {\bibfield  {journal} {\bibinfo  {journal} {PRX Quantum}\ }\textbf {\bibinfo
  {volume} {2}},\ \bibinfo {pages} {010307} (\bibinfo {year} {2021})},\
  \Eprint{https://arxiv.org/abs/2008.05234} {arXiv:2008.05234
  [quant-ph]}\BibitemShut {NoStop}%
\bibitem [{\citenamefont {{Huang}}\ \emph {et~al.}(2022)\citenamefont
  {{Huang}}, \citenamefont {{Kueng}}, \citenamefont {{Torlai}}, \citenamefont
  {{Albert}},\ and\ \citenamefont {{Preskill}}}]{Huang21ProvablyEfficient}%
  \BibitemOpen
  \bibfield  {author} {\bibinfo {author} {\bibfnamefont {H.-Y.}\ \bibnamefont
  {{Huang}}}, \bibinfo {author} {\bibfnamefont {R.}~\bibnamefont {{Kueng}}},
  \bibinfo {author} {\bibfnamefont {G.}~\bibnamefont {{Torlai}}}, \bibinfo
  {author} {\bibfnamefont {V.~V.}\ \bibnamefont {{Albert}}},\ and\ \bibinfo
  {author} {\bibfnamefont {J.}~\bibnamefont {{Preskill}}},\ }\bibinfo {title}
  {\emph {Provably efficient machine learning for quantum many-body
  problems}},\ \href {https://doi.org/10.1126/science.abk3333} {\bibfield
  {journal} {\bibinfo  {journal} {Science}\ }\textbf {\bibinfo {volume}
  {377}},\ \bibinfo {pages} {eabk3333} (\bibinfo {year} {2022})},\
  \Eprint{https://arxiv.org/abs/2106.12627} {arXiv:2106.12627
  [quant-ph]}\BibitemShut {NoStop}%
\bibitem [{\citenamefont {{Elben}}\ \emph {et~al.}(2022)\citenamefont
  {{Elben}}, \citenamefont {{Flammia}}, \citenamefont {{Huang}}, \citenamefont
  {{Kueng}}, \citenamefont {{Preskill}}, \citenamefont {{Vermersch}},\ and\
  \citenamefont {{Zoller}}}]{Elben22RandomizedMeasurementToolbox}%
  \BibitemOpen
  \bibfield  {author} {\bibinfo {author} {\bibfnamefont {A.}~\bibnamefont
  {{Elben}}}, \bibinfo {author} {\bibfnamefont {S.~T.}\ \bibnamefont
  {{Flammia}}}, \bibinfo {author} {\bibfnamefont {H.-Y.}\ \bibnamefont
  {{Huang}}}, \bibinfo {author} {\bibfnamefont {R.}~\bibnamefont {{Kueng}}},
  \bibinfo {author} {\bibfnamefont {J.}~\bibnamefont {{Preskill}}}, \bibinfo
  {author} {\bibfnamefont {B.}~\bibnamefont {{Vermersch}}},\ and\ \bibinfo
  {author} {\bibfnamefont {P.}~\bibnamefont {{Zoller}}},\ }\bibinfo {title}
  {\emph {The randomized measurement toolbox}},\ \bibfield  {journal} {\bibinfo
   {journal} {Nat. Rev. Phys.}\ }\href
  {https://doi.org/10.1038/s42254-022-00535-2} {10.1038/s42254-022-00535-2}
  (\bibinfo {year} {2022}),\ \Eprint{https://arxiv.org/abs/2203.11374}
  {arXiv:2203.11374}\BibitemShut {NoStop}%
\bibitem [{\citenamefont {Huggins}\ \emph {et~al.}(2022)\citenamefont
  {Huggins}, \citenamefont {O’Gorman}, \citenamefont {Rubin}, \citenamefont
  {Reichman}, \citenamefont {Babbush},\ and\ \citenamefont
  {Lee}}]{huggins2022unbiasing}%
  \BibitemOpen
  \bibfield  {author} {\bibinfo {author} {\bibfnamefont {W.~J.}\ \bibnamefont
  {Huggins}}, \bibinfo {author} {\bibfnamefont {B.~A.}\ \bibnamefont
  {O’Gorman}}, \bibinfo {author} {\bibfnamefont {N.~C.}\ \bibnamefont
  {Rubin}}, \bibinfo {author} {\bibfnamefont {D.~R.}\ \bibnamefont {Reichman}},
  \bibinfo {author} {\bibfnamefont {R.}~\bibnamefont {Babbush}},\ and\ \bibinfo
  {author} {\bibfnamefont {J.}~\bibnamefont {Lee}},\ }\bibinfo {title} {\emph
  {Unbiasing fermionic quantum {Monte Carlo} with a quantum computer}},\
  \href@noop {} {\bibfield  {journal} {\bibinfo  {journal} {Nature}\ }\textbf
  {\bibinfo {volume} {603}},\ \bibinfo {pages} {416–420} (\bibinfo {year}
  {2022})},\ \Eprint{https://arxiv.org/abs/2106.16235} {2106.16235
  [quant-ph]}\BibitemShut {NoStop}%
\bibitem [{\citenamefont {Akhtar}\ \emph {et~al.}(2023)\citenamefont {Akhtar},
  \citenamefont {Hu},\ and\ \citenamefont {You}}]{akhtar_scalable_2023}%
  \BibitemOpen
  \bibfield  {author} {\bibinfo {author} {\bibfnamefont {A.~A.}\ \bibnamefont
  {Akhtar}}, \bibinfo {author} {\bibfnamefont {H.-Y.}\ \bibnamefont {Hu}},\
  and\ \bibinfo {author} {\bibfnamefont {Y.-Z.}\ \bibnamefont {You}},\
  }\bibinfo {title} {\emph {Scalable and flexible classical shadow tomography
  with tensor networks}},\ \href {https://doi.org/10.22331/q-2023-06-01-1026}
  {\bibfield  {journal} {\bibinfo  {journal} {Quantum}\ }\textbf {\bibinfo
  {volume} {7}},\ \bibinfo {pages} {1026} (\bibinfo {year} {2023})},\
  \Eprint{https://arxiv.org/abs/2209.02093} {arXiv:2209.02093
  [quant-ph]}\BibitemShut {NoStop}%
\bibitem [{\citenamefont {Bertoni}\ \emph {et~al.}()\citenamefont {Bertoni},
  \citenamefont {Haferkamp}, \citenamefont {Hinsche}, \citenamefont {Ioannou},
  \citenamefont {Eisert},\ and\ \citenamefont
  {Pashayan}}]{bertoniShallowShadowsExpectation2022}%
  \BibitemOpen
  \bibfield  {author} {\bibinfo {author} {\bibfnamefont {C.}~\bibnamefont
  {Bertoni}}, \bibinfo {author} {\bibfnamefont {J.}~\bibnamefont {Haferkamp}},
  \bibinfo {author} {\bibfnamefont {M.}~\bibnamefont {Hinsche}}, \bibinfo
  {author} {\bibfnamefont {M.}~\bibnamefont {Ioannou}}, \bibinfo {author}
  {\bibfnamefont {J.}~\bibnamefont {Eisert}},\ and\ \bibinfo {author}
  {\bibfnamefont {H.}~\bibnamefont {Pashayan}},\ }\href@noop {} {\bibinfo
  {title} {\emph {Shallow shadows: Expectation estimation using low-depth
  random {Clifford} circuits}}},\ \Eprint{https://arxiv.org/abs/2209.12924}
  {arXiv:2209.12924 [quant-ph]}\BibitemShut {NoStop}%
\bibitem [{\citenamefont {{Arienzo}}\ \emph {et~al.}(2023)\citenamefont
  {{Arienzo}}, \citenamefont {{Heinrich}}, \citenamefont {{Roth}},\ and\
  \citenamefont {{Kliesch}}}]{Arienzo22Closed-formAnalytic}%
  \BibitemOpen
  \bibfield  {author} {\bibinfo {author} {\bibfnamefont {M.}~\bibnamefont
  {{Arienzo}}}, \bibinfo {author} {\bibfnamefont {M.}~\bibnamefont
  {{Heinrich}}}, \bibinfo {author} {\bibfnamefont {I.}~\bibnamefont {{Roth}}},\
  and\ \bibinfo {author} {\bibfnamefont {M.}~\bibnamefont {{Kliesch}}},\
  }\bibinfo {title} {\emph {Closed-form analytic expressions for shadow
  estimation with brickwork circuits}},\ \href
  {https://doi.org/10.26421/QIC23.11-12-5} {\bibfield  {journal} {\bibinfo
  {journal} {Quantum Inf. Comp.}\ }\textbf {\bibinfo {volume} {23}},\ \bibinfo
  {pages} {961} (\bibinfo {year} {2023})},\
  \Eprint{https://arxiv.org/abs/2211.09835} {arXiv:2211.09835
  [quant-ph]}\BibitemShut {NoStop}%
\bibitem [{\citenamefont {Wan}\ \emph {et~al.}(2022)\citenamefont {Wan},
  \citenamefont {Huggins}, \citenamefont {Lee},\ and\ \citenamefont
  {Babbush}}]{wan2022matchgate}%
  \BibitemOpen
  \bibfield  {author} {\bibinfo {author} {\bibfnamefont {K.}~\bibnamefont
  {Wan}}, \bibinfo {author} {\bibfnamefont {W.~J.}\ \bibnamefont {Huggins}},
  \bibinfo {author} {\bibfnamefont {J.}~\bibnamefont {Lee}},\ and\ \bibinfo
  {author} {\bibfnamefont {R.}~\bibnamefont {Babbush}},\ }\href@noop {}
  {\bibinfo {title} {\emph {Matchgate shadows for fermionic quantum
  simulation}}},\ \Eprint{https://arxiv.org/abs/2207.13723} {arXiv:2207.13723
  [quant-ph]} (\bibinfo {year} {2022})\BibitemShut {NoStop}%
\bibitem [{\citenamefont {{Chen}}\ \emph {et~al.}(2021)\citenamefont {{Chen}},
  \citenamefont {{Yu}}, \citenamefont {{Zeng}},\ and\ \citenamefont
  {{Flammia}}}]{Chen21RobustShadowEstimation}%
  \BibitemOpen
  \bibfield  {author} {\bibinfo {author} {\bibfnamefont {S.}~\bibnamefont
  {{Chen}}}, \bibinfo {author} {\bibfnamefont {W.}~\bibnamefont {{Yu}}},
  \bibinfo {author} {\bibfnamefont {P.}~\bibnamefont {{Zeng}}},\ and\ \bibinfo
  {author} {\bibfnamefont {S.~T.}\ \bibnamefont {{Flammia}}},\ }\bibinfo
  {title} {\emph {Robust shadow estimation}},\ \href
  {https://doi.org/10.1103/PRXQuantum.2.030348} {\bibfield  {journal} {\bibinfo
   {journal} {PRX Quantum}\ }\textbf {\bibinfo {volume} {2}},\ \bibinfo {eid}
  {030348} (\bibinfo {year} {2021})},\
  \Eprint{https://arxiv.org/abs/2011.09636} {arXiv:2011.09636
  [quant-ph]}\BibitemShut {NoStop}%
\bibitem [{\citenamefont {Koh}\ and\ \citenamefont
  {Grewal}(2022)}]{KohGrewal:2022:ClassicalShadows}%
  \BibitemOpen
  \bibfield  {author} {\bibinfo {author} {\bibfnamefont {D.~E.}\ \bibnamefont
  {Koh}}\ and\ \bibinfo {author} {\bibfnamefont {S.}~\bibnamefont {Grewal}},\
  }\bibinfo {title} {\emph {Classical {S}hadows {W}ith {N}oise}},\ \href
  {https://doi.org/10.22331/q-2022-08-16-776} {\bibfield  {journal} {\bibinfo
  {journal} {{Quantum}}\ }\textbf {\bibinfo {volume} {6}},\ \bibinfo {pages}
  {776} (\bibinfo {year} {2022})}\BibitemShut {NoStop}%
\bibitem [{\citenamefont {Berg}\ \emph {et~al.}(2022)\citenamefont {Berg},
  \citenamefont {Minev},\ and\ \citenamefont {Temme}}]{Berg:2020ibi}%
  \BibitemOpen
  \bibfield  {author} {\bibinfo {author} {\bibfnamefont {E.~v.~d.}\
  \bibnamefont {Berg}}, \bibinfo {author} {\bibfnamefont {Z.~K.}\ \bibnamefont
  {Minev}},\ and\ \bibinfo {author} {\bibfnamefont {K.}~\bibnamefont {Temme}},\
  }\bibinfo {title} {\emph {{Model-free readout-error mitigation for quantum
  expectation values}}},\ \href {https://doi.org/10.1103/PhysRevA.105.032620}
  {\bibfield  {journal} {\bibinfo  {journal} {Phys. Rev. A}\ }\textbf {\bibinfo
  {volume} {105}},\ \bibinfo {pages} {032620} (\bibinfo {year} {2022})},\
  \Eprint{https://arxiv.org/abs/2012.09738} {arXiv:2012.09738
  [quant-ph]}\BibitemShut {NoStop}%
\bibitem [{\citenamefont {{Jnane}}\ \emph {et~al.}(2023)\citenamefont
  {{Jnane}}, \citenamefont {{Steinberg}}, \citenamefont {{Cai}}, \citenamefont
  {{Chau Nguyen}},\ and\ \citenamefont {{Koczor}}}]{2023arXiv230504956J}%
  \BibitemOpen
  \bibfield  {author} {\bibinfo {author} {\bibfnamefont {H.}~\bibnamefont
  {{Jnane}}}, \bibinfo {author} {\bibfnamefont {J.}~\bibnamefont
  {{Steinberg}}}, \bibinfo {author} {\bibfnamefont {Z.}~\bibnamefont {{Cai}}},
  \bibinfo {author} {\bibfnamefont {H.}~\bibnamefont {{Chau Nguyen}}},\ and\
  \bibinfo {author} {\bibfnamefont {B.}~\bibnamefont {{Koczor}}},\ }\bibinfo
  {title} {\emph {{Quantum Error Mitigated Classical Shadows}}},\ \href@noop {}
  {\  (\bibinfo {year} {2023})},\ \Eprint{https://arxiv.org/abs/2305.04956}
  {arXiv:2305.04956 [quant-ph]}\BibitemShut {NoStop}%
\bibitem [{\citenamefont {Vitale}\ \emph {et~al.}(2024)\citenamefont {Vitale},
  \citenamefont {Rath}, \citenamefont {Jurcevic}, \citenamefont {Elben},
  \citenamefont {Branciard},\ and\ \citenamefont {Vermersch}}]{Vitale:2023few}%
  \BibitemOpen
  \bibfield  {author} {\bibinfo {author} {\bibfnamefont {V.}~\bibnamefont
  {Vitale}}, \bibinfo {author} {\bibfnamefont {A.}~\bibnamefont {Rath}},
  \bibinfo {author} {\bibfnamefont {P.}~\bibnamefont {Jurcevic}}, \bibinfo
  {author} {\bibfnamefont {A.}~\bibnamefont {Elben}}, \bibinfo {author}
  {\bibfnamefont {C.}~\bibnamefont {Branciard}},\ and\ \bibinfo {author}
  {\bibfnamefont {B.}~\bibnamefont {Vermersch}},\ }\bibinfo {title} {\emph
  {{Robust Estimation of the Quantum Fisher Information on a Quantum
  Processor}}},\ \href {https://doi.org/10.1103/PRXQuantum.5.030338} {\bibfield
   {journal} {\bibinfo  {journal} {PRX Quantum}\ }\textbf {\bibinfo {volume}
  {5}},\ \bibinfo {pages} {030338} (\bibinfo {year} {2024})},\
  \Eprint{https://arxiv.org/abs/2307.16882} {arXiv:2307.16882
  [quant-ph]}\BibitemShut {NoStop}%
\bibitem [{\citenamefont {{Zhao}}\ and\ \citenamefont
  {{Miyake}}(2023)}]{2023arXiv231003071Z}%
  \BibitemOpen
  \bibfield  {author} {\bibinfo {author} {\bibfnamefont {A.}~\bibnamefont
  {{Zhao}}}\ and\ \bibinfo {author} {\bibfnamefont {A.}~\bibnamefont
  {{Miyake}}},\ }\bibinfo {title} {\emph {{Group-theoretic error mitigation
  enabled by classical shadows and symmetries}}},\ \href@noop {} {\  (\bibinfo
  {year} {2023})},\ \Eprint{https://arxiv.org/abs/2310.03071} {arXiv:2310.03071
  [quant-ph]}\BibitemShut {NoStop}%
\bibitem [{\citenamefont {{Wu}}\ and\ \citenamefont {{Enshan
  Koh}}(2023)}]{2023arXiv231012726W}%
  \BibitemOpen
  \bibfield  {author} {\bibinfo {author} {\bibfnamefont {B.}~\bibnamefont
  {{Wu}}}\ and\ \bibinfo {author} {\bibfnamefont {D.}~\bibnamefont {{Enshan
  Koh}}},\ }\bibinfo {title} {\emph {{Error-mitigated fermionic classical
  shadows on noisy quantum devices}}},\ \href@noop {} {\  (\bibinfo {year}
  {2023})},\ \Eprint{https://arxiv.org/abs/2310.12726} {arXiv:2310.12726
  [quant-ph]}\BibitemShut {NoStop}%
\bibitem [{\citenamefont {{Brieger}}\ \emph {et~al.}(2023)\citenamefont
  {{Brieger}}, \citenamefont {{Roth}},\ and\ \citenamefont
  {{Kliesch}}}]{Brieger21CompressiveGateSet}%
  \BibitemOpen
  \bibfield  {author} {\bibinfo {author} {\bibfnamefont {R.}~\bibnamefont
  {{Brieger}}}, \bibinfo {author} {\bibfnamefont {I.}~\bibnamefont {{Roth}}},\
  and\ \bibinfo {author} {\bibfnamefont {M.}~\bibnamefont {{Kliesch}}},\
  }\bibinfo {title} {\emph {Compressive gate set tomography}},\ \href
  {https://doi.org/10.1103/PRXQuantum.4.010325} {\bibfield  {journal} {\bibinfo
   {journal} {PRX Quantum}\ }\textbf {\bibinfo {volume} {4}},\ \bibinfo {pages}
  {010325} (\bibinfo {year} {2023})},\
  \Eprint{https://arxiv.org/abs/2112.05176} {arXiv:2112.05176
  [quant-ph]}\BibitemShut {NoStop}%
\bibitem [{\citenamefont {Campbell}()}]{campbell_catalysis_2011}%
  \BibitemOpen
  \bibfield  {author} {\bibinfo {author} {\bibfnamefont {E.~T.}\ \bibnamefont
  {Campbell}},\ }\bibinfo {title} {\emph {Catalysis and activation of magic
  states in fault-tolerant architectures}},\ \bibfield  {journal} {\bibinfo
  {journal} {Physical Review A}\ }\textbf {\bibinfo {volume} {83}},\ \href
  {https://doi.org/10.1103/PhysRevA.83.032317} {10.1103/PhysRevA.83.032317},\
  \Eprint{https://arxiv.org/abs/1010.0104} {arXiv:1010.0104
  [quant-ph]}\BibitemShut {NoStop}%
\bibitem [{\citenamefont {Leone}\ \emph {et~al.}(2022)\citenamefont {Leone},
  \citenamefont {Oliviero},\ and\ \citenamefont
  {Hamma}}]{leone_stabilizer_2022}%
  \BibitemOpen
  \bibfield  {author} {\bibinfo {author} {\bibfnamefont {L.}~\bibnamefont
  {Leone}}, \bibinfo {author} {\bibfnamefont {S.~F.}\ \bibnamefont
  {Oliviero}},\ and\ \bibinfo {author} {\bibfnamefont {A.}~\bibnamefont
  {Hamma}},\ }\bibinfo {title} {\emph {Stabilizer {R}ényi entropy}},\ \href
  {https://doi.org/10.1103/PhysRevLett.128.050402} {\bibfield  {journal}
  {\bibinfo  {journal} {Physical Review Letters}\ }\textbf {\bibinfo {volume}
  {128}},\ \bibinfo {pages} {050402} (\bibinfo {year} {2022})},\
  \Eprint{https://arxiv.org/abs/2106.12587} {arXiv:2106.12587
  [quant-ph]}\BibitemShut {NoStop}%
\bibitem [{\citenamefont {Seddon}\ \emph {et~al.}(2021)\citenamefont {Seddon},
  \citenamefont {Regula}, \citenamefont {Pashayan}, \citenamefont {Ouyang},\
  and\ \citenamefont {Campbell}}]{seddon_quantifying_2021}%
  \BibitemOpen
  \bibfield  {author} {\bibinfo {author} {\bibfnamefont {J.~R.}\ \bibnamefont
  {Seddon}}, \bibinfo {author} {\bibfnamefont {B.}~\bibnamefont {Regula}},
  \bibinfo {author} {\bibfnamefont {H.}~\bibnamefont {Pashayan}}, \bibinfo
  {author} {\bibfnamefont {Y.}~\bibnamefont {Ouyang}},\ and\ \bibinfo {author}
  {\bibfnamefont {E.~T.}\ \bibnamefont {Campbell}},\ }\bibinfo {title} {\emph
  {Quantifying {Quantum} {Speedups}: {Improved} {Classical} {Simulation} {From}
  {Tighter} {Magic} {Monotones}}},\ \href
  {https://doi.org/10.1103/PRXQuantum.2.010345} {\bibfield  {journal} {\bibinfo
   {journal} {PRX Quantum}\ }\textbf {\bibinfo {volume} {2}},\ \bibinfo {pages}
  {010345} (\bibinfo {year} {2021})}\BibitemShut {NoStop}%
\bibitem [{\citenamefont {{Howard}}\ and\ \citenamefont
  {{Campbell}}(2017)}]{2017PhRvL.118i0501H}%
  \BibitemOpen
  \bibfield  {author} {\bibinfo {author} {\bibfnamefont {M.}~\bibnamefont
  {{Howard}}}\ and\ \bibinfo {author} {\bibfnamefont {E.}~\bibnamefont
  {{Campbell}}},\ }\bibinfo {title} {\emph {{Application of a Resource Theory
  for Magic States to Fault-Tolerant Quantum Computing}}},\ \href
  {https://doi.org/10.1103/PhysRevLett.118.090501} {\bibfield  {journal}
  {\bibinfo  {journal} {\prl}\ }\textbf {\bibinfo {volume} {118}},\ \bibinfo
  {eid} {090501} (\bibinfo {year} {2017})},\
  \Eprint{https://arxiv.org/abs/1609.07488} {arXiv:1609.07488
  [quant-ph]}\BibitemShut {NoStop}%
\bibitem [{\citenamefont {{Rall}}\ \emph {et~al.}(2019)\citenamefont {{Rall}},
  \citenamefont {{Liang}}, \citenamefont {{Cook}},\ and\ \citenamefont
  {{Kretschmer}}}]{Rall2019SimulationPaulPropagation}%
  \BibitemOpen
  \bibfield  {author} {\bibinfo {author} {\bibfnamefont {P.}~\bibnamefont
  {{Rall}}}, \bibinfo {author} {\bibfnamefont {D.}~\bibnamefont {{Liang}}},
  \bibinfo {author} {\bibfnamefont {J.}~\bibnamefont {{Cook}}},\ and\ \bibinfo
  {author} {\bibfnamefont {W.}~\bibnamefont {{Kretschmer}}},\ }\bibinfo {title}
  {\emph {Simulation of qubit quantum circuits via {Pauli} propagation}},\
  \href {https://doi.org/10.1103/PhysRevA.99.062337} {\bibfield  {journal}
  {\bibinfo  {journal} {\pra}\ }\textbf {\bibinfo {volume} {99}},\ \bibinfo
  {eid} {062337} (\bibinfo {year} {2019})},\
  \Eprint{https://arxiv.org/abs/1901.09070} {arXiv:1901.09070
  [quant-ph]}\BibitemShut {NoStop}%
\bibitem [{\citenamefont {Scheurer}\ \emph {et~al.}(2024)\citenamefont
  {Scheurer}, \citenamefont {Anselmetti}, \citenamefont {Oumarou},
  \citenamefont {Gogolin},\ and\ \citenamefont {Rubin}}]{Scheurer:2023pgd}%
  \BibitemOpen
  \bibfield  {author} {\bibinfo {author} {\bibfnamefont {M.}~\bibnamefont
  {Scheurer}}, \bibinfo {author} {\bibfnamefont {G.-L.~R.}\ \bibnamefont
  {Anselmetti}}, \bibinfo {author} {\bibfnamefont {O.}~\bibnamefont {Oumarou}},
  \bibinfo {author} {\bibfnamefont {C.}~\bibnamefont {Gogolin}},\ and\ \bibinfo
  {author} {\bibfnamefont {N.~C.}\ \bibnamefont {Rubin}},\ }\bibinfo {title}
  {\emph {{Tailored and Externally Corrected Coupled Cluster with Quantum
  Inputs}}},\ \href {https://doi.org/10.1021/acs.jctc.4c00037} {\bibfield
  {journal} {\bibinfo  {journal} {J. Chem. Theor. Comput.}\ }\textbf {\bibinfo
  {volume} {20}},\ \bibinfo {pages} {5068} (\bibinfo {year} {2024})},\
  \Eprint{https://arxiv.org/abs/2312.08110} {arXiv:2312.08110
  [quant-ph]}\BibitemShut {NoStop}%
\bibitem [{Pau()}]{Pauli-noise-assumption}%
  \BibitemOpen
  \href@noop {} {}\bibinfo {note} {Note that for (left) Pauli-invariant
  ensembles, such as local or global Clifford unitaries, gate-independent left
  noise can be ef- fectively Pauli-twirled and hence the Pauli noise assumption
  is implicit.}\BibitemShut {Stop}%
\bibitem [{Note1()}]{Note1}%
  \BibitemOpen
  \bibinfo {note} {Following the gate-dependent RB literature \cite
  {Helsen20RBFramework}, we performed this conversion as follows: We compiled
  any of the 24 single-qubit Cliffords in $X/Y$ $\pi /2$-pulses, and computed
  the channel twirl under the over-/underrotation noise model with parameter
  $\delta $. The second-largest eigenvalue of this map yields the RB decay
  parameter $r$ which is then converted into the average gate fidelity $f$
  using the standard formula $f=r+(1-r)/d$ where $d=2$ is the Hilbert space
  dimension.}\BibitemShut {Stop}%
\bibitem [{\citenamefont {Hu}\ \emph {et~al.}(2024)\citenamefont {Hu},
  \citenamefont {Gu}, \citenamefont {Majumder}, \citenamefont {Ren},
  \citenamefont {Zhang}, \citenamefont {Wang}, \citenamefont {You},
  \citenamefont {Minev}, \citenamefont {Yelin},\ and\ \citenamefont
  {Seif}}]{hu2024demonstration}%
  \BibitemOpen
  \bibfield  {author} {\bibinfo {author} {\bibfnamefont {H.-Y.}\ \bibnamefont
  {Hu}}, \bibinfo {author} {\bibfnamefont {A.}~\bibnamefont {Gu}}, \bibinfo
  {author} {\bibfnamefont {S.}~\bibnamefont {Majumder}}, \bibinfo {author}
  {\bibfnamefont {H.}~\bibnamefont {Ren}}, \bibinfo {author} {\bibfnamefont
  {Y.}~\bibnamefont {Zhang}}, \bibinfo {author} {\bibfnamefont {D.~S.}\
  \bibnamefont {Wang}}, \bibinfo {author} {\bibfnamefont {Y.-Z.}\ \bibnamefont
  {You}}, \bibinfo {author} {\bibfnamefont {Z.}~\bibnamefont {Minev}}, \bibinfo
  {author} {\bibfnamefont {S.~F.}\ \bibnamefont {Yelin}},\ and\ \bibinfo
  {author} {\bibfnamefont {A.}~\bibnamefont {Seif}},\ }\href@noop {} {\bibinfo
  {title} {\emph {Demonstration of robust and efficient quantum property
  learning with shallow shadows}}},\ \Eprint{https://arxiv.org/abs/2402.17911}
  {arXiv:2402.17911 [quant-ph]} (\bibinfo {year} {2024})\BibitemShut {NoStop}%
\bibitem [{\citenamefont {Farias}\ \emph {et~al.}(2024)\citenamefont {Farias},
  \citenamefont {Peddinti}, \citenamefont {Roth},\ and\ \citenamefont
  {Aolita}}]{farias2024robust}%
  \BibitemOpen
  \bibfield  {author} {\bibinfo {author} {\bibfnamefont {R.}~\bibnamefont
  {Farias}}, \bibinfo {author} {\bibfnamefont {R.~D.}\ \bibnamefont
  {Peddinti}}, \bibinfo {author} {\bibfnamefont {I.}~\bibnamefont {Roth}},\
  and\ \bibinfo {author} {\bibfnamefont {L.}~\bibnamefont {Aolita}},\
  }\href@noop {} {\bibinfo {title} {\emph {Robust shallow shadows}}},\
  \Eprint{https://arxiv.org/abs/2405.06022} {arXiv:2405.06022 [quant-ph]}
  (\bibinfo {year} {2024})\BibitemShut {NoStop}%
\bibitem [{\citenamefont {Watrous}(2018)}]{Wat18}%
  \BibitemOpen
  \bibfield  {author} {\bibinfo {author} {\bibfnamefont {J.}~\bibnamefont
  {Watrous}},\ }\href {https://doi.org/10.1017/9781316848142} {\emph {\bibinfo
  {title} {The Theory of Quantum Information}}}\ (\bibinfo  {publisher}
  {Cambridge University Press},\ \bibinfo {year} {2018})\BibitemShut {NoStop}%
\bibitem [{\citenamefont {Magesan}\ \emph {et~al.}(2012)\citenamefont
  {Magesan}, \citenamefont {Gambetta},\ and\ \citenamefont
  {Emerson}}]{Magesan2012}%
  \BibitemOpen
  \bibfield  {author} {\bibinfo {author} {\bibfnamefont {E.}~\bibnamefont
  {Magesan}}, \bibinfo {author} {\bibfnamefont {J.~M.}\ \bibnamefont
  {Gambetta}},\ and\ \bibinfo {author} {\bibfnamefont {J.}~\bibnamefont
  {Emerson}},\ }\bibinfo {title} {\emph {Characterizing quantum gates via
  randomized benchmarking}},\ \href
  {https://doi.org/10.1103/PhysRevA.85.042311} {\bibfield  {journal} {\bibinfo
  {journal} {Phys. Rev. A}\ }\textbf {\bibinfo {volume} {85}},\ \bibinfo
  {pages} {042311} (\bibinfo {year} {2012})},\
  \Eprint{https://arxiv.org/abs/1109.6887} {arXiv:1109.6887}\BibitemShut
  {NoStop}%
\bibitem [{\citenamefont {{Heinrich}}\ \emph {et~al.}(2022)\citenamefont
  {{Heinrich}}, \citenamefont {{Kliesch}},\ and\ \citenamefont
  {{Roth}}}]{Heinrich22GeneralGuarantees}%
  \BibitemOpen
  \bibfield  {author} {\bibinfo {author} {\bibfnamefont {M.}~\bibnamefont
  {{Heinrich}}}, \bibinfo {author} {\bibfnamefont {M.}~\bibnamefont
  {{Kliesch}}},\ and\ \bibinfo {author} {\bibfnamefont {I.}~\bibnamefont
  {{Roth}}},\ }\href@noop {} {\bibinfo {title} {\emph {General guarantees for
  randomized benchmarking with random quantum circuits}}},\
  \Eprint{https://arxiv.org/abs/2212.06181} {arXiv:2212.06181 [quant-ph]}
  (\bibinfo {year} {2022})\BibitemShut {NoStop}%
\bibitem [{\citenamefont {Flammia}\ and\ \citenamefont
  {Wallman}(2020)}]{Flammia2019EfficientEstimation}%
  \BibitemOpen
  \bibfield  {author} {\bibinfo {author} {\bibfnamefont {S.~T.}\ \bibnamefont
  {Flammia}}\ and\ \bibinfo {author} {\bibfnamefont {J.~J.}\ \bibnamefont
  {Wallman}},\ }\bibinfo {title} {\emph {Efficient estimation of {Pauli}
  channels}},\ \href {https://doi.org/10.1145/3408039} {\bibfield  {journal}
  {\bibinfo  {journal} {ACM Transactions on Quantum Computing}\ }\textbf
  {\bibinfo {volume} {1}},\ \bibinfo {pages} {1} (\bibinfo {year} {2020})},\
  \Eprint{https://arxiv.org/abs/1907.12976} {arXiv:1907.12976
  [quant-ph]}\BibitemShut {NoStop}%
\bibitem [{\citenamefont {Vazirani}\ and\ \citenamefont
  {Rao}(2011)}]{Vazirani11LectureNotes}%
  \BibitemOpen
  \bibfield  {author} {\bibinfo {author} {\bibfnamefont {U.}~\bibnamefont
  {Vazirani}}\ and\ \bibinfo {author} {\bibfnamefont {S.}~\bibnamefont {Rao}},\
  }\href {https://people.eecs.berkeley.edu/~satishr/cs270/sp11/} {\bibinfo
  {title} {\emph {Combinatorial algorithms and data structures}}},\ \bibinfo
  {howpublished} {lecture notes} (\bibinfo {year} {2011})\BibitemShut {NoStop}%
\bibitem [{\citenamefont {Dasgupta}\ and\ \citenamefont
  {Gupta}(2003)}]{Dasgupta2003}%
  \BibitemOpen
  \bibfield  {author} {\bibinfo {author} {\bibfnamefont {S.}~\bibnamefont
  {Dasgupta}}\ and\ \bibinfo {author} {\bibfnamefont {A.}~\bibnamefont
  {Gupta}},\ }\bibinfo {title} {\emph {{An elementary proof of a theorem of
  Johnson and Lindenstrauss}}},\ \href {https://doi.org/10.1002/rsa.10073}
  {\bibfield  {journal} {\bibinfo  {journal} {Random Structures and
  Algorithms}\ }\textbf {\bibinfo {volume} {22}},\ \bibinfo {pages} {60}
  (\bibinfo {year} {2003})}\BibitemShut {NoStop}%
\bibitem [{\citenamefont {{Nielsen}}(2002)}]{Niel02}%
  \BibitemOpen
  \bibfield  {author} {\bibinfo {author} {\bibfnamefont {M.~A.}\ \bibnamefont
  {{Nielsen}}},\ }\bibinfo {title} {\emph {{A simple formula for the average
  gate fidelity of a quantum dynamical operation}}},\ \href
  {https://doi.org/10.1016/S0375-9601(02)01272-0} {\bibfield  {journal}
  {\bibinfo  {journal} {Phys. Lett. A}\ }\textbf {\bibinfo {volume} {303}},\
  \bibinfo {pages} {249} (\bibinfo {year} {2002})},\
  \Eprint{https://arxiv.org/abs/quant-ph/0205035}
  {quant-ph/0205035}\BibitemShut {NoStop}%
\bibitem [{\citenamefont {Helsen}\ \emph {et~al.}(2022)\citenamefont {Helsen},
  \citenamefont {Roth}, \citenamefont {Onorati}, \citenamefont {Werner},\ and\
  \citenamefont {Eisert}}]{Helsen20RBFramework}%
  \BibitemOpen
  \bibfield  {author} {\bibinfo {author} {\bibfnamefont {J.}~\bibnamefont
  {Helsen}}, \bibinfo {author} {\bibfnamefont {I.}~\bibnamefont {Roth}},
  \bibinfo {author} {\bibfnamefont {E.}~\bibnamefont {Onorati}}, \bibinfo
  {author} {\bibfnamefont {A.}~\bibnamefont {Werner}},\ and\ \bibinfo {author}
  {\bibfnamefont {J.}~\bibnamefont {Eisert}},\ }\bibinfo {title} {\emph
  {General framework for randomized benchmarking}},\ \href
  {https://doi.org/10.1103/PRXQuantum.3.020357} {\bibfield  {journal} {\bibinfo
   {journal} {PRX Quantum}\ }\textbf {\bibinfo {volume} {3}},\ \bibinfo {pages}
  {020357} (\bibinfo {year} {2022})},\
  \Eprint{https://arxiv.org/abs/2010.07974} {arXiv:2010.07974
  [quant-ph]}\BibitemShut {NoStop}%
\end{thebibliography}
%

\end{document}